\newcommand{\algofont}{\mathtt}
\newcommand{\setup}{\algofont{setup}}
\newcommand{\keygen}{\algofont{keygen}}
\newcommand{\encrypt}{\algofont{encrypt}}
\newcommand{\decrypt}{\algofont{decrypt}}
\newcommand{\encap}{\algofont{encap}}
\newcommand{\decap}{\algofont{decap}}
\newcommand{\sign}{\algofont{sign}}
\newcommand{\verify}{\algofont{verify}}
\newcommand{\convert}{\algofont{convert}}
\newcommand{\compute}{\algofont{compute}}
\newcommand{\retrieve}{\algofont{retrieve}}
\newcommand{\proveValidity}{\algofont{proveValidity}}
\newcommand{\verifyconverted}{\algofont{verifyconverted}}
\newcommand{\sigExtract}{\algofont{sigExtract}}
\newcommand{\sigVerify}{\algofont{sigVerify}}
\newcommand{\commit}{\algofont{commit}}
\newcommand{\open}{\algofont{open}}
\newcommand{\confirm}{\algofont{confirm}}
\newcommand{\sconfirm}{\algofont{sconfirm}}
\newcommand{\deny}{\algofont{deny}}
\newcommand{\signcrypt}{\algofont{signcrypt}}
\newcommand{\unsigncrypt}{\algofont{unsigncrypt}}
\newcommand{\entfont}{\mathsf}
\newcommand{\signer}{\entfont{S}}
\newcommand{\Sim}{\entfont{Sim}}
\newcommand{\V}{\entfont{V}}
\newcommand{\C}{\entfont{C}}
\newcommand{\prover}{\entfont{P}}
\newcommand{\A}{{\cal A}}
\newcommand{\R}{{\cal R}}
\newcommand{\varfont}{\mathit}
\newcommand{\param}{\varfont{param}}
\newcommand{\sk}{\varfont{sk}}
\newcommand{\pk}{\varfont{pk}}
\newcommand{\sks}{\sk_\entfont{S}}
\newcommand{\skp}{\sk_\entfont{P}}
\newcommand{\pks}{\pk_\entfont{S}}
\newcommand{\skc}{\sk_\C}
\newcommand{\pkc}{\pk_\C}
\newcommand{\skr}{\sk_{\entfont{R}}}
\newcommand{\pkr}{\pk_{\entfont{R}}}
\newcommand{\coins}{\varfont{coins}}
\newcommand{\SC}{\varfont{SC}}
\newcommand{\cs}{\varfont{CS}}
\newcommand{\out}{\varfont{out}}
\newcommand{\done}{\varfont{done}}
\newcommand{\pok}{\entfont{PoK}}
\newcommand{\kem}{{\mathcal{K}}}
\newcommand{\dem}{{\mathcal{D}}}
\newcommand{\crs}{\mathsf{crs}}
\newcommand{\ta}{\mathsf{TA}}
\newcommand{\nizk}{\mathsf{NIZK}}
\newcommand{\zkp}{\mathsf{ZKP}}
\newcommand{\zk}{\mathsf{ZK}}
\newcommand{\secfont}{\mathsf}
\newcommand{\OW}{\secfont{OW}}
\newcommand{\IND}{\secfont{IND}}
\newcommand{\NM}{\secfont{NM}}
\newcommand{\INV}{\secfont{INV}}
\newcommand{\CMA}{\secfont{CMA}}
\newcommand{\CPA}{\secfont{CPA}}
\newcommand{\PCA}{\secfont{PCA}}
\newcommand{\CCA}{\secfont{CCA}}
\newcommand{\OT}{\secfont{OT}}
\newcommand{\GOAL}{\secfont{goal}}
\newcommand{\ATK}{\secfont{atk}}
\newcommand{\classfont}{\ensuremath{\mathbb}}
\newcommand{\bbbe}{\classfont{E}}
\newcommand{\bbbh}{\classfont{H}}
\newcommand{\bbbs}{\classfont{S}}
\newcommand{\bbbz}{\classfont{Z}}
\newcommand{\bbbg}{\classfont{G}}
\newcommand{\bbbc}{\classfont{C}}
\newenvironment{proof}[1][Proof]{\begin{trivlist}
\item[\hskip \labelsep {\bfseries #1}]}{\end{trivlist}}
\newcommand{\qed}{\nobreak \ifvmode \relax \else
      \ifdim\lastskip<1.5em \hskip-\lastskip
      \hskip1.5em plus0em minus0.5em \fi \nobreak
      \vrule height0.75em width0.5em depth0.25em\fi}
\newtheorem{corollary}{Corollary}
\numberwithin{corollary}{section}
\newtheorem{definition}{Definition}
\numberwithin{definition}{section}
\newtheorem{thm}{Theorem}
\numberwithin{thm}{section}
\newtheorem{lemma}[thm]{Lemma}
\newtheorem{fact}[thm]{Fact}
\newtheorem{remark}{Remark}
\numberwithin{remark}{section}
\numberwithin{equation}{section}
\newenvironment{experiment}[1][]%
  {\begin{quote}\ifx#1\empty\else\fbox{#1}\vskip\medskipamount\nopagebreak\fi%
      \begin{enumerate}}%
  {\end{enumerate}\end{quote}}
\newcommand{\adv}{\mathsf{adv}}
\newcommand{\Adv}{\mathsf{adv}}
\newcommand{\success}{\mathsf{succ}}
\newcommand{\hasard}{\ensuremath{\xleftarrow{R}}}
\newcommand\zgame{\mathsf{Game0}}
\newcommand\ogame{\mathsf{Game1}}
\newcommand{\dash}{\mbox{-}}
\newcommand{\comment}[1]{\texttt{\footnotesize$\triangleright\kern3pt$#1}}
\newcommand{\DL}{\textsf{DL}}
\def\return{\mathsf{return}}
\begin{document}

\pagestyle{plain}

\title{The Joint Signature and Encryption Revisited\thanks{This paper builds upon the conference papers \cite{ElAimani2008,ElAimani2009b,ElAimani2009a,ElAimani2010,ElAimani2011,ElAimaniSanders2012}.}}

  \author{Laila El Aimani \thanks{LAPSSII - Cadi Ayyad University Morocco}}

  \date{}

  \maketitle

\begin{abstract}
We study the Sign\_then\_Encrypt, Commit\_then\-\_Encrypt\_and\_Sign, and Encrypt\_then\_Sign paradigms in the context of two cryptographic primitives, namely designated confirmer signatures and signcryption. Our study identifies weaknesses in those paradigms which impose the use of expensive encryption (as a building block) in order to meet a reasonable security level. Next, we propose some optimizations which annihilate the found weaknesses and allow consequently cheap encryption without compromising the overall security. Our optimizations further enjoy verifiability, a property profoundly needed in many real-life applications of the studied primitives.

\textbf{Keywords: }Sign\_then\_Encrypt, Commit\_then\_Encrypt\_and\_Sign,  Encrypt\_then\_Sign, (Public) Verifiability, Designated confirmer signatures, Signcryption, Privacy-preserving mechanisms, Zero knowledge proofs

\end{abstract}

\newpage
\tableofcontents
\newpage

\section{Introduction}
\label{sec:intro}

Cryptographic mechanisms that require both the functionalities of signature and of encryption are becoming nowadays increasingly important. In fact, signatures guarantee the integrity/authenticity of the transmitted data, whereas encryption is needed to ensure either the confidentiality of the signed data or the opacity of the signature. In this document, we study two of these primitives, namely designated confirmer signatures and signcryption.
\paragraph{Designated confirmer signatures} An important feature in
digital signatures is the universal verification, i.e. anyone can verify signatures
issued by a signer given his public key. However, such a property can be
undesirable in some applications and needs to be controlled or limited. A typical example is a software vendor willing to embed
signatures in his products such that only paying customers are entitled to
check the authenticity of these signatures. Undeniable signatures, introduced in
\cite{ChaumvanAntwerpen1989}, provide a good solution to this problem as they
are: (1) only verified with the help of the signer, (2) non transferable,
(3) binding in the sense that a signer cannot deny a signature he has actually
issued. The only drawback of these
signatures is that unavailability of the signer obstructs the entire verification process. To overcome this problem, designated confirmer signatures were
introduced in \cite{Chaum1995}, where the confirmation/denial of a signature is
delegated to a \emph{designated confirmer}. With this solution, the signer can
confirm only signatures he has just generated, whilst the confirmer can
confirm/deny any signature. Finally, a desirable property in designated confirmer signatures is the convertibility of the signatures to ordinary ones. Indeed, such a property turned out to play a central role in fair
payment protocols \cite{BoydFoo1998}.

\paragraph{Signcryption} This primitive was introduced by Zheng \cite{Zheng1997} to simultaneously perform the functions of both signature and encryption in a way that is more efficient than signing and encrypting separately. A typical use-case of this mechanism  is secure email where the sender wants to encrypt his email to guarantee privacy, and at the same time, the receiver needs to ensure that the encrypted email comes from the entity that claims to be its provenance. A further requirement on signcryption is verifiability which consists in the possibility to prove efficiently the validity of a given signcryption, or to prove that a signcryption has indeed been produced on a given message. In fact, verifiability is applicable in filtering out spams in a secure email system; the spam filter should be able to verify the authenticity of the ciphertext without knowing the message. Also, the receiver that decrypts the email might be compelled, for instance to resolve some later disputes, to prove that some sender has (not) produced the email; therefore, it would be desirable to support the prover with efficient means to provide such proofs without having to disclose his private input. Although a number of constructions \cite{BaoDeng1998,ShinLeeShim2002,ChowYiuHuiChow2003,Mao2006,SelviVivekRangan2010} have tackled the notion of verifiability (this notion is often referred to in the literature as public verifiability, and it denotes the possibility to release (by the receiver) some information which allows to publicly verify a signcryption with/out revealing the message in question), most of these schemes do not allow the sender to prove the validity of the created signcryption, nor allow the receiver to prove \emph{without revealing any information, ensuring consequently non-transferability}, to a third party, the (in)validity of a signcryption w.r.t. a given message. It is worth noting that the former need, i.e. allowing the sender to prove the validity of a signcryption without revealing the message, solves completely the spam filtering problem without having the receiver disclose anything; a sender needs only to provide a proof of validity of his signcrypted email to ensure that the latter will be marked as a legitimate email. The proof should be ideally ``non transferable'' so that the spam filter cannot replicate it to a third party, ensuring therefore the privacy of the sender.

\subsection{Related work}
Since the introduction of the aforementioned primitives, many realizations (of these primitives) which achieve different levels of security have been proposed. On a high level, security in these primitives involves basically two properties; privacy and unforgeability. The last property is analogous to unforgeability in digital signatures and it denotes the difficulty to impersonate the signer. Privacy in confirmer signatures (signcryptions) is similar to indistinguishability in public key encryption, and it refers to the difficulty to distinguish confirmer signatures (signcryptions) based on the underlying messages. 
Defining formally those two properties is a fundamental divergence in constructions realizing these primitives as there are many issues which come into play. One consequential difference between security models is whether the adversary is external or internal to the system. The former case corresponds to \emph{outsider security}, e.g. \cite{Dent2005}, whereas the latter denotes \emph{insider security} which protects the system protagonists even when some of their fellows are malicious or have compromised/lost their private keys \cite{CamenischMichels2000,GentryMolnarRamzan2005,WangBaekWongBao2007,AnDodisRabin2002,MatsudaMatsuuraSchuldt2009}. It is naturally possible to mix these notions into one single scheme, i.e. insider privacy and outsider unforgeability \cite{AnDodisRabin2002,ChibaMatsudaSchuldtMatsuura2011}, or outsider privacy and insider unforgeability \cite{BaekSteinfeldZheng2007}. However, the most frequent mix is the latter  as illustrated by the number of works in the literature, e.g. \cite{AnDodisRabin2002,JeongJeongRheeLeeLim2002,BaekSteinfeldZheng2007}; it is also justified by the necessity to protect the signer from anyone trying to impersonate him  including entities in the system. Insider privacy is by contrast needed in very limited applications; the typical example, given in \cite{AnDodisRabin2002}, is when the adversary happens to steal the private key of the signer, but we still wish to protect the  privacy of the recorded signcryptions/confirmer signatures sent by the genuine signer.

Building complicated systems upon simple and basic primitives is customary in
cryptography as it allows to re-use existing work about the primitives, and
it achieves easy-to-understand and easy-to-prove systems. The classical constructions used to build the above mentioned primitives are:

\paragraph{Sign\_then\_Encrypt (StE)} For confirmer signatures, this technique consists in first signing the message, then encrypting the produced signature. The construction was first formally \footnote{The idea without proof was already known, for instance, it was mentioned in \cite{DamgardPedersen1996}.} described in \cite{CamenischMichels2000}, and it suffered the resort to concurrent zero knowledge (ZK) protocols of general NP statements in the confirmation/denial protocol (i.e. proving knowledge of the decryption of a ciphertext, and that this decryption forms a valid signature on the given message). Later, the proposal in \cite{GoldwasserWaisbard2004} circumvented this problem by encrypting the digital signature during the confirmation protocol. With this trick, the authors managed to get rid of concurrent ZK proofs of general NP statements in the confirmation protocol (the denial protocol still suffers the recourse to such
proofs), but at the expense of the security and the length of the resulting signatures. Another construction implementing this principle is given in \cite{Wikstroem2007}; it uses cryptosystems with labels and is analyzed in a more elaborate security model. However, it is supplied with only one efficient instantiation as the confirmation/denial protocols still resort to  concurrent ZK protocols of general NP statements.

For signcryption, this technique consists in similarly signing the message to be signcrypted, however the signcryption corresponds to the encryption of the produced digital signature \emph{in addition to the message}. The construction was first described and analyzed in \cite{AnDodisRabin2002}. It was further extended in \cite{MatsudaMatsuuraSchuldt2009} to support the multi-user setting, i.e. a setting where many senders interact with many receivers, using tag-based encryption. Finally,  there are the recent constructions \cite{ChibaMatsudaSchuldtMatsuura2011} which achieve multi-user insider security using (tag-based) encryption schemes from the hybrid encryption paradigm. It is worth noting that none of these constructions treat verifiability.


\paragraph{Commit\_then\_Encrypt\_and\_Sign (CtEaS)} This technique was first described in the context of signcryption in \cite{AnDodisRabin2002}. It has been essentially introduced to parallel encryption and signature. In fact, signcryption of a message using this technique is obtained by committing to the message, then encrypting the message and the randomness used to form the commitment, \emph{and} signing the commitment. Later, (a variant of) this technique was adopted in  \cite{GentryMolnarRamzan2005} for confirmer signatures, i.e. encryption is performed only on the randomness used to form the commitment, and signature is obtained on this encryption \emph{in addition} to the commitment. This construction was identified to be flawed in \cite{WangBaekWongBao2007}, where the authors propose to use encryption with labels as building blocks in order to repair the flaw. More precisely, a confirmer signature on a message $m$ is obtained by first committing to $m$, then encrypting the randomness used in the commitment w.r.t. the label $m\|\pk$, $\pk$ being the public key of the used signature scheme, \emph{and} finally signing the commitment. Although CtEaS can be used with  \emph{any} signature scheme (StE needs to be used with special signature schemes in order to allow an efficient verifiability), it is still afflicted with the recourse to general ZK proofs, e.g. proving in concurrent ZK the knowledge of the decryption of an IND-CCA encryption that equals a string used for commitment. An efficient instantiation is however achieved for confirmer signatures in \cite{WangBaekWongBao2007} using Camenisch-Shoup's verifiable encryption scheme \cite{CamenischShoup2003} and Pedersen's commitment scheme.

\paragraph{Encrypt\_then\_Sign (EtS)} This technique consists in first encrypting the message, then producing a signature on this encryption. EtS has been introduced for two-user setting signcryption in \cite{AnDodisRabin2002}. It has been later extended in \cite{MatsudaMatsuuraSchuldt2009} to support the multi-user setting  using tag-based encryption. Besides, in  the same work \cite{MatsudaMatsuuraSchuldt2009}, the authors present variations of the paradigm using symmetric primitives and achieve efficient signcryption schemes but the expense of security (outsider unforgeability/privacy) and verifiability.

To summarize the state of the art, StE, CtEaS, and EtS have been studied for the aforementioned primitives in different security models. These studies conclude the need for CCA secure encryption in order to ensure insider privacy. Since \emph{outsider security might be all one needs} for privacy as quoted by the authors in \cite{AnDodisRabin2002}, we propose to  relax the requirement on insider privacy with the hope of weakening the strong assumption (CCA security) on the encryption. The work \cite{MatsudaMatsuuraSchuldt2009} achieves some results in this direction as it rests on CPA secure \emph{symmetric} encryption, but at the expense of verifiability.

It would be nice to study these paradigms in the outsider privacy model, and provide efficient variants which rest on cheap encryption while providing good verifiability properties. This is the main contribution of this paper.
 
\subsection{Contributions and overview of our techniques} 
As stated earlier, the main contribution of the present paper is a thorough study of StE, CtEaS, and EtS in the outsider privacy model. Our study concludes that both StE and CtEaS  require expensive assumptions on the underlying encryption (PCA security) in order to derive signcryption or confirmer signatures with outsider privacy. We do this by first proving the insufficiency of OW-CCA and NM-CPA secure encryption using the celebrated \emph{meta-reduction} tool, then by exhibiting a simple attack if the system is instantiated from certain encryption schemes. These negative results can be explained by an inherent weakness in these constructions that consists in the possibility of creating confirmer signatures or signcryption without the help of the signer. 

Next, we propose ameliorations of the paradigms that annihilate this weakness without compromising the security. We achieve this by binding the digital signature to the resulting signcryption/confirmer signature. Consequently, our optimizations of StE and CtEaS (for both signcryption and confirmer signatures) rest on cheap encryption (CPA secure asymmetric encryption) and support efficiently the verifiability property required for such mechanisms. We actually describe explicitly, and for the first time, the verifiability proofs in case the constructions are instantiated from large classes of encryption, commitment, and signature schemes. 

We have further the following side-results:

\begin{enumerate}
\item Our negative results for StE serve also for providing evidence that a well known undeniable signature \cite{DamgardPedersen1996} is unlikely to provide its conjectured privacy. Moreover, the adjustment we propose to the basic StE paradigm fixes also this scheme  (\cite{DamgardPedersen1996}), and captures further undeniable signatures that were proposed later \cite{LeTrieuKurosawaOgata2009b,SchuldtMatsuura2010}.
\item  We provide practical instantiations of EtS, in the context of both signcryption and confirmer signatures, which efficiently support verifiability. In fact, some of the required verifiability proofs involve non-interactive proofs of correctness of a decryption. We identify several encryption schemes that efficiently implement this feature.

\item We propose a new paradigm for signcryption, Encrypt\_then\_Sign\_then\_Encrypt, which allows efficient verifiability while proffering full outsider privacy (i.e. anonymity of the sender and indistinguishability of the signcryption).

\item Finally, our constructions (of both confirmer signatures and signcryption) can achieve insider privacy while conserving their good verifiability properties if we substitute the required ``normal'' encryption by tag-based encryption combined with secure one-time signatures.

\end{enumerate}

\section{Preliminaries}
\subsection{Cryptographic primitives}
\emph{Notation:} Throughout the text, we will use a dot notation to refer the different components; for instance, $\Gamma.\encrypt()$ refers to the encryption algorithm of public key encryption scheme $\Gamma$, $\Sigma.\pk$ to the public key of signature scheme $\Sigma$, etc.
\subsubsection{Digital signatures}A signature scheme comprises three algorithms, namely the key
generation algorithm $\keygen$, the signing algorithm $\sign$, and the
verification algorithm $\verify$. The standard security notion for a signature scheme is existential
  unforgeability under chosen message attacks (EUF-CMA), which was introduced in
  \cite{GoldwasserMicaliRivest1988}. Informally, this notion refers to the
  hardness of, given a signing oracle, producing a valid pair of message and
  corresponding signature such that the message has not been queried to the signing
  oracle. There exists also  the stronger notion, SEUF-CMA (strong existential unforgeability under chosen message attack),
  which allows the adversary to produce a forgery on a previously queried message,
  however the corresponding signature must not be obtained from the signing oracle.

  \noindent A signature is $(t,\epsilon,q_s)$-(S)EUF-CMA secure, if no adversary, operating in time $t$ and issuing $q_s$ queries to the signing oracle, produces a pair of a (new) message and a valid corresponding signature that was not obtained from the signing oracle, with probability greater than $\epsilon$; the probability is taken over all the random coins.

  \subsubsection{Public key encryption}A public key encryption (PKE) scheme consists of the key generation algorithm
$\keygen$, the encryption algorithm $\encrypt$ and the decryption algorithm
$\decrypt$. The typical \emph{security goals} a PKE scheme should attain
are: one-wayness (OW) which corresponds to the difficulty of inverting a ciphertext, indistinguishability (IND) which refers to the hardness of distinguishing ciphertexts based on the messages they encrypt, and
finally non-malleability (NM) which corresponds to the hardness of deriving from
  a given ciphertext another ciphertext such that the underlying plaintexts
  are meaningfully related. Conversely, the typical \emph{attack models} an adversary against an
encryption scheme is allowed to are: Chosen Plaintext Attack (CPA) where the adversary can encrypt any
  message of his choice, Plaintext
  Checking Attack (PCA) in which the adversary is allowed to
  query an oracle on pairs ($m,c$) and gets answers whether $c$ encrypts $m$ or not, and finally Chosen Ciphertext Attack (CCA) where the adversary is allowed to query a decryption oracle. Pairing the mentioned goals with these attack models yields nine
\emph{security notions}: goal-atk for $\GOAL \in \{ \OW,\IND,\NM \}$ and $ \ATK
\in \{\CPA,\PCA,\CCA\}$. We refer to \cite{BellareDesaiPointchevalRogaway1998}
for the formal definitions of these notions as well as for the relations they satisfy.

\noindent Later in the text, we will need further the INV-CPA notion, i.e. invisibility under a chosen plaintext attack, which denotes the difficulty to distinguish ciphertexts on an adversarially chosen message from random elements in the ciphertext space (public-key variant of the INV-OT notion defined later for Data Encapsulation Mechanisms, i.e. DEMs).

\noindent Similarly, an encryption scheme is $(t,\epsilon,q)$-goal-atk secure, if no adversary operating in time $t$ and issuing $q$ queries to the allowed oracles, succeeds in the game defined by the security notion goal-atk with probability greater than $\epsilon$; the probability is again over all the random coins.


\subsubsection{Key/Data encapsulation mechanisms (KEM/DEMs)}
A KEM comprises three algorithms: (1) the key generation algorithm $\keygen$ which probabilistically  generates a key pair $(\sk,\pk)$, (2) the
encapsulation algorithm $\encap$ which inputs the public key $\pk$ and probabilistically generates a \emph{session key} denoted
  $k$ and its \emph{encapsulation} $c$, (3) and finally the decapsulation algorithm $\decap$ which inputs the private key
  $\sk$ and the element $c$ and computes the decapsulation $k$ of
  $c$, or returns $\perp$ if $c$ is invalid.

\noindent  The typical security goals that a KEM should satisfy are similar to those
defined for encryption schemes. Similarly, when conjoined with the three attack models
CPA, PCA and CCA, they yield nine security notions whose definitions follow
word-for-word from the definitions of the encryption schemes
notions. 

\noindent  Let $\kappa$ be a security parameter. We recall below the formal definition of an $\IND\mbox{-}\CPA$ experiment, conducted by an adversary $\cal A$ against a KEM $\kem$ ($\sf{K}$ denotes the keys space in the experiment below).

{
\begin{center}
\begin{tabular}{l}
\fbox{Experiment $\mathbf{Exp}_{\ensuremath{\cal K},{\mathcal
A}}^{\textsf{ind-cpa-b}}(1^\kappa)$}\\
$(\pk,\sk) \leftarrow \ensuremath{{\cal K}}.\keygen(1^\kappa)$, \\

$\mathcal{I} \leftarrow {\mathcal A}(\pk)$ \\
$(c^\star,k^\star) \leftarrow \kem.\encap_{\pk}()$\\
$\mathsf{~if~} b=0~ \mathsf{then~~} \{\sf{k \hasard K , k^\star \leftarrow k}\}$\\

$d\leftarrow {\mathcal A}(\mathcal{I},c^\star,k^\star)$~\\
Return $d$ \\
\end{tabular}
\end{center}}

\noindent A KEM is $(t,\epsilon)$-$\IND\mbox{-}\CPA$-secure if the advantage defined by 
$$
\mathbf{Adv}_{\ensuremath{{\cal K}},{\mathcal A}}^{\mathsf{ind-cpa}}(\kappa) = \left\vert \Pr\left[\mathbf{Exp}_{\ensuremath{\kem},{\mathcal
A}}^{\mathsf{ind-cpa-b}}(\kappa)=b\right] - \frac{1}{2}\right\vert,
$$

of any adversary $\cal A$, operating in time $t$, in the above game,  is no greater than $\epsilon$. The probability is taken over all the random coins.

A DEM is a secret key encryption scheme given by the same algorithms
forming a public key encryption scheme that are: (1) the key generation algorithm $\keygen$ which produces uniformly distributed keys $k$ on input a given security parameter, (2) the encryption algorithm $\encrypt$ which inputs a key $k$ and a message $m$ and produces a ciphertext $c$, and (3) the decryption algorithm which decrypts ciphertext $c$ using the same key $k$ (used for encryption) to get back the message $m$ or the special rejection symbol $\perp$. 

\noindent We define in the following a security notion for DEMs that we will need later in the text; it is called invisibility under a one-time attack $\INV\mbox{-}\OT$, and it denotes the difficulty to distinguish the encryption of an adversarially chosen message from a random ciphertext ($\sf{C}$ denotes in the experiment below the ciphertext space).

{
\begin{center}
\begin{tabular}{l}
\fbox{Experiment $\mathbf{Exp}_{\ensuremath{{\cal D}},{\mathcal
A}}^{\textsf{inv-ot}-b}(1^\kappa)$}\\
$ k \leftarrow \ensuremath{{\cal D}}.\keygen(1^\kappa)$, \\

$(m^\star,\mathcal{I}) \leftarrow {\mathcal A}(1^\kappa)$ \\

$e^\star \leftarrow \ensuremath{{\cal D}}.\encrypt_{k}(m^\star)$\\
$\mathsf{~if~} b=0~ \mathsf{then~~} \{e \hasard \C, e^\star \leftarrow e\}$\\

$d\leftarrow {\mathcal A}(\mathcal{I},e^\star)$~\\

Return $d$ 
\end{tabular}

\end{center}}
\noindent We define:

$$
\mathbf{Adv}_{\ensuremath{{\cal D}},{\mathcal A}}^{\mathsf{inv-ot}}(1^\kappa) =
\left\vert \Pr\left[\mathbf{Exp}_{\ensuremath{{\cal D}},{\mathcal
A}}^{\mathsf{inv-ot-b}}(1^\kappa)=b\right] - \frac{1}{2}\right\vert,
$$

\noindent A DEM is said to be  $(t,\epsilon)$-$\INV\mbox{-}\OT$ secure if the advantage $\mathbf{Adv}_{\ensuremath{{\cal D}},{\mathcal A}}^{\mathsf{inv-ot}}(\kappa)$ of any adversary $\cal A$, operating in time $t$ and running the experiment above, is no greater than $\epsilon$. The probability is taken over all the random coins.

\noindent \emph{A DEM with injective encryption} is a DEM  where, for a every fixed key, the encryption algorithm $\encrypt$, seen a function of the message, is injective. I.e. for a fixed key, for every message $m$, there exists only one valid ciphertext that decrypts to $m$. Note that such DEMs exist, e.g. the one-time pad, and proffer interesting security properties, e.g. $\INV\mbox{-}\OT$.

Finally, KEMs can be efficiently  combined with DEMs to build secure public key encryption schemes. This technique is called the hybrid encryption paradigm and we refer to \cite{HerranzHofheinzKiltz2006}  for the necessary and sufficient conditions
on the KEMs and the DEMs to obtain a certain security level for the
resulting hybrid encryption scheme.

\subsubsection{Commitment schemes}
A commitment scheme \cite{BrassardChaumCrepeau1988} consists of (1) a key generation algorithm $\keygen$, (2) a commitment algorithm $\commit$, (3) and an opening algorithm $\open$. We require in a commitment scheme the \emph{hiding} and \emph{binding} properties. The former informally denotes the difficulty to infer information about the message from the corresponding commitment, whereas the latter denotes the difficulty to come up with collisions, i.e. find two different messages that map to the same value by the $\commit$ algorithm. A further property might be required, in some applications, for commitments, namely \emph{injectivity}; It denotes that $\commit$ for a fixed message (viewed as a function of the opening value) is injective: two different opening values lead two different commitments.

\subsection{(Non-) Interactive Proofs}
An interactive proof, first introduced in \cite{GoldwasserMicaliRackoff1989}, informally consists of a prover $P$ trying to convince a verifier $V$ that an instance $x$ belongs to a language $L$. $x$ refers to the common input whereas $(P,V)(x)$ denotes the proof instance carried between $P$ and $V$ at the end of which $V$ is (not) convinced with the membership of the alleged instance $x$ to $L$: $(P,V)(x) \in \{\mathsf{Accept, Reject}\}$. $P$ is modeled by a
probabilistic Turing machine whereas $V$ is modeled by a \emph{polynomial}
probabilistic Turing machine. During $(P,V)(x)$, the parties
exchange a sequence of messages called the proof transcript.

\noindent An interactive proof should satisfy  \textbf{completeness} which denotes the property of successfully running the protocol if both parties are honest. A further property required in proof systems is \textbf{soundness} which captures the inability of a cheating prover
$P$ to convince the verifier $V$ with an invalid statement.

\paragraph{(Concurrent) ZK proofs}Let $(P,V)$ be an interactive proof system for some language $L$. We say that $(P,V)$ is \textbf{zero knowledge} if for every $x \in L$, the proof transcript $(P,V)(x)$ can be produced by an efficient algorithm $S$, with no access to the prover, with indistinguishable probability distributions from the real interaction with the genuine prover. A proof is said to provide concurrent ZK if it remains ZK when the prover interacts concurrently with many verifiers (that might potentially collude) on many instances of the proof. It was shown in \cite{DworkNaorSahai2004} that every NP language accepts a concurrent ZK proof system.

\paragraph{Proofs of knowledge} A proof of knowledge is an interactive proof in which the prover succeeds 'convincing' a verifier that he \emph{knows} something. In addition to the \textbf{completeness} property (i.e. property of successfully running the protocol if both parties are honest), a proof of knowledge must further satisfy the \textbf{validity} or \textbf{soundness} property. I.e. let $R$ be the NP-relation for an NP-language $L$:
$$L=\{x\colon \exists~ ~w~ \mathrm{such ~that}~ R(x,w)~ \mathrm{holds}\}$$

\noindent Validity of a  proof of knowledge for $R$ captures the intuition that from any (possibly cheating) prover $\tilde{P}$ that is able to convince the verifier with good enough probability on a statement $x \in L$, there exists an efficient \emph{knowledge extractor} capable of extracting a valid witness for $x$ from $\tilde{P}$ with non negligible probability. This guarantees that no prover that doesn't know the witness can succeed in convincing the verifier.
\noindent Finally, \textbf{zero knowledgeness} of a proof of of knowledge captures the possibility to prove knowledge of the given witness without revealing it. This property is defined, as in interactive proofs, using an efficient simulator, with no access to the prover, capable of producing a proof transcript indistinguishable from the interaction between the genuine prover and the cheating verifier.

\paragraph{$\Sigma$ protocols}A \emph{public-coin protocol} is an interactive proof in which the verifier chooses all
its messages randomly from publicly known sets. A
\emph{three-move protocol} can be written in a canonical form in
which the messages exchanged in the three moves are often called commitment, challenge, and response respectively. The protocol is said to have
the \emph{honest-verifier zero-knowledge property (HVZK)} if there exists an
algorithm that is able, provided
the verifier behaves as prescribed by the protocol, to produce, without the
knowledge of the secret, transcripts that are indistinguishable from those  of the real protocol. The protocol
is said to have the \emph{special soundness property (SpS) property} if there exists an algorithm that is able to extract the secret from two
accepting transcripts of the protocol with the same commitment and different
challenges. Finally, a three-move public-coin protocol with HVZK and
SpS properties is called a \emph{$\Sigma$ protocol}. 
 
\paragraph{Non-interactive proofs} were introduced in \cite{BlumFeldmanMicali1988}. They consist of three entities: a prover, a verifier, and a uniformly selected \emph{common reference string - crs-} (which can be thought of as being selected by a trusted third party). Both verifier and prover can read the reference string. The interaction consists of a single message sent from the prover to the verifier, who is left with the final decision. The zero-knowledge requirement refers to a simulator that outputs pairs that should be indistinguishable from the pairs (crs, prover's message).

\subsection{Cryptographic reductions}
\label{subsec:reductions}

A reduction in cryptology, often denoted $\R$, is informally an algorithm solving some problem given access to an adversary $\A$ against some cryptosystem. To be able to use $\A$, the reduction must simulate $\A$'s environment (instance generation, queries if any...) in a way that is (almost) indistinguishable from the real model. Both the reduction and the adversary are considered probabilistic Turing machines. The advantage of the reduction $\adv(\R)$ is by definition the success probability in solving the given instance of the problem, where the probability is taken over the instance generation and the random coins of both $\R$ and $\A$. Similarly, the advantage of the adversary $\A$, denoted by $\adv(\A)$ refers to the success probability (taken over all the coin tosses) in breaking the cryptosystem.



\paragraph{Key-preserving reductions} These reductions refer to a wide and popular class of reductions which supply the adversary with the same public key as its challenge key. In this text, we restrict this notion to a smaller class of reductions.
\begin{definition}[Key-preserving reductions]
\label{def:key-preserving-reductions}
 Let $\cal A$ be an adversary which solves a problem A that is perfectly reducible to OW-CPA breaking some public key encryption scheme $\Gamma$. Let further $\cal R$ be a reduction breaking some security notion of $\Gamma$ w.r.t. a public key $\pk$ given access to $\cal A$. $\cal R$ is said to be \emph{key-preserving} if it launches $\cal A$ over her own challenge key $\pk$ in addition to some other parameters (chosen freely by her) according to the specification of $\A$.
\end{definition}

\noindent Such reductions were for instance used in \cite{PaillierVillar2006} to prove a separation
between factoring and IND-CCA-breaking some factoring-based encryption schemes in the
standard model.


\section{Convertible Designated Confirmer Signatures (CDCS)}

\subsection{Syntax}
\label{subsec:syntax+model}
A convertible designated confirmer signature (CDCS) scheme consists of the following procedures:
\begin{description}

\item[$\setup(1^\kappa)$] On input a security parameter $\kappa$, probabilistically generate the public parameters $\param$ of
  the scheme. Although not always explicitly mentioned, $\param$ serves
  as an input to all the algorithms/protocols that follow.

\item[$\keygen_{\entfont{E}}(1^\kappa,\param)$] This probabilistic algorithm
  outputs the key pair $(\pk_{\entfont{E}},\sk_{\entfont{E}})$ for the
  entity $\entfont{E}$ in the system; $\entfont{E}$ can either be the
  signer $\entfont{S}$ who issues the confirmer signatures, or the confirmer $\entfont{C}$ who confirms/denies the signatures. 

\item[$\sign_{\sks}(m,\pkc)$] On input $\sks$, $\pkc$ and a message $m$, this probabilistic algorithm outputs a confirmer
  signature $\mu$ on $m$.

\item[$\verify_{\{\coins~\vee~\skc\}}(\mu,m,\pks,\pkc)$] This is an algorithm, run by the signer
  on a \emph{just generated} signature or by the confirmer on \emph{any}
  signature. The input to the algorithm is: the alleged signature $\mu$, the message $m$, $\pks$, $\pkc$, the coins $\coins$ used to produce the signature if the
  algorithm is run by the signer, and $\skc$ if it is run by the confirmer. The output is either
  $1$ if the signature if valid, or $0$ otherwise.

\item[$\sconfirm_{\langle \signer(\coins_\mu),\V \rangle}(\mu,m,\pks,\pkc)$]This is an interactive protocol where the signer $\entfont{S}$ convinces a verifier $\V$ of the validity of a signature he has just generated. The common input comprises the signature and the message in question, in addition to $\pks$ and $\pkc$. The private input of $\entfont{S}$ consists of the random coins used to produce the signature $\mu$ on $m$.

\item[$\confirm\slash\deny_{\langle \C(\skc),\V \rangle}(\mu,m,\pks,\pkc)$]These are interactive protocols
  between the confirmer $\C$ and a verifier $\V$. Their common input consists of $\pks$, $\pkc$, the alleged
  signature $\mu$, and the message $m$. The
  confirmer uses $\skc$ to convince the
  verifier of the validity/invalidity of the signature $\mu$ on $m$. At the
  end, the verifier accepts or rejects the proof.

\item[$\convert_{\skc}(\mu,m,\pks,\pkc)$]This is an algorithm run by the confirmer $\C$
  using $\skc$, in addition to $\pkc$ and $\pks$, on a potential confirmer signature $\mu$ and some message $m$. The result is either
  $\perp$ if $\mu$ is not a valid confirmer signature on $m$, or a string $\sigma$ which is a valid digital signature on $m$ w.r.t. $\pks$.

\item[$\verifyconverted(\sigma,m,\pks)$]This is an algorithm for verifying
  converted signatures. It inputs the converted signature $\sigma$, the message $m$ and $\pks$ and outputs either $0$ or $1$.

\end{description}

\begin{remark}[Notation]
For the sake of simplicity, the public/private keys as well as the private coins will be often omitted from the description of the above algorithms/protocols. Therefore, whenever the context is clear, $\sign$, $\sconfirm$, $\{\confirm,\deny\}$, $\convert$, and $\verifyconverted$ will only involve the message and the corresponding confirmer/converted signature.
\end{remark}

\begin{remark}
 In \cite{GentryMolnarRamzan2005,WangBaekWongBao2007}, the authors give the possibility of obtaining \emph{directly} digital signatures on any given message. We find this unnecessary since it is already enough that a CDCS scheme supports the convertibility feature.
\end{remark}

\subsection{Security model}

A CDCS scheme should meet the following properties:

\begin{description}
\item \emph{Completeness. }Every signature produced by $\sign$ should be validated by the algorithm $\verify$ and correctly converted. Moreover, valid signatures should be correctly confirmed by $\sconfirm$ and $\confirm$, and invalid signatures should be correctly denied by $\deny$ if the entities $\{\entfont{S},\C\}$ follow honestly the protocols.

  \noindent More formally, let $(\pks,\sks)$ and $(\pkc,\skc)$ be the signer's and confirmer's key pairs resp. of a CDCS scheme $\cs$. Let further $m$ be a message from the message space of $\cs$. We consider the following experiment
  \vspace{1 cm}
\begin{center}
\footnotesize
\begin{experiment}[Experiment $\mathbf{Exp}_{\cs}^{\mathsf{completeness}}( m, \pks,\pkc)$]
\item $\mu \gets \cs.\sign_{\sks}(m,\pkc)$;
\item  $\psi \hasard CS.\sf{space}$:
  
  $ ~~~~~~~~~~~~~~~~~~~ \cs.\verify_{\skc}(\psi,m,\pkc,\pks) = 0$;
\item $\out_0 \gets \cs.\verify_{\{\coins_\mu ~\vee~ \skc\}}(\mu,m,\pkc,\pks)$;
\item $\langle \done \mid \out_1 \rangle \gets \cs.\sconfirm_{\langle \signer(\coins_\mu),\V \rangle}(\mu,m,\pks,\pkc)$; 
\item $\langle \done \mid \out_2 \rangle \gets  \cs.\confirm_{\langle \C(\skc),\V \rangle}(\mu,m,\pks,\pkc)$; 
\item $\langle \done \mid \out_3 \rangle  \gets \cs.\deny_{\langle \C(\skc),\V \rangle }(\psi,m,\pks,\pkc)$; 
\item $\sigma \gets \cs.\convert_{\skc}(\mu,m)$;
 \item $\out_4 \gets \cs.\verifyconverted(\sigma,m)$;
\item Return $out_0 \wedge out_1 \wedge out_2 \wedge out_3 \wedge \out_4$.
\end{experiment}
\end{center}
 The scheme $\cs$ complete if, for all signer's and confirmer's key pairs $(\pks,\sks)$  and $(\pkc,\skc)$ resp. , for all messages $m$, the outcome of Experiment $\mathbf{Exp}_\cs^{\mathsf{completeness}}(m,\pks,\pkc)$ is $1$ with high probability, where the probability is taken over all the random choices.

\item \emph{Security for the verifier (soundness). }This property informally means that an
  adversary who compromises the private keys of both the signer and the
  confirmer cannot convince the verifier of the validity (invalidity) of an
  invalid (a valid) confirmer signature. 

 A CDCS scheme $\cs$ is sound if the success probability of any polynomial-time adversary $\cal A$ (returning $1$) in  Experiment $\mathbf{Exp}_{\cs,{\cal A}}^{\mathsf{soundness}}(1^\kappa)$ is negligible; the probability is taken over all the random tosses.

\begin{center}
\small
\begin{experiment}[Experiment $\mathbf{Exp}_{\cs,\A}^{\mathsf{soundness}}(1^\kappa)$]
\item $\param \gets \setup(1^\kappa)$; 
 \item $(\pks,\sks) \gets \cs.\keygen_{\entfont{S}}(1^\kappa)$; $(\pkc,\skc) \gets \cs.\keygen_\C(1^\kappa)$;
\item $(m,\psi) \gets \A(\sks,\skc,\coins_\psi)$:

  $~~~~~~~~~~~~~~~~~~~\cs.\verify_{\{\coins_\psi ~\vee~\skc\}}(\psi,m,\pkc,\pks)=0$; 
\item $(m,\mu) \gets \A(\sks,\skc,\coins_\mu)$:

  $~~~~~~~~~~~~~~~~~~~\cs.\verify_{\{\coins_\mu ~\vee~\skc\}}(\mu,m,\pks,\pkc)=1$; 
\item $\langle \done \mid \out_1 \rangle  \gets  \cs.\sconfirm_{\langle \A(\sks,\skc,\coins_\psi),\V \rangle}(\psi,m,\pks,\pkc)$; 
\item $\langle \done \mid \out_2 \rangle  \gets \cs.\confirm_{ \langle \A(\sks,\skc,\coins_\psi),\V \rangle}(\psi,m,\pks,\pkc)$; 
\item $\langle \done \mid \out_3 \rangle  \gets \deny_{\langle \A(\sks,\skc,\coins_\mu),\V \rangle}(\mu,m,\pks,\pkc)$; 
\item Return $\out_1 \vee \out_2 \vee \out_3$.

\end{experiment}
\end{center}

\item  \emph{Non-transferability. }This property  captures the simulatability of $\sconfirm$, $\confirm$, and $\deny$. It is defined through the following games which involve the adversary, the signer and the confirmer of the CDCS scheme $\cs$, and a simulator (Experiment  $\mathbf{Exp}_{\cs}^{\mathsf{non-transferability}}(1^\kappa)$ ):

\begin{description}
\item \textbf{Game 1:} the adversary $\cal A$ is given the public keys of the signer and
  of the confirmer, namely $\pks$ and $\pkc$ resp. He can then make arbitrary
  queries of type $\{\sign,\sconfirm\}$ to the signer and of type
  $\{\confirm,\deny\}$ and $\convert$ to the confirmer. Eventually, the adversary presents two
  strings $m$ and $\mu$ for which he wishes to carry out, on the common input
  $(m,\mu,\pks,\pkc)$, the protocol $\sconfirm$ with the signer (if $\mu$ has been just generated by the signer on $m$), or
  the protocols $\{\confirm,\deny\}$ with the confirmer. The
  private input of the signer is the randomness used to generate the signature
  $\mu$ (in case $\mu$ is a signature just generated by the signer), whereas the private
  input of the confirmer is his private key $\skc$. The adversary continues
  issuing queries to both the signer and the confirmer until he decides that
  this phase is over and produces an output.
\item \textbf{Game 2:} this game is similar to the previous one with the
  difference of playing a simulator instead of running the real signer or the
  real confirmer when it comes to the interaction of the adversary with the
  signer in $\sconfirm$ or with the confirmer in
  $\{\confirm,\deny\}$ on the common input $(\mu,m,\pks,\pkc)$. The simulator is not given the
  private input of neither the signer nor the confirmer. It is however allowed
  to issue a single oracle call that tells whether $\mu$ is a valid confirmer
  signature on $m$ w.r.t. $\pks$ and $\pkc$. Note that the simulator in this game refers to a probabilistic polynomial-time Turing machine with rewind.
\end{description} 

\begin{center}
\small
\begin{experiment}[Experiment $\mathbf{Exp}_\cs^{\mathsf{non-transferability}}(1^\kappa)$]
\item $\param \gets \cs.\setup(1^\kappa)$; 
 \item $(\pks,\sks) \gets \cs.\keygen_{\entfont{S}}(1^\kappa)$; $(\pkc,\skc) \gets \cs.\keygen_\C(1^\kappa)$; 

\item $(m,\mu) \gets \A^{\mathfrak{S},\mathfrak{Cv},\mathfrak{V}}(\pks,\pkc)$\\
\phantom{$(m,\mu) \leftarrow$} $\left\vert
\begin{array}{l} 
\mathfrak{S} : m_i  \longmapsto \cs.\sign_{\sks}(m_i,\pkc) \\ 
\mathfrak{Cv}: (\mu_i,m_i) \longmapsto \cs.\convert_{\skc}(\mu_i,m_i,\pks,\pkc) \\
\mathfrak{V} : (\mu_i,m_i) \longmapsto \cs.\{\sconfirm,\confirm,\deny\}(\mu_i,m_i,\pks,\pkc) \\
\end{array} \right.$
\item $b \hasard \{0,1\}$;\\
$\algofont{if}~~ b=1~~ \algofont{then}~~ \langle \done \mid \out_0 \rangle  \gets \algofont{prove}_{\langle \prover(\skp),\A \rangle}(\mu,m,\pks,\pkc)$;\\
$\algofont{if}~~ b=0~~ \algofont{then}~~\langle \done \mid \out_1 \rangle  \gets \algofont{prove}_{\langle \Sim,\A \rangle}(\mu,m,\pks,\pkc)$;\\
\phantom{$\algofont{if}~~ b=0~~  \algofont{then}~~ \langle \done \mid $} 
$\left\vert
\begin{array}{l} 
 \sf{if} ~~ \algofont{prove}=\cs.\sconfirm~~  \algofont{then}~~ \prover=\signer ~~\algofont{and}~~ \skp=\coins_\mu   \\ 
 \sf{if}~~\algofont{prove} \in \cs.\{\confirm,\deny\} ~~\algofont{then}~~ \prover=\C ~~\algofont{and}~~ \skp=\skc  \\ 
\end{array} \right.$

\item $b^\star \gets \A^{\mathfrak{S},\mathfrak{Cv},\mathfrak{V}}(\mu,m,\pks,\pkc)$
\item Return $(b=b^\star)$.
\end{experiment}
\end{center}
The confirmer signatures are said to be non-transferable if there exists an efficient simulator such that for all
$(\pks,\pkc)$, the outputs of the adversary in  \textbf{Game 1} and \textbf{Game 2} are indistinguishable. In other words, the adversary should not be able to tell whether he is playing \textbf{Game 1} or \textbf{Game 2}. Note that this definition achieves only the so-called \emph{offline non-transferability}, i.e. the adversary is not supposed to interact concurrently with the prover and an unexpected verifier. We refer to Remark \ref{rmq:securityModel} for the details.

\item \emph{Unforgeability. }It is defined through
the game depicted in Experiment $\mathbf{Exp}_{\cs,\A}^{\mathsf{EUF-CMA}}(1^\kappa)$: the adversary $\A$ gets the signer's public key $\pks$ of a CDCS scheme $\cs$ , and generates the confirmer's key pair ($\skc$,$\pkc$). $\A$ is further allowed to query the signer on
  polynomially many messages, say $q_s$. At the end, $\A$ outputs a pair consisting of a message $m^\star$, that has not been queried yet, and a string $\mu^\star$. $\cal A$ wins the game if $\mu^\star$ is a valid confirmer signature on $m^\star$. 

\begin{center}
\begin{experiment}[Experiment $\mathbf{Exp}_{\cs,\A}^{\mathsf{EUF-CMA}}(1^\kappa)$]
\item $\param \gets \setup(1^\kappa)$; 
 \item $(\pks,\sks) \gets \cs.\keygen_{\entfont{S}}(1^\kappa)$;
\item $(\pkc,\skc) \gets \A(\pks)$;
\item $(m^{\star},\mu^{\star}) \leftarrow \mathcal{A}^{\mathfrak{S}}(\pks,\pkc,\skc)$ \\
\phantom{$(m^{\star},\mu^{\star}) \leftarrow$} $
  \begin{array}{l} 
\mathfrak{S} : m \longmapsto \cs.\sign_{\sks}(m,\pkc) \\ 
\end{array}$ \\
\item \textsf{return} 1 \textsf{if and only if:} \\
\hspace{3mm} - $\verify(\mu^\star,m^\star,\pks,\pkc) = 1$ \\
\hspace{3mm} - $m^\star$ was not queried to $\mathfrak{S}$
\end{experiment}
\end{center}

We say that a CDCS scheme $\cs$ is $(t,\epsilon,q_s)$-EUF-CMA secure if there is no adversary, operating in time
$t$, that wins the above game with probability greater than $\epsilon$, where
the probability is taken over all the random choices.  

\item \emph{Security for the confirmer (invisibility). }Invisibility against a chosen message attack (INV-CMA) is defined through the game between an attacker $\A$ and her challenger $\cal C$ (Experiment $\mathbf{Exp}_{\cs,\A}^{\mathsf{INV-CMA}}(1^\kappa)$): after
$\cal A$ gets the public parameters of the CDCS scheme $\cs$ from $\cal C$, she
starts \textbf{Phase 1} where she queries the $\sign$, $\sconfirm$, $\confirm$, $\deny$, and $\convert$  oracles in an adaptive way. Once $\cal A$ decides that
\textbf{Phase 1} is over, she outputs two messages $m_0^\star, m_1^\star$ as challenge messages. $\cal C$ picks uniformly at random a bit $b \in \{0,1\}$. Then
$\mu^\star$ is generated using the signing oracle on the message $m_b^\star$. Next,
$\cal A$ starts adaptively querying the previous oracles (\textbf{Phase 2}), 
  with the exception of not querying $(\mu^\star,m_i^\star)$, $i=0,1$, to the 
 $\sconfirm$, $\{\confirm,\deny\}$, and $\convert$ oracles. At the end, $\cal A$ outputs a bit $b^\star$. She wins the game if
  $b=b^\star$.
\begin{center}
\footnotesize
\begin{experiment}[Experiment $\mathbf{Exp}_{\cs,\A}^{\mathsf{INV-CMA}}(1^\kappa)$]

\item $\param \gets \setup(1^\kappa)$; 
\item $(\pks,\sks) \gets \cs.\keygen_{\entfont{S}}(1^\kappa)$; $(\pkc,\skc) \gets \cs.\keygen_\C(1^\kappa)$; 

\item $(m_0^{\star},m_1^\star,\mathcal{I}) \leftarrow {\mathcal A}^{\mathfrak{S}, \mathfrak{Cv}, \mathfrak{V}} (\param,\pks,\pkc)$ \\
\phantom{$(m_0^{\star},m_1^{\star},\mathcal{I})$} $\left\vert
\begin{array}{l} 
\mathfrak{S} : m  \longmapsto \cs.\sign_{\sks}(m,\pkc) \\ 
\mathfrak{Cv}: (\mu,m) \longmapsto \cs.\convert_{\skc}(\mu,m,\pks,\pkc) \\
\mathfrak{V} : (\mu,m) \longmapsto \cs.\{\sconfirm,\confirm,\deny\}_{\{\coins_\mu \wedge \skc\}}(\mu,m) \\
\end{array} \right.$ \\

\item $b \hasard \{0,1\}$; $\mu^{\star} \leftarrow \cs.\sign_{\sks}(m^\star_b,\pkc)$\\
\item $b^\star\leftarrow {\mathcal A}^{\mathfrak{S}, \mathfrak{Cv}, \mathfrak{V}}(\text{guess},\mathcal{I},\mu^{\star},\pks,\pkc)$~\\
\phantom{$b^\star \leftarrow$} $\left\vert
\begin{array}{l} 
\mathfrak{S} : m  \longmapsto \cs.\sign_{\sks}(m,\pkc) \\ 
\mathfrak{Cv}: (\mu,m) (\neq (\mu^\star,m_i^\star), i=0,1) \longmapsto \cs.\convert_{\skc}(\mu,m) \\
\mathfrak{V} : (\mu,m) (\neq (\mu^\star,m_i^\star), i=0,1)\longmapsto \\
$\phantom{$(\mu,m) (\neq (\mu^\star,m_i^\star)$} $\cs.\{\sconfirm,\confirm,\deny\}(\mu,m) \\
\end{array} \right.$ 

\item Return $(b=b^\star)$.
\end{experiment}
\end{center}
 We define $\cal A$'s advantage as $ \Adv({\cal A}) = \left | \Pr[b=b^\star] -
  \frac{1}{2} \right|$, where the probability is taken over all the random coins. Finally, a CDCS scheme is
$(t,\epsilon,q_s,q_v,q_{sc})$-INV-CMA secure if no adversary operating in time
$t$, issuing $q_s$ queries to the signing oracle (followed potentially by queries to the $\sconfirm$ oracle), $q_v$ queries to the confirmation/denial oracles and $q_{sc}$ queries to the selective conversion
oracle that wins the above game with advantage greater that $\epsilon$. The probability is taken over all the coin tosses.

\end{description}

We have the following remarks regarding our security model:
\begin{remark}
\label{rmq:securityModel}
\begin{itemize}
\item \textbf{Online vs offline non-transferability}. Our definition of non-transferability is the same adopted in \cite{CamenischMichels2000,GentryMolnarRamzan2005,WangBaekWongBao2007}. In particular, it thrives on the \emph{concurrent} zero knowledgeness of the $\{\sconfirm,\-\confirm,\deny\}$ protocols, and  guarantees only the so-called \emph{offline non-transferability}.\\In fact, non-transferability is not preserved, as remarked by  \cite{LiskovMicali2008}, if the verifier interacts concurrently with the prover and with an unexpected verifier.

 One way to circumvent this shortcoming consists in requiring the mentioned protocols to be \emph{designated verifier proofs \cite{JakobssonSakoImpagliazzo1996}}, i.e. require the verifier to be able to efficiently provide the proofs underlying  $\sconfirm$, $\confirm$, and $\deny$, such that no efficient adversary is able to tell whether he is interacting with the genuine prover or with the verifier. This approach was adhered to for instance in \cite{ChowHaralambiev2011,MonneratVaudenay2011}. We will show that our proposed practical realizations of confirmer signatures satisfy also this stronger notion of \emph{online non-transferability} since the protocols  $\{\sconfirm,\confirm,\deny\}$ can be turned easily into $\Sigma$ protocols which can be in turn transformed efficiently into designated verifier proofs.

\item \textbf{Insider security (for the signer) against \emph{malicious confirmers}}.  We consider the \emph{insider security model against malicious confirmers} in our definition of
  unforgeability. I.e. the adversary is \emph{allowed} to choose his key pair
  $(\sk_C,\pk_C)$. This is justified by the need of preventing the
  confirmer from impersonating the signer by issuing valid signatures on his
  behalf. Hence, our definition of unforgeability, which is the same as the one
  considered by \cite{Wikstroem2007}, implies its similars in \cite{CamenischMichels2000,GentryMolnarRamzan2005,WangBaekWongBao2007}.

\item \textbf{Insider vs outsider security for the confirmer}. Our definition of invisibility, namely INV-CMA, is considered in the \emph{outsider
    security model}. I.e., the adversary does not know the private key of the
  signer. 

Actually, not only outsider security can be enough in many situations as argued earlier in the introduction, but also, there seems to be no tangible extra power that an insider attacker can gain from having access to the signing key. Actually, the insider adversary in the definitions in \cite{GentryMolnarRamzan2005,WangBaekWongBao2007} (constructions from CtEaS) is not allowed to ask the verification/conversion of valid confirmer signatures on the challenge messages (otherwise his task would be trivial: it suffices to replace, in the challenge confirmer signature, the digital signature on the commitment by a new one and ask the resulting confirmer signature for verification/conversion). This restriction involves for instance confirmer signatures that the adversary may have forged on the challenge messages using his signing key.
     
 Let's see how this can translate into a real attack scenario: the insider adversary $\cal A$ has compromised the signer's key and wishes to break the invisibility of an alleged signature $\mu$ on some message $m$. Naturally, the restriction imposed earlier can be achieved by the signer revoking his key and alerting the confirmer not to verify/convert confirmer signatures involving the message $m$ and potentially further messages. The revocation of the signing key implies also not considering signatures that have been issued after the key has been compromised. This leaves $\cal A$ with only verification/conversion queries on  messages where the corresponding signatures have been issued by the genuine signer before the revocation of the signing key. This seems to reduce $\cal A$'s adversarial power down to that of an outsider attacker.

Bottom line is outsider invisibility might be all that one needs and \emph{can have} in practice. Moreover, it allows  the signer to sign the same
  message many times without loss of invisibility, which is profoundly
  needed in licensing software.  

\item \textbf{Invisibility vs non-transferability} Our invisibility notion INV-CMA does not guarantee the
  non-transferability of the signatures. I.e., the confirmer signature might
  convince the recipient that the signer was involved in the signature of some
  message. We refer to the discussion in \cite{GentryMolnarRamzan2005}
  (Section 3) for techniques that can be used by the signer to camouflage the
  presence of valid signatures. We will also propose some constructions (derived from a variant of the StE paradigm) that achieve  a stronger notion of invisibility; for such confirmer signatures, it is difficult to distinguish a valid confirmer signature on some message from a random string sampled from the signature space. 

\end{itemize}
\end{remark}

\subsection{Classical constructions for confirmer signatures}
\label{subsec:mainConstructions}

Consider the following schemes:

\begin{itemize}

\item \textbf{A digital signature scheme $\Sigma$}, given by $\Sigma.\keygen$ which generates a key pair ($\Sigma.\sk$, $\Sigma.\pk$), $\Sigma.\sign$, and $\Sigma.\verify$.

\item \textbf{A public key encryption scheme}, described by \\$\Gamma.\keygen$ that generates the key pair ($\Gamma.\sk$,$\Gamma.\pk$), \\$\Gamma.\encrypt$, and $\Gamma.\decrypt$.

We use the notation  $\Gamma.\encrypt_{\{\Gamma.\pk,\coins\}}(m)$ to refer to the ciphertext obtained from encrypting the message $m$ under the public key $\Gamma.\pk$ using the random coins $\coins$ ($\encrypt$ is a probabilistic algorithm).

\item \textbf{A commitment scheme $\Omega$}, given by the algorithms $\Omega.\commit$ and $\Omega.\open$.

\end{itemize}

Let $m$ be a message. We present now the most popular paradigms used to devise a confirmer signature scheme $\cs$ from the aforementioned primitives. Note that in all those paradigms, ($\Sigma.\pk$,$\Sigma.\sk$) forms the signer's key pair, whereas ($\Gamma.\pk$,$\Gamma.\sk$) forms the confirmer's key pair.

Note that the security analysis of the following constructions is deferred to Section \ref{sec:negative-CDCS} and Section \label{sec:positiveCDCS}.

\subsubsection{The \emph{``sign-then-encrypt'' (StE) paradigm}} 

A CDCS scheme $\cs$ from the StE paradigm is depicted in Figure \ref{fig:StE}. 

\noindent Note that the languages underlying $\sconfirm$, $\confirm$, and  $\deny$ are in  NP and thus accept concurrent zero-knowledge proofs \cite{DworkNaorSahai2004}. This guarantees the completeness, soundness, and (offline) non-transferability of the resulting signatures.

\begin{center}
\begin{figure*}
\[
\begin{array}{|lcl|}
\hline \vspace{.1 cm} 
\bf{[\cs.\setup(1^\kappa)]} &  :& \Sigma.\setup(1^\kappa); ~\Gamma.\setup(1^\kappa) \\ \vspace{.1 cm}

\bf{[\cs.\keygen_\signer(1^\kappa)]}   &: & \Sigma.\keygen(1^\kappa)\\ \vspace{.1 cm}

\bf{[\cs.\keygen_\C(1^\kappa)]}  & :&\Gamma.\keygen(1^\kappa) \\ \vspace{.1 cm}

\bf{[\cs.\sign(m)]}  &:& \Gamma.\encrypt_{\Gamma.\pk}(\Sigma.\sign_{\Sigma.\sk}(m)) \\ \vspace{.1 cm}

\bf{[\cs.\sconfirm(\mu,m)]}  &:& \zk\pok \{(\sigma,\coins_\mu):  \mu = \Gamma.\encrypt_{\{\Gamma.\pk, \coins_\mu\}}(\sigma) \wedge \Sigma.\verify_{\Sigma.\pk}(\sigma,m)=1 \} \\ \vspace{.1 cm}

\bf{[\cs.\confirm(\mu,m)]} &: & \zk\pok \{(\sigma,\Gamma.\sk):  \sigma = \Gamma.\decrypt_{\Gamma.\sk}(\mu) \wedge \Sigma.\verify_{\Sigma.\pk}(\sigma,m)=1 \} \\ \vspace{.1 cm}
 
\bf{[\cs.\deny(\mu,m)]} &:& \zk\pok \{(\sigma,\Gamma.\sk):  \sigma = \Gamma.\decrypt_{\Gamma.\sk}(\mu) \wedge \Sigma.\verify_{\Sigma.\pk}(\sigma,m) = 0 \} \\ \vspace{.1 cm}

\bf{[\cs.\convert(\mu,m)]} &:& {{\Gamma.\decrypt_{\Gamma.\sk}}}(\mu) \\ \vspace{.1 cm}

 \bf{[\cs.\verifyconverted(\sigma,m)]}  &:& \Sigma.\verify_{\Sigma.\pk}(\sigma,m)\\
\hline

\end{array}
\]
\caption{\bf{The StE paradigm}}
\label{fig:StE}
\end{figure*}
\end{center}

\subsubsection{The \emph{``encrypt-then-sign'' (EtS) paradigm}} 

This paradigm was first used in the context of signcryption. We can adapt it to the case of convertible confirmer signatures by requiring a ``trusted authority'' $\ta$ that runs the $\setup$ algorithm and generates a common reference string $\crs$. In fact, signature conversion will involve a non-interactive ZK proof (NIZK), and thus the need for the $\crs$ (generated by a trusted authority) for the simulation of the NIZK proof. We describe in Figure \ref{fig:EtS} confirmer signatures from such a paradigm. 

\noindent Similarly, $\sconfirm$, $\confirm$, and $\deny$ amount to \emph{concurrent zero-knowledge proofs of knowledge} since the underlying languages are in NP. It is worth noting that the aforementioned protocols are only carried out when the signature $\sigma$ on the ciphertext $c$ is valid, otherwise the confirmer signature $\mu=(c,\sigma)$ is obviously deemed invalid w.r.t. m. Finally, $\convert$ outputs (in case of a valid confirmer signature on $m$) a zero knowledge non-interactive (NIZK) proof that $m$ is the decryption of $c$; such a proof is feasible since the underlying statement is in NP (\cite{GoldreichMicaliWigderson1986} and \cite{BlumFeldmanMicali1988}).

\begin{center}
  \begin{figure*}
    \small
\[
\begin{array}{|lcl|}
\hline \vspace{.1 cm} 
\bf{[\cs.\setup(1^\kappa)]} &:&  \Sigma.\setup(1^\kappa);~\Gamma.\setup(1^\kappa) ;~\crs \gets \ta.\setup(1^\kappa)  \\ \vspace{.1 cm}

\bf{[\cs.\keygen_\signer(1^\kappa)]}   &:& \Sigma.\keygen(1^\kappa)\\ \vspace{.1 cm}

\bf{[\cs.\keygen_\C(1^\kappa)]}  &:& \Gamma.\keygen(1^\kappa) \\ \vspace{.1 cm}

\bf{[\cs.\sign(m)]} &: & c \gets \Gamma.\encrypt_{\Gamma.\pk}(m); ~\sigma \gets \Sigma.\sign_{\Sigma.\sk}(c) ; \\ \vspace{.1 cm}

\bf{[\cs.\sconfirm(\{c,\sigma\},m)]}  &:& \zk\pok \{\coins_c:  c = \Gamma.\encrypt_{\{\Gamma.\pk, \coins_c \}}(m) \} \\ \vspace{.1 cm}

\bf{[\cs.\confirm(\{c,\sigma\},m)]} &:& \zk\pok \{\Gamma.\sk:  m = \Gamma.\decrypt_{\Gamma.\sk}(c) \} \\ \vspace{.1 cm}
 
\bf{[\cs.\deny(\{c,\sigma\},m)]} &:& \zk\pok \{\Gamma.\sk:  m \neq \Gamma.\decrypt_{\Gamma.\sk}(c) \} \\ \vspace{.1 cm}

\bf{[ \cs.\convert(\{c,\sigma\},m)]} &:&  \pi \gets \nizk\{m = \Gamma.\decrypt_{\Gamma.\sk}(c) \} ; ~\return ~ \{\pi,c,\sigma\} \\ \vspace{.1 cm}

 \bf{[\cs.\verifyconverted(\{\pi,c,\sigma\},m)]}  &:&  \nizk.\verify(\crs,\pi) ; ~\Sigma.\verify_{\Sigma.\pk}(\sigma,c)\\
\hline

\end{array}
\]
\caption{\bf{The EtS paradigm}}
\label{fig:EtS}
\end{figure*}
\end{center}

\subsubsection{The \emph{``commit-then-encrypt-and-sign'' (CtEaS) paradigm}} This construction has the advantage of performing signature and encryption \emph{in parallel} in contrast to the previous sequential compositions.
It includes among its building blocks, contrarily to the previous constructions, a public key encryption scheme that supports labels. The encryption with labels was introduced in \cite{WangBaekWongBao2007} in order to fix a flaw that afflicted the original proposal in \cite{GentryMolnarRamzan2005}.  More precisely, a confirmer signature on a message $m$ is obtained by first committing to $m$, then encrypting the randomness used in the commitment w.r.t. the label $m\|\Sigma.\pk$ ($\Sigma.\pk$ is the public key of the used digital signature scheme), \emph{and} finally signing the commitment.

However, we remark that this repair will violate the invisibility of the resulting construction. In fact, the standard security definitions for encryption with labels do not require the label of a ciphertext to be hidden (since the label is required as input to the decryption algorithm in order to correctly decrypt the ciphertext). This implies that the signed message will be leaked from the encryption of the randomness used for the commitment.
We can remediate to this problem by using public key encryption without labels to encrypt both the randomness and $m\|\Sigma.\pk$. We describe in Figure \ref{fig:CtEaS} our revised variant of this paradigm.

\noindent Again, $\sconfirm$, $\confirm$, and $\deny$ are only carried out when the signature $\sigma$ on the commitment $c$ is valid, otherwise the confirmer signature $\mu=(c,e,\sigma)$ is clearly invalid w.r.t. m.

\begin{center}
  \begin{figure*}
    \small
\[
\begin{array}{|lcl|}
\hline \vspace{.1 cm} 
\bf{[\cs.\setup(1^\kappa)]} &:&  \Sigma.\setup(1^\kappa);~\Gamma.\setup(1^\kappa); ~\Omega.\setup(1^\kappa)\\ \vspace{.1 cm}

\bf{[\cs.\keygen_\signer(1^\kappa)]}   &:& \Sigma.\keygen(1^\kappa)\\ \vspace{.1 cm}

\bf{[\cs.\keygen_\C(1^\kappa)]}  &:& \Gamma.\keygen(1^\kappa) \\ \vspace{.1 cm}

\bf{[\cs.\sign(m)]}  &:& c \gets \Omega.\commit(m,r); ~e \gets \Gamma.\encrypt_{\Gamma.\pk}                      (r\|m\|\Sigma.\pk);~\sigma \gets \Sigma.\sign_{\Sigma.\sk}(c) ;  \\ \vspace{.1 cm}

\bf{[\cs.\sconfirm(\{c,e,\sigma\},m)]}  &:& \zk\pok \{(r,\coins_c): c = \Omega.\commit(m,r) \wedge e = \Gamma.\encrypt_{\{\Gamma.\pk, \coins_c\}}(r\|m\|\Sigma.\pk) \} \\ \vspace{.1 cm}

\bf{[\cs.\confirm(\{c,e,\sigma\},m)]}  &:& \zk\pok \{(r,\Gamma.\sk): c = \Omega.\commit(m,r) \wedge r\|m\|\Sigma.\pk = \Gamma.\decrypt_{\Gamma.\sk}(e) \} \\ \vspace{.1 cm}
 
\bf{[\cs.\deny(\{c,e,\sigma\},m)]} &:&  \zk\pok \{(r,\Gamma.\sk): c \neq \Omega.\commit(m,r) \wedge r\|m\|\Sigma.\pk = \Gamma.\decrypt_{\Gamma.\sk}(e) \} \\ \vspace{.1 cm}

\bf{[ \cs.\convert(\{c,e,\sigma\},m)]} &:&  r\|m\|\Sigma.\pk \gets \Gamma.\decrypt_{\Gamma.\sk}(c); ~\return ~  \{r,c,\sigma\} \\ \vspace{.1 cm}

 \bf{[\cs.\verifyconverted(\{r,c,\sigma\},m)]}  &:& c \stackrel{?}{=} \Omega.\commit(m,r) ; ~\Sigma.\verify_{\Sigma.\pk}(\sigma,c) \\
\hline

\end{array}
\]
\caption{\bf{The CtEaS paradigm}}
\label{fig:CtEaS}
\end{figure*}
\end{center}

\begin{remark}
 It is possible to require a proof in the $\convert$ algorithms of StE and CtEaS, that the revealed information is indeed a correct decryption of the corresponding encryption; such a proof is again possible to issue (with or without interaction) since the underlying statement is in NP.
\end{remark}

\section{Negative Results for CDCS}
\label{sec:negative-CDCS}

In this section, we show that StE and CtEaS require at least IND-PCA encryption in order to lead to INV-CMA secure confirmer signatures. We proceed as follows.

\indent First, we rule out the OW-CPA, OW-PCA, and IND-CPA notions by remarking that ElGamal's encryption meets all those notions (under different assumptions), but cannot be used in either StE or CtEaS. In fact, the invisibility adversary can create from the challenge signature a new ``equivalent'' signature (by re-encrypting the ElGamal encryption), and query it for conversion or verification to solve the challenge. Actually, this attack applies to any \emph{homomorphic encryption}.

\indent Next, we show the insufficiency of OW-CCA and NM-CPA encryption by means of efficient meta-reductions which forbid the existence of reductions from the invisibility of the resulting confirmer signatures to the OW-CCA or NM-CPA security of the underlying encryption. We first show this impossibility result for a specific kind of reductions, then we extend it to arbitrary reductions assuming further assumptions on the used encryption.

\indent Finally, as an illustration of our techniques, we provide evidence that the well known  Damg\r{a}rd-Pedersen's signature \cite{DamgardPedersen1996} is, contrarily to what is conjectured by the authors, unlikely to be indistinguishable under the DDH assumption.

\subsection{A breach in invisibility using homomorphic encryption}
\label{subsec:invisibilityBreach}
\begin{definition}[Homomorphic encryption]
\label{def:homomorphicEncryption}
A homomorphic public key encryption scheme $\Gamma$ given by $\Gamma.\keygen$, $\Gamma.\encrypt$, and $\Gamma.\decrypt$ has the following properties:

\begin{enumerate}
\item The message space ${\cal M }$ and the ciphertext space ${\cal C}$ are groups w.r.t. some binary operations $\ast_e$ and $\circ_e$ respectively.

\item $\forall (\sk,\pk) \leftarrow \Gamma.\keygen(1^\kappa)$ for any security parameter $\kappa$, $\forall m,m' \in {\cal M}$:
$$
\Gamma.\encrypt_{\pk}(m \ast_e m') = \Gamma.\encrypt_{\pk}(m) \circ_e \Gamma.\encrypt_{pk}(m').
$$

\end{enumerate} 
\end{definition}

Examples of homomorphic encryption \footnote{This encryption is not to confuse with the so-called \emph{fully homomorphic} encryption which preserves the entire ring structure of the plaintexts (supports both addition and multiplication).} in the literature include $\entfont{ElGamal}$ \cite{ElGamal1985}, $\entfont{Paillier}$ \cite{Paillier1999}, and \\$\entfont{Boneh\mbox{-}Boyen\mbox{-}\-Shacham}$ \cite{BonehBoyenShacham2004}. All those schemes are IND-CPA secure (under different assumptions).




\begin{fact}
\label{fact:AttackHomo}
The StE (CtEaS) paradigm cannot lead to INV-CMA secure confirmer signatures when used with homomorphic encryption.
\end{fact}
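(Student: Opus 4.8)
The plan is to exhibit, for each paradigm, an explicit INV-CMA adversary that exploits the homomorphic structure of $\Gamma$ to transform the challenge confirmer signature $\mu^\star$ into a \emph{distinct but equivalent} confirmer signature $\mu'$, and then submit $\mu'$ to one of the oracles forbidden on $\mu^\star$ itself (namely $\mathfrak{Cv}$ or $\mathfrak{V}$). The crucial observation is that the INV-CMA experiment only forbids querying the pairs $(\mu^\star,m_i^\star)$, $i=0,1$, so any freshly derived $\mu' \neq \mu^\star$ that still decrypts/opens to a valid signature on $m^\star_b$ is a legal query and immediately reveals $b$. First I would treat StE: here $\mu^\star = \Gamma.\encrypt_{\Gamma.\pk}(\sigma)$ where $\sigma = \Sigma.\sign(m^\star_b)$. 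Using property~2 of Definition~\ref{def:homomorphicEncryption}, the adversary computes $\mu' = \mu^\star \circ_e \Gamma.\encrypt_{\Gamma.\pk}(1_{\cal M})$, where $1_{\cal M}$ is the identity of $({\cal M},\ast_e)$ with fresh random coins, so that $\mu'$ decrypts to $\sigma \ast_e 1_{\cal M} = \sigma$. With overwhelming probability $\mu' \neq \mu^\star$ (the re-randomization changes the ciphertext), yet $\mu'$ is still a valid confirmer signature on $m^\star_b$; submitting $(\mu',m_0^\star)$ to $\mathfrak{Cv}$ (or running $\mathfrak{V}$ on it) tells the adversary whether $b=0$, and she wins with advantage close to $1/2$.

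Next I would handle CtEaS, where $\mu^\star = (c,e,\sigma)$ with $c = \Omega.\commit(m^\star_b,r)$, $e = \Gamma.\encrypt_{\Gamma.\pk}(r\|m^\star_b\|\Sigma.\pk)$, and $\sigma = \Sigma.\sign(c)$. The homomorphic attack now targets the encryption component $e$: the adversary forms $e' = e \circ_e \Gamma.\encrypt_{\Gamma.\pk}(1_{\cal M})$, which decrypts to the same plaintext $r\|m^\star_b\|\Sigma.\pk$, and leaves $c$ and $\sigma$ unchanged. The resulting $\mu' = (c,e',\sigma)$ is a valid confirmer signature on $m^\star_b$ (the commitment opening and the signature on $c$ are untouched), and $\mu' \neq \mu^\star$ since $e' \neq e$; again $(\mu',m_0^\star)$ is an admissible oracle query that leaks $b$. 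A minor subtlety to address is that in CtEaS the plaintext space must accommodate the concatenated string $r\|m\|\Sigma.\pk$ as a group element so that the identity-multiplication is well-defined; I would note that the homomorphic re-randomization only needs to fix the plaintext, which it does by construction, so correctness of the forbidden query follows from the completeness of $\convert/\confirm$ on $\mu'$.

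The main obstacle—and the step requiring the most care—is arguing that the re-randomized ciphertext is \emph{genuinely distinct} from the challenge yet \emph{semantically equivalent}, so that the derived $\mu'$ is simultaneously a legal query (not equal to $\mu^\star$) and a valid signature (decrypts/opens correctly). For homomorphic schemes such as $\elgamal$ or $\paillier$ this is immediate because multiplying by an encryption of the identity with fresh randomness is exactly ciphertext re-randomization, which changes the ciphertext with overwhelming probability while preserving the plaintext; I would make this explicit rather than merely assert it, since the whole attack hinges on it. I would close by remarking that the argument uses nothing about $\Gamma$ beyond the homomorphic property of Definition~\ref{def:homomorphicEncryption}, and therefore rules out not just the IND-CPA-secure examples $\elgamal$, $\paillier$, and $\entfont{Boneh\mbox{-}Boyen\mbox{-}Shacham}$ but \emph{any} homomorphic instantiation, which is precisely the content of Fact~\ref{fact:AttackHomo}.
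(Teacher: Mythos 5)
Your proposal is correct and follows essentially the same route as the paper's own proof: in both paradigms you re-randomize the encryption component by multiplying it with a fresh encryption of the identity element, obtaining a distinct but equivalent confirmer signature that may legally be submitted to the conversion or verification oracle, whose answer reveals $b$. The only addition is your explicit remark that the re-randomized ciphertext differs from the challenge with overwhelming probability, a point the paper leaves implicit.
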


\begin{proof}
Let $m_0,m_1$ be the challenge messages the invisibility adversary $\cal A$ outputs to his challenger. Let further $\Gamma$, $\Sigma$, and $\Omega$ denote respectively the homomorphic encryption, the digital signature, and the commitment used as building blocks.

\begin{itemize}
\item \emph{StE paradigm.} $\cal A$  receives as a challenge confirmer signature some $\mu_b=\Gamma.\encrypt(\Sigma.\sign(m_b))$, where $b \hasard \{0,1\}$ and is asked to find $b$. To solve his challenge, $\cal A$  obtains another encryption, say $\widetilde{\mu_b}$, of \\$\Sigma.\sign(m_b)$ by multiplying $\mu_b$ with an encryption of the identity element. According to the invisibility experiment, $\cal A$ can query $\widetilde{\mu_b}$ for conversion or verification (w.r.t. either $m_0$ or $m_1$) and the answer to such a query is sufficient for $\cal A$ to conclude.

\item \emph{CtEaS paradigm.} $\cal A$ gets as a challenge confirmer signature some $\mu_b=[c=\Omega.\commit(m_b,r),e,\Sigma.\sign(c)]$ ($b \in \{0,1\}$) where  $e$ is an encryption of $r\|m_b\|\Sigma.\pk$. Similarly, $\cal A$ computes a new confirmer signature on $m_b$ by multiplying $e$ with an encryption of the identity element (of the message space of $\Gamma$). Then, $\cal A$ queries this new signature (w.r.t. either $m_0$ or $m_1$) for conversion/verification, and the answer of the latter is sufficient for $\cal A$ to conclude.
\end{itemize}
\qed\end{proof}

\begin{remark}
Note that the EtS paradigm is resilient to the previous attack since the adversary would need to compute a valid digital signature on the newly computed encryption. This is not plausible in the invisibility game (we consider \emph{outsider} invisibility).
\end{remark}

\begin{corollary}
\label{corollary:ElGamalAttack}
Invisibility in CDCS from StE  and CtEaS cannot rest on OW-CPA, OW-PCA, or IND-CPA encryption.
\end{corollary}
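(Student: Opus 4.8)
The plan is to obtain the corollary as an immediate consequence of Fact~\ref{fact:AttackHomo} by producing a \emph{single} encryption scheme that simultaneously satisfies all three of the named notions yet provably destroys invisibility in both paradigms. The natural witness is $\entfont{ElGamal}$ encryption, already flagged in the introduction. First I would verify that $\entfont{ElGamal}$ is homomorphic in the sense of Definition~\ref{def:homomorphicEncryption}: the componentwise product of two ciphertexts decrypts to the product of the underlying plaintexts, so the message and ciphertext spaces are groups and the encryption map is a group homomorphism. Consequently Fact~\ref{fact:AttackHomo} applies verbatim, and StE (resp.\ CtEaS) instantiated with $\entfont{ElGamal}$ \emph{cannot} yield $\INV\mbox{-}\CMA$ secure confirmer signatures: the invisibility adversary re-randomizes the challenge ciphertext (by multiplying in an encryption of the identity element) and submits the resulting equivalent confirmer signature to the conversion/verification oracle.

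Next I would record the positive security of $\entfont{ElGamal}$, matching each of the three notions listed in the statement: it is $\OW\mbox{-}\CPA$ secure under CDH, $\IND\mbox{-}\CPA$ secure under DDH, and $\OW\mbox{-}\PCA$ secure under the Gap-CDH assumption (where CDH stays hard even given a DDH, i.e.\ plaintext-checking, oracle). These are exactly the assumptions alluded to by the phrase ``meets all those notions (under different assumptions)'' earlier in the text, so no new lemma is required here.

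Finally I would close the argument by contradiction. Since StE and CtEaS are \emph{generic} compositions that may be instantiated with any encryption scheme meeting the prescribed notion, a hypothetical proof that invisibility of the construction ``rests on'' $\OW\mbox{-}\CPA$ (resp.\ $\OW\mbox{-}\PCA$, $\IND\mbox{-}\CPA$) security of $\Gamma$ would, when applied to the admissible instantiation $\Gamma=\entfont{ElGamal}$, force the resulting confirmer signatures to be $\INV\mbox{-}\CMA$ secure --- contradicting the preceding paragraph. Because $\entfont{ElGamal}$ simultaneously meets all three notions, this one counterexample rules them out at once. The only delicate point is interpretive rather than computational: one must read ``rest on'' as ``admit a generic security reduction to,'' so that a single insecure instantiation suffices to refute it; the mildest technical caveat is that the $\OW\mbox{-}\PCA$ case leans on the (standard but non-folklore) Gap-CDH hardness of $\entfont{ElGamal}$.
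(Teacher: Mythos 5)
Your proposal is correct and follows essentially the same route as the paper: the paper's own proof simply observes that ElGamal is homomorphic and meets OW-CPA, OW-PCA, and IND-CPA under different assumptions, and then invokes Fact~\ref{fact:AttackHomo}. Your additional details (naming CDH, DDH, and Gap-CDH explicitly, and spelling out the contradiction with a hypothetical generic reduction) are faithful elaborations of the same argument rather than a different approach.
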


\begin{proof}
ElGamal's encryption \cite{ElGamal1985} is  homomorphic and meets the OW-CPA, OW-PCA, and IND-CPA security notions (under different assumptions). The rest follows from the previous fact.
\qed\end{proof}

\subsection{Impossibility results for key-preserving reductions}
In this paragraph, we prove that NM-CPA and OW-CCA encryption are insufficient for invisible confirmer signatures from StE or CtEaS, if we consider a certain type of reductions. We do this by means of efficient \emph{meta-reductions} that use such reductions (the algorithm reducing NM-CPA (OW-CCA) breaking the underlying encryption scheme to breaking the invisibility of the construction) to break the NM-CPA (OW-CCA) security of the
encryption scheme. Thus, if the encryption scheme is NM-CPA (OW-CCA) secure, the meta reductions forbid the existence of such reductions. In case the encryption scheme is not NM-CPA (OW-CCA) secure, such reductions will be useless. 

Meta-reductions have been successfully used in a number of important
cryptographic results, e.g. the result in \cite{BonehVenkatesan1998} which proves the impossibility
of reducing factoring, in a specefic kind of way, to the RSA problem, or the results in
\cite{PaillierVergnaud2005,Paillier2007} which show that some well known
signatures which are proven secure in the random oracle may not conserve the same
security in the standard model. Such impossibility
results are in general partial as they apply only for certain reductions. Our result is
also partial in a first stage since it requires the reduction
$\cal R$, trying to attack a certain property of an encryption scheme given by the public
key $\Gamma.\pk$, to provide the adversary against the confirmer signature
with the confirmer public key $\Gamma.\pk$. In other terms, our result applies for key-preserving reductions (see Definition \ref{def:key-preserving-reductions}). Our restriction to such a class
of reductions is not unnatural since, to our best knowledge, all the
reductions basing the security of the generic constructions of confirmer
signatures on the security of their underlying components, feed the adversary with the public keys of these components
(signature schemes, encryption schemes, and commitment schemes). Next, we use similar techniques to \cite{PaillierVillar2006} to extend our impossibility results to arbitrary reductions.

\subsubsection{Insufficiency of OW-CCA encryption}
\begin{lemma}
\label{lemma:OW-CCA-CS}
Assume there exists a key-preserving reduction $\cal R$ that converts an INV-CMA adversary $\cal A$ against confirmer signatures from the StE (CtEaS) paradigm to a OW-CCA adversary against the underlying encryption scheme. Then, there exists a meta-reduction $\cal M$ that OW-CCA  breaks the encryption scheme in question.
\end{lemma}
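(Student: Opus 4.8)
The plan is to build a meta-reduction $\mathcal{M}$ that plays the $\OW\mbox{-}\CCA$ game against $\Gamma$ while internally running the hypothesised reduction $\R$, feeding $\R$ a \emph{perfect} invisibility adversary that $\mathcal{M}$ realises with the help of its own decryption oracle. Concretely, $\mathcal{M}$ receives its challenge $(\Gamma.\pk, c^\star)$ together with a decryption oracle $\mathcal{O}$ that answers every query except $c^\star$, and its goal is to output $\Gamma.\decrypt(c^\star)$. Since $\R$ is itself a $\OW\mbox{-}\CCA$ adversary, $\mathcal{M}$ plays $\R$'s challenger by handing it the \emph{same} pair $(\Gamma.\pk,c^\star)$ and by relaying each of $\R$'s decryption queries (all necessarily distinct from $c^\star$) to $\mathcal{O}$; this simulation of $\R$'s environment is perfect. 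Because $\R$ is \emph{key-preserving} (Definition~\ref{def:key-preserving-reductions}), when it invokes the invisibility adversary it hands it the confirmer public key $\Gamma.\pk$, i.e.\ exactly $\mathcal{M}$'s challenge key, so that every ciphertext $\mathcal{M}$ will have to process lives under the right key.

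First I would specify the fake adversary $\A$ that $\mathcal{M}$ impersonates towards $\R$. In its query phase $\A$ forwards all $\sign$, $\sconfirm$, $\confirm$, $\deny$ and $\convert$ calls to the oracles that $\R$ must supply, so $\mathcal{M}$ has nothing to simulate there. When $\A$ must produce the challenge messages it picks any two distinct $m_0^\star, m_1^\star$; upon receiving the challenge confirmer signature $\mu^\star$ from $\R$, $\mathcal{M}$ determines the hidden bit by querying $\mathcal{O}$ on $\mu^\star$ to recover $\sigma = \Gamma.\decrypt(\mu^\star)$ (for $\cs$ from StE) and testing $\Sigma.\verify_{\Sigma.\pk}(\sigma,m_i^\star)$ for $i=0,1$, then outputting the matching bit. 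For the CtEaS variant the same step decrypts the encryption component of $\mu^\star$ to obtain $r\|m\|\Sigma.\pk$ and compares $m$ with $m_0^\star,m_1^\star$. Since on a genuine challenge $\mu^\star$ decrypts precisely to a valid signature on $m_b^\star$, this yields an invisibility adversary of advantage $\tfrac12$, i.e.\ a \emph{perfect} $\A$, whose internal use of $\mathcal{O}$ is invisible to $\R$. As $\R$ converts \emph{any} $\INV\mbox{-}\CMA$ adversary into a successful $\OW\mbox{-}\CCA$ adversary, $\R^{\A}$ outputs $\Gamma.\decrypt(c^\star)$, which $\mathcal{M}$ returns, so that $\adv(\mathcal{M})$ is essentially $\adv(\R)$.

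The one delicate point -- and the main obstacle -- is the boundary event in which the challenge signature $\R$ hands to $\A$ coincides with the forbidden ciphertext, $\mu^\star = c^\star$ (for CtEaS, the encryption component equals the embedded target): there $\mathcal{M}$ may not call $\mathcal{O}$ and so cannot read off the bit. I would argue this event is harmless in two complementary ways. On the one hand, when $\R$ runs $\A$ on several independent challenges $\mathcal{M}$ answers perfectly on every challenge $\neq c^\star$, so the simulation only degrades on the single run whose challenge is $c^\star$; on that run $\mathcal{M}$ simply outputs a uniform guess, costing at most a factor $\tfrac12$ in its success probability. On the other hand, a single invisibility answer is one bit, far too little to reconstruct the full plaintext $\Gamma.\decrypt(c^\star)$; hence a successful $\R$ cannot be extracting its inversion from that bit, so $\mathcal{M}$'s guess on that run cannot affect whether $\R$ outputs $\Gamma.\decrypt(c^\star)$. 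Either way $\mathcal{M}$ retains a non-negligible advantage. I would also record that $\R$ can never legitimately route $c^\star$ through the $\convert$ or $\confirm$ oracle it offers $\A$, since answering such a query correctly would force $\R$ to reveal $\Gamma.\decrypt(c^\star)$, which it cannot compute; the trap is therefore self-defeating and need not be simulated by $\mathcal{M}$. Collecting these observations gives a meta-reduction $\mathcal{M}$ whose advantage against the $\OW\mbox{-}\CCA$ security of $\Gamma$ is polynomially related to $\adv(\R)$, which proves the lemma.
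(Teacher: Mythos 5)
Your meta-reduction has the same architecture as the paper's: run the key-preserving $\R$ on your own key and challenge, relay its CCA queries to your oracle, and impersonate a perfect $\INV$-$\CMA$ adversary by decrypting the challenge confirmer signature with your own oracle. The gap is precisely in the ``delicate point'' you flag, the collision $\mu^\star=c^\star$, and neither of your two arguments closes it. The factor-$\tfrac12$ argument presumes $\R$ invokes the adversary once; a reduction may rerun (or rewind) the adversary polynomially many times, each time planting $c^\star$ (or, worse, $\R$ freely chooses the signature scheme $\Sigma$ and can rig it so that \emph{every} string is a valid signature on \emph{every} message, making $c^\star$ a legitimately valid challenge on whatever messages you pick), and then your simulated adversary errs with probability $\tfrac12$ per run --- it is statistically distinguishable from the genuine perfect adversary whose existence $\R$ is entitled to assume, so $\R$'s success guarantee no longer applies. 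The ``one bit is too little to encode the plaintext'' argument is information-theoretic hand-waving: it does not exclude a reduction that uses the adversary's answers only as a consistency check and aborts when they are wrong.

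The paper's missing ingredient is a proactive probe that removes the collision altogether. Before committing to the challenge messages, $\cal M$ (playing the adversary) submits $(c^\star,m_0)$ and $(c^\star,m_1)$ to the $\convert$ oracle that $\R$ must provide. If either answer is some $\sigma\neq\perp$, then by definition $\sigma=\Gamma.\decrypt(c^\star)$ and $\cal M$ wins its $\OW$-$\CCA$ game outright. If both answers are $\perp$, then (assuming $\R$ simulates correctly) $c^\star$ is not a valid confirmer signature on $m_0$ or $m_1$, so the challenge $\mu_b$ --- which is a valid confirmer signature on $m_b$ --- cannot equal $c^\star$ and may safely be sent to the decryption oracle; this holds in every invocation, so the simulation is perfect throughout. (For CtEaS the paper argues differently: $m_0,m_1$ are drawn \emph{after} $e$ is fixed, so with overwhelming probability $e$ does not encrypt a string ending in $m_i\|\Sigma.\pk$, hence $e_b\neq e$.) You should replace your probabilistic patch with this probe, or with the message-randomization argument where it applies.
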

This lemma claims that if the considered encryption is  OW-CCA secure, then, there exists no key-preserving reduction $\cal R$ that reduces OW-CCA breaking it to INV-CMA breaking the construction (from either StE or CtEaS), or if there
exists such an algorithm, then the underlying encryption is not OW-CCA secure,
thus rendering such a reduction useless.
\begin{proof}
Let $\cal R$ be the
key-preserving reduction that reduces OW-CCA breaking
the encryption scheme underlying the construction to INV-CMA breaking the
construction itself. We will construct an algorithm $\cal M$  that uses $\cal R$  to OW-CCA break the same encryption scheme by simulating an execution of the
INV-CMA adversary $\cal A$ against the construction.

Let $\Gamma$ be the encryption scheme $\cal M$ is trying to attack w.r.t.  key $\Gamma.\pk$. $\cal M$ proceeds as follows:

\begin{itemize}
\item \emph{StE paradigm. }Let $c$ be the OW-CCA challenge $\cal M$ is asked to resolve. $\cal M$ launches $\R$ over $\Gamma$ under the same key $\Gamma.\pk$ and the same challenge $c$. Obviously, all decryption queries made by $\R$ can be perfectly answered using $\cal M$'s challenger (since they are different from the challenge $c$). 

$\cal M$ needs now to simulate an INV-CMA adversary $\cal A$ to $\cal R$. To do so, $\cal M$ picks two random messages $m_0$ and $m_1$ from the message space.
To insure that $c$ is not a valid confirmer signature on $m_0$ nor $m_1$, $\cal M$ queries $\R$ for the conversion of both $(c,m_0)$ and $(c,m_1)$ and makes sure that the result to both queries is $\perp$. If this is not the case, then  $\cal M$ will simply abort the INV-CMA game, and output the result of the conversion, say $\sigma$ ($\neq \perp$), to his own OW-CCA challenger. In fact, by definition, $\sigma$ is a valid decryption of $c$ w.r.t. $\Gamma.\pk$.

We assume now that $c$ is not a valid confirmer signature on either $m_0$ or $m_1$. Hence, $\cal M$ outputs $m_0,m_1$ to $\R$ as challenge messages, and receives a challenge $\mu_b$ which is, with enough good probability, a valid confirmer signature on $m_b$ for $b \in \{0,1\}$. $\mu_b$ is according to our assumption different from the challenge ciphertext $c$, and $\cal M$ is requested to find $b$. To solve his challenge, $\cal M$ queries his own OW-CCA challenger for the decryption of $\mu_b$. The result to such a query allows $\cal M$ to find out $b$ with probability one (provided $\cal R$ supplies a correct simulation). 

\item \emph{CtEaS paradigm. }$\cal M$ launches $\cal R$ over $\Gamma$ with the same  key $\Gamma.\pk$ and the same challenge $e$. Thus, all decryption queries made by $\cal R$, which are by definition different from the challenge $e$, can be forwarded to $\cal M$'s own challenger. 

At some point, $\cal M$, acting as an INV-CMA
attacker against the construction, outputs two challenge messages $m_0,m_1$ (chosen randomly from the message space) and gets as response a challenge  $\mu_b=(c_b,e_b,\sigma_b)$ which is, with enough good probability, a valid confirmer signature on $m_b$ for some $b \in \{0,1\}$. $\cal M$ is asked to find $b$.

We first note that $e_b \neq e$. In fact, with overwhelming probability, the challenge $e$  does not encrypt a string whose suffix is  $m_0\|\Sigma.\pk$ or $m_1\|\Sigma.\pk$ (although $\Sigma.\pk$ can be maliciously chosen by $\cal R$, $m_0$ and $m_1$ are independently chosen by $\cal M$ upon receipt of the challenge $e$). Therefore, $\cal M$ requests his  own challenger for the decryption of $e_b$. The answer to such a query will allow $\cal M$ (behaving as an INV-CCA attacker) to perfectly answer his invisibility challenge.

\end{itemize}

To sum up, $\cal M$ is able to perfectly answer the decryption queries made by $\cal R$ (that are by definition different from the OW-CCA challenge). $\cal M$ is further capable of successfully simulating an INV-CMA attacker against the construction (from either StE or CtEaS), provided $\cal R$ supplies a correct simulation. Thus, $\cal R$ is expected to return the answer to the OW-CCA challenge. Upon receipt of this answer, $\cal M$ will forward it to his own challenger.
\qed \end{proof}

\subsubsection{Insufficiency of NM-CPA encryption}
\begin{lemma}
\label{lemma:NM-CPA-CS}
Assume there exists a key-preserving reduction $\cal R$ that converts an INV-CMA adversary $\cal A$ against confirmer signatures from the StE (CtEaS) paradigm to an NM-CPA adversary against the underlying encryption scheme. Then, there exists a meta-reduction $\cal M$ that NM-CPA  breaks the encryption scheme in question.
\end{lemma}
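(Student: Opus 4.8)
The plan is to mirror the meta-reduction for the OW-CCA case (Lemma \ref{lemma:OW-CCA-CS}), adapting it to the non-malleability game. Recall that in an NM-CPA game the adversary, after receiving a challenge ciphertext $c^\star$ encrypting an unknown $m_b$, must output a relation $R$ together with a vector of ciphertexts $\vec{c}$ (none equal to $c^\star$) such that the decryptions $\vec{x}$ are meaningfully related to $m_b$ via $R$ with probability noticeably better than against a fresh independent encryption. The meta-reduction $\cal M$ will be an NM-CPA adversary against $\Gamma$ that internally runs the key-preserving reduction $\cal R$, feeding $\cal R$ the very same public key $\Gamma.\pk$ and challenge ciphertext that $\cal M$ receives from its own NM-CPA challenger, and simulating to $\cal R$ an INV-CMA adversary $\cal A$ against the confirmer-signature construction.

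First I would set up the interface exactly as in the OW-CCA proof. $\cal M$ gets $(\Gamma.\pk, c^\star)$ from its NM-CPA challenger and launches $\cal R$ on the same $\Gamma.\pk$; since $\cal R$ is key-preserving this is consistent with $\cal R$'s expectation. The crucial difference from the OW-CCA setting is that in NM-CPA \emph{neither} $\cal M$ nor $\cal R$ has access to a decryption oracle, so the reduction $\cal R$ supplies only the signing, conversion, and verification oracle simulations that an INV-CMA adversary is entitled to query, and $\cal M$ relays these faithfully. $\cal M$, playing $\cal A$, picks two random challenge messages $m_0, m_1$; as in the StE argument it first checks (by asking $\cal R$ to convert $(c^\star,m_0)$ and $(c^\star,m_1)$) that $c^\star$ is not itself a valid confirmer signature on either message, aborting and exploiting the returned signature otherwise. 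Assuming the check passes, $\cal M$ outputs $m_0,m_1$ to $\cal R$ and receives a challenge confirmer signature $\mu_b$ which, with good probability, is valid on $m_b$ and is distinct from $c^\star$.

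The heart of the reduction is then to turn $\cal R$'s confirmer-signature challenge $\mu_b$ into a useful NM-CPA output for $\cal M$. In the StE case $\mu_b = \Gamma.\encrypt(\Sigma.\sign(m_b))$, so $\mu_b$ is directly a ciphertext $\neq c^\star$ whose decryption is a signature on $m_b$; $\cal M$ can therefore output the single-ciphertext vector $(\mu_b)$ together with the relation $R$ asserting that the decrypted plaintext is a valid $\Sigma$-signature on $m_b$ (rather than on $m_{1-b}$). This relation holds for $\mu_b$ but would hold only negligibly for an encryption of an unrelated message, giving $\cal M$ the required NM-CPA advantage. In the CtEaS case $\cal M$ extracts the inner ciphertext $e_b$ from $\mu_b=(c_b,e_b,\sigma_b)$, argues as before that $e_b \neq c^\star$ with overwhelming probability (since $m_0,m_1$ are chosen independently of $c^\star$, the latter does not decrypt to a string with suffix $m_0\|\Sigma.\pk$ or $m_1\|\Sigma.\pk$), and outputs $(e_b)$ with the relation testing which of $m_0,m_1$ appears as the embedded message.

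I expect the main obstacle to be discharging $\cal M$'s obligation \emph{without a decryption oracle}, which was the crux of the OW-CCA proof. There, $\cal M$ simply decrypted $\mu_b$ (resp.\ $e_b$) to read off $b$; here $\cal M$ must instead package a relation $R$ that can be \emph{decided} by the NM-CPA challenger's comparison experiment yet still distinguishes $b$. The care lies in choosing $R$ so that it (i) is efficiently computable, (ii) is satisfied with probability essentially that of $\cal R$'s successful simulation, and (iii) has negligible satisfaction probability under a re-randomized independent challenge, so that $\cal M$'s NM-CPA advantage is noticeable. A secondary subtlety is that $\cal M$ must simulate $\cal A$'s \emph{entire} view faithfully, including the fact that $\cal A$ is an outsider adversary who never needs the signer's secret key, so all of $\cal R$'s oracle answers can be forwarded transparently; one should check that $\cal R$'s conversion-oracle answers during the pre-challenge check do not themselves leak enough for $\cal R$ to detect that it faces a simulated rather than genuine $\cal A$. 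Once these relations are fixed, the advantage bookkeeping — that $\cal M$ succeeds whenever $\cal R$ provides a correct simulation and does not abort — follows the same lines as Lemma \ref{lemma:OW-CCA-CS}.
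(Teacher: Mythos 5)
Your proposal has a genuine gap: the relation you hand to the NM-CPA challenger does not actually relate your output ciphertext to the plaintext of \emph{your own} NM-CPA challenge, so it cannot yield any NM-CPA advantage. Recall that the NM-CPA advantage is the gap between the probability that $R$ holds between the challenge plaintext and the decryptions of your output vector, and the same probability when the challenge plaintext is independently re-sampled from the declared distribution. The relation you propose --- ``the decryption of $\mu_b$ is a valid $\Sigma$-signature on $m_b$ rather than $m_{1-b}$'' --- makes no reference to the plaintext hidden in $c^\star$; a relation that ignores its first argument is satisfied with identical probability in the real and re-sampled experiments, so the advantage is exactly zero. Relatedly, you never say which message distribution $\cal M$ submits to its challenger, yet in NM-CPA the distribution must be fixed \emph{before} the challenge is issued; your plan to pre-test $(c^\star,m_0)$ and $(c^\star,m_1)$ and to ``relay'' the challenge into $\cal R$ is imported from the OW-CCA argument and does not typecheck with this order of operations.

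The paper's proof runs the information flow in the opposite direction. $\cal M$ first uses $\cal R$ (playing the INV-CMA adversary) to obtain two confirmer signatures on $m_0\neq m_1$ and their \emph{conversions} $\sigma_0\neq\sigma_1$; it declares $D=\{\sigma_0,\sigma_1\}$ (resp.\ $\{r_0\|m_0\|\Sigma.\pk,\,r_1\|m_1\|\Sigma.\pk\}$ for CtEaS) as its NM-CPA distribution. When it receives its challenge $\mu^\star$, it feeds $\mu^\star$ (resp.\ embeds $e^\star$ into $(c_b,e^\star,\sigma_b)$) into $\cal R$'s conversion oracle, which $\cal R$ must answer as part of simulating the INV-CMA game; the answer acts as a plaintext-checking oracle identifying which of the two known plaintexts is encrypted. $\cal M$ then outputs a \emph{fresh} encryption of the bit-complement of that plaintext together with $R(m,m')=(m'=\overline{m})$, a relation that genuinely ties the output to the challenge plaintext, and aborts the INV-CMA game. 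You would need to restructure your argument along these lines; the OW-CCA template alone does not carry over.
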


\begin{proof}
Let $\cal R$ be the key-preserving reduction that reduces NM-CPA breaking
the encryption  underlying the construction to INV-CMA breaking the
construction (from StE or CtEaS). We construct an algorithm $\cal M$  that uses $\cal R$  to NM-CPA break the same encryption scheme by simulating an execution of the INV-CMA  adversary $\cal A$ against the construction.

Let $\Gamma$ be the encryption scheme $\cal M$ is trying to attack w.r.t. public key $\Gamma.\pk$. $\cal M$ will launch $\cal R$ over the same public key $\Gamma.\pk$. Next, $\cal M$ will simulate an INV-CMA adversary against the constructions:

\begin{itemize}
\item \emph{StE paradigm. }$\cal M$ (behaving as $\cal A$) queries $\cal R$ on two messages $m_0,m_1$ ($m_0 \neq m_1$) for  confirmer signatures. Let $\mu_0,\mu_1$ be the corresponding confirmer signatures resp. $\cal M$ further queries $(\mu_i,m_i)$, $i \in \{0,1\}$, for conversion. Let $\sigma_0,\sigma_1$ be the corresponding answers respectively. We assume that $\sigma_0 \neq \sigma_1$. If this is not the case, $\cal M$ repeats the experiment until this holds (if all confirmer signatures are encryptions of the same string $\sigma$, then the construction is not secure). At that point, $\cal M$ outputs $D=\{\sigma_0,\sigma_1\}$, to his NM-CPA challenger, as a distribution probability from which the messages will be drawn. He gets a
challenge encryption $\mu^\star$, of either $\sigma_0$ or $\sigma_1$ under $\Gamma.\pk$,
and is asked to produce a ciphertext $\mu'$ whose corresponding plaintext is meaningfully related to the decryption of $\mu^\star$. To solve his task, $\cal M$ queries for instance $(\mu^\star, m_0)$ for conversion. If the result is $\sigma_0$, i.e. $\mu^\star$ is a valid confirmer signature on $m_0$, then  $\cal M$ outputs $\Gamma.\encrypt_{\pk}(\overline{\sigma_0})$ ($\overline{m}$ refers to the bit-complement of $m$) and the relation $R$: $R(m,m')=(m'=\overline{m})$. Otherwise, $\cal M$ outputs $\Gamma.\encrypt_{\pk}(\overline{\sigma_1})$ and the same relation $R$. Finally $\cal M$ aborts the INV-CMA game.   

\item \emph{CtEaS paradigm. }Similarly, $\cal M$ queries $\cal R$ on $m_0,m_1$ ($m_0 \neq m_1$) for confirmer signatures. Let $\mu_0=(c_0,e_0,\sigma_0)$ and $\mu_1=(c_1,e_1,\sigma_1)$ be the corresponding confirmer signatures. $\cal M$ queries again $\mu_0,\mu_1$, along with the corresponding messages, for conversion. Let $r_0$ and $r_1$ be the the randomnesses used to generate the commitments $c_0$ and $c_1$ on $m_0$ and $m_1$ resp. $\cal M$ inputs $ {\cal D}=\{r_0\|m_0\|\Sigma.\pk,r_1\|m_1\|\Sigma.\pk\}$ to his own challenger as a distribution
probability from which the plaintexts will be drawn. $\cal M$ will
receive as a challenge encryption $e^\star$. At that point, $\cal M$ chooses a bit $b \hasard \{0,\}$ , and 
queries $\cal R$ on $\mu^\star=(c_b,e^\star,\sigma_b)$ and the message $m_b$
for conversion. Note that if $e^\star$ encrypts $r_b\|m_b\|\Sigma.\pk$, then $\mu^\star$ is a valid confirmer signature on $m_b$, otherwise it is invalid. Therefore, if the outcome of the query is not $\perp$, then $\cal M$ outputs  $\Gamma.\encrypt_{\pk}(\overline{r_b})$, where $\overline{r_b}$ refers to the bit-complement of $r_b$, and the relation $R$: $R(r,r')=(r'=\overline{r})$. Otherwise, $\cal M$ outputs $\Gamma.\encrypt_{\pk}(\overline{r_{1-b}})$ and the same relation
$R$. Finally $\cal M$ aborts the INV-CMA game.

\end{itemize}

Clearly, $\cal M$ solves correctly his NM-CPA challenge if $\cal R$ provides a correct simulation.
\qed \end{proof}

\subsubsection{Putting all together}

\begin{thm}
\label{thm:negative-key-preserving}
 Consider the security notions obtained
from pairing a security goal $\GOAL \in \{\OW,\IND,\NM\}$ and an attack model $\ATK \in \{\CPA,\PCA,\-\CCA\}$. The encryption scheme underlying the above constructions (from either StE or CtEaS) must be at least IND-PCA
secure, in case the considered reduction is key-preserving, in order to achieve INV-CMA secure confirmer signatures.
\end{thm}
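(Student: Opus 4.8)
The plan is to read the statement as a synthesis of the preceding negative results, organised around the poset of the nine notions goal-atk and the implications recorded in \cite{BellareDesaiPointchevalRogaway1998}. Concretely, the argument I would give splits the nine notions into those that do \emph{not} imply IND-PCA and those that do, shows that every notion in the first set has already been proven insufficient (at least for key-preserving reductions), and observes that IND-PCA is the weakest member of the second set. The relevant implications I would invoke are: for a fixed attack model NM $\Rightarrow$ IND $\Rightarrow$ OW; for a fixed goal CCA $\Rightarrow$ PCA $\Rightarrow$ CPA; and IND-CCA $\Leftrightarrow$ NM-CCA. With these the two sets are determined unambiguously.

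First I would identify the notions strictly below IND-PCA, i.e. those that do not imply it: OW-CPA, OW-PCA, OW-CCA, IND-CPA and NM-CPA. Each is dispatched by a result already established. The notions OW-CPA, OW-PCA and IND-CPA are ruled out unconditionally by Corollary \ref{corollary:ElGamalAttack} (the homomorphic/ElGamal attack); OW-CCA is ruled out, for key-preserving reductions, by Lemma \ref{lemma:OW-CCA-CS}; and NM-CPA likewise by Lemma \ref{lemma:NM-CPA-CS}. I would additionally point out that the two lemmas already cover all five notions by themselves: OW-CCA is the strongest member of the OW family (OW-CCA $\Rightarrow$ OW-PCA $\Rightarrow$ OW-CPA) and NM-CPA dominates IND-CPA and OW-CPA, while a key-preserving reduction basing INV-CMA on a weaker notion composes with the corresponding reduction of \cite{BellareDesaiPointchevalRogaway1998} into a key-preserving reduction basing INV-CMA on the dominating notion. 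Hence excluding the two maximal weak notions OW-CCA and NM-CPA already excludes everything beneath IND-PCA, and the unconditional Corollary merely upgrades three of these exclusions from ``key-preserving'' to ``arbitrary''.

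It then remains to observe that the surviving notions, namely IND-PCA, NM-PCA and IND-CCA ($=$ NM-CCA), all imply IND-PCA: NM-PCA by NM $\Rightarrow$ IND at a fixed attack, and IND-CCA by CCA $\Rightarrow$ PCA at a fixed goal. Thus IND-PCA is the weakest notion not excluded, which is exactly the claimed threshold; the matching positive direction (that IND-PCA suffices) is established separately and is not needed here. I expect the only delicate point to be the bookkeeping of the poset: one must check that $\{$OW-CPA, OW-PCA, OW-CCA, IND-CPA, NM-CPA$\}$ is \emph{precisely} the set of notions failing to imply IND-PCA, and --- if one wants the slick two-lemma version --- that composing with the reductions of \cite{BellareDesaiPointchevalRogaway1998} preserves the key-preserving property, since that is what lets the exclusion of OW-CCA and NM-CPA propagate downward to the remaining three.
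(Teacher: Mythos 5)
Your main line of argument is exactly the paper's proof: Corollary \ref{corollary:ElGamalAttack} disposes of OW-CPA, OW-PCA and IND-CPA unconditionally, Lemmas \ref{lemma:OW-CCA-CS} and \ref{lemma:NM-CPA-CS} dispose of OW-CCA and NM-CPA for key-preserving reductions, and IND-PCA is the weakest notion in the hierarchy of \cite{BellareDesaiPointchevalRogaway1998} that survives. That part is correct and needs no further comment.

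However, your ``slick two-lemma version'' --- the claim that ruling out OW-CCA and NM-CPA already rules out everything beneath them by composing with the implications of \cite{BellareDesaiPointchevalRogaway1998} --- is fallacious, and the paper devotes a remark immediately after the theorem to explaining why. The issue is not, as you suspect, whether the composition preserves the key-preserving property; it is that the meta-reduction's \emph{conclusion} is too weak to propagate downward. Concretely: a key-preserving reduction $\cal R$ from OW-PCA-breaking $\Gamma$ to INV-CMA-breaking does indeed yield one from OW-CCA-breaking $\Gamma$ to INV-CMA-breaking, and Lemma \ref{lemma:OW-CCA-CS} then produces a meta-reduction that OW-CCA-breaks $\Gamma$. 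But this is no contradiction when $\Gamma$ is OW-PCA secure yet \emph{not} OW-CCA secure --- precisely the separation case --- so the existence of a useful reduction to OW-PCA is not excluded by the lemma. The lemmas only show that a reduction to notion $X$ is useless \emph{when $\Gamma$ actually satisfies $X$}; they say nothing about schemes satisfying only a strictly weaker notion. Consequently the unconditional ElGamal/homomorphic attack is not an optional ``upgrade'' of three exclusions from key-preserving to arbitrary reductions: it is the only thing ruling out OW-CPA, OW-PCA and IND-CPA at all, and the theorem would not follow from the two lemmas alone. The paper reinforces this with the VES example, where OW-CCA is similarly excluded yet secure instantiations from OW-PCA encryption exist.
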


\begin{proof}
Corollary \ref{corollary:ElGamalAttack} rules out OW-CPA, OW-PCA, and IND-CPA encryption. Moreover, Lemma \ref{lemma:OW-CCA-CS} and  Lemma \ref{lemma:NM-CPA-CS}  rule out OW-CCA and NM-CPA encryption resp. The next notion to be considered is IND-PCA. 
\qed\end{proof}

\begin{remark}
Note that the notions OW-CPA, OW-PCA, and IND-CPA are discarded regardless of the used reduction. In fact, we managed to exhibit an encryption scheme (ElGamal's encryption) which meets all those notions, but leads to insecure confirmer signatures when used in the StE or CtEaS paradigms.  
\end{remark}

\begin{remark}
The step of ruling out OW-CPA, OW-PCA, and IND-CPA is necessary although we have proved the insufficiency of stronger notions, namely OW-CCA and NM-CPA. In fact, suppose there is an efficient ``useful'' key-preserving reduction $\cal R$ (i.e. $\R$ solves a presumably hard problem) which reduces OW-PCA breaking a cryptosystem $\Gamma$ underlying a StE or CtEaS construction to INV-CMA breaking the construction itself. Then there exists an efficient key-preserving reduction say ${\cal R}'$ that reduces OW-CCA breaking $\Gamma$ to INV-CMA breaking the construction (OW-CCA is stronger than OW-PCA). This does not contradict Lemma \ref{lemma:OW-CCA-CS} as long as $\Gamma$ is not OW-CCA secure (although it is OW-PCA secure). In other terms, since there are separations between OW-CCA and OW-PCA (same for the other notions), we cannot apply the insufficiency of OW-CCA (NM-CPA) to rule out the weaker notions. 

\noindent This necessity will become more apparent in Section \ref{sec:summary} as we mention how to rule out OW-CCA secure encryption in constructions of verifiably encrypted signatures (VES) from StE, yet there exists many realizations of secure VES realizing StE that use OW-PCA encryption (e.g. ElGamal's encryption in bilinear groups) as a building block, e.g. the VES \cite{BonehGentryLynnShacham2003}.
\end{remark}

\paragraph{On the resort to meta-reductions }It is tempting to envisage stronger techniques than meta-reductions in order to achieve the aforementioned negative results. In fact, meta-reductions give only partial results as they consider a specific class of reductions, e.g. key-preserving reductions.

\noindent For instance, one might try to adapt existing results that separate security notions in encryption, e.g. \cite{BellareDesaiPointchevalRogaway1998}. The problem is that the invisibility adversary in confirmer signatures does not have explicit access to a decryption oracle, i.e. the adversary gets the decryption of a ciphertext only if the latter is part of a valid confirmer signature on some message. Therefore, the separation techniques used in encryption cannot be straightforwardly used in case of confirmer signatures.

\noindent Another possibility consists in building simple counter examples of encryption schemes which are OW-CCA (NM-CPA) secure but lead to insecure confirmer signatures when used in the StE or CtEaS paradigms. Again, it seems difficult to achieve results using this approach without assuming special security properties on the used digital signature scheme, i.e. consider signature schemes that are not strongly unforgeable. 

\noindent The merit of meta-reductions lies in achieving separation results regardless of the used digital signature. 

\noindent We will see in the next subsection how to extend our negative results if the encryption underlying the constructions satisfies further security properties.

\subsection{Extension to arbitrary reductions}
\label{subsec:arbitraryReductions}
To extend the results of the previous paragraph to arbitrary
reductions, we first define the notion of \emph{non-malleability of an encryption scheme key generator} through the following two
games:

\noindent In \textbf{Game 0}, we consider an algorithm $\cal R$ trying to
break an encryption scheme $\Gamma$, w.r.t. a public key $\Gamma.\pk$, in the sense of NM-CPA (or OW-CCA) using an
adversary $\A$ which solves a problem $A$, perfectly reducible to OW-CPA breaking the
encryption scheme $\Gamma$. In this game, $\R$ launches $\A$ over his own
challenge key $\Gamma.\pk$ and some other parameters chosen freely by $\R$ (according to the specifications of $\A$). We
will denote by $\adv_0({\cal R}^{\cal A})$ the success
probability of $\R$ in such a game, where the probability is taken over
the random tapes of both $\R$ and $\A$. We further define
$\success_{\Gamma}^{\zgame}({\A})=\max_{\R}\adv_0({\R}^{\A})$ to be the success in
\textbf{Game 0} of the best reduction $\R$ making the best possible use of
the adversary $\A$. Note that the goal of \textbf{Game 0} is to include
all key-preserving reductions $\R$ from NM-CPA (or OW-CCA)
breaking the encryption scheme in question to solving a problem $A$, which is
reducible to OW-CPA breaking the same encryption scheme.

\noindent In \textbf{Game 1}, we consider the same entities as in \textbf{Game
0}, with the exception of providing $\R$ with, in addition to $\A$, a
OW-CPA oracle (i.e. a decryption oracle corresponding to $\Gamma$) that he can
query w.r.t. any public key $\Gamma.\pk' \neq \Gamma.\pk$, where $\Gamma.\pk$
is the challenge public key of $\R$. Similarly, we define $\adv_1({\R}^{\A})$ to be the success of $\R$ in such a game, and
$\success_{\Gamma}^{\ogame}({\A})=\max_{\R}\adv_1({\R}^{\A})$ the success in
\textbf{Game 1} of the reduction $\R$ making the best possible use of the
adversary $\A$ and of the decryption (OW-CPA) oracle.

\begin{definition}
An encryption scheme $\Gamma$ is said to have a non-malleable key generator if 
$$
\Delta= max_{\A} \left | \success_{\Gamma}^{\ogame}({\A}) - \success_{\Gamma}^{\zgame}({\A})\right |
$$
is negligible in the security parameter.
\end{definition}
This definition informally means that an encryption scheme has a non-malleable key
generator if NM-CPA (or OW-CCA) breaking it w.r.t. a key $\pk$ is no easier
when given access to a decryption (OW-CPA) oracle w.r.t. any public key $\pk'\neq \pk$.

We generalize now our impossibility results to arbitrary reductions as follows.

\begin{thm}
\label{thm:arbitraryreductions}
Theorem \ref{thm:negative-key-preserving} is still valid when considering arbitrary reductions, provided  the encryption scheme underlying the constructions (from either StE or CtEaS) has a non-malleable key
generator.
\end{thm}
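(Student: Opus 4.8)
The plan is to classify every candidate reduction according to the public key over which it launches the invisibility adversary $\A$, and to show that, under the non-malleable key generator assumption, the only ``new'' behaviour an arbitrary reduction can exhibit over and above a key-preserving one --- namely launching $\A$ over a foreign key $\Gamma.\pk' \neq \Gamma.\pk$ --- can be internalised at negligible cost. Since Corollary~\ref{corollary:ElGamalAttack} already rules out OW-CPA, OW-PCA, and IND-CPA encryption for \emph{every} reduction (the homomorphic re-encryption attack is oblivious to how a reduction is built), it suffices to upgrade Lemma~\ref{lemma:OW-CCA-CS} and Lemma~\ref{lemma:NM-CPA-CS} from key-preserving to arbitrary reductions; the conclusion that IND-PCA is the first surviving notion then follows verbatim as in the proof of Theorem~\ref{thm:negative-key-preserving}.

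First I would make precise the role played by \textbf{Game 0} and \textbf{Game 1}. Fix the encryption scheme $\Gamma$ to be attacked w.r.t.\ $\Gamma.\pk$ in the OW-CCA (resp.\ NM-CPA) sense, and let $\A$ be the ideal INV-CMA adversary against the construction, whose task --- by the very shape of StE and CtEaS --- is perfectly reducible to OW-CPA breaking $\Gamma$. I would then argue the following internalisation claim: any arbitrary reduction $\R$ using $\A$ yields a \textbf{Game 1} reduction $\R'$ with $\adv_1(\R'^{\A}) = \adv(\R^{\A})$. The reduction $\R'$ simply runs $\R$; whenever $\R$ launches an instance of $\A$ over the challenge key $\Gamma.\pk$, $\R'$ relays it to the genuine $\A$, and whenever $\R$ launches $\A$ over a foreign key $\Gamma.\pk' \neq \Gamma.\pk$, $\R'$ simulates that instance itself. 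The latter simulation is faithful because producing $\A$'s output over $\Gamma.\pk'$ amounts, through the perfect reduction of INV-CMA to OW-CPA, to a decryption computation under $\Gamma.\pk'$ --- exactly the foreign-key OW-CPA oracle that \textbf{Game 1} grants to $\R'$. All of $\A$'s environment queries (sign, convert, confirm, deny) remain internal to $\R$ and are answered by $\R$'s own logic, so nothing else is needed. Consequently $\adv(\R^{\A}) \le \success_{\Gamma}^{\ogame}(\A)$ for every arbitrary reduction $\R$.

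Next I would chain the two remaining inequalities. The non-malleable key generator hypothesis gives $\success_{\Gamma}^{\ogame}(\A) \le \success_{\Gamma}^{\zgame}(\A) + \Delta$ with $\Delta$ negligible, so it only remains to bound $\success_{\Gamma}^{\zgame}(\A)$, the best advantage attainable by a reduction that launches $\A$ solely over $\Gamma.\pk$ --- that is, by a key-preserving reduction in the sense of Definition~\ref{def:key-preserving-reductions}. Applying Lemma~\ref{lemma:OW-CCA-CS} (resp.\ Lemma~\ref{lemma:NM-CPA-CS}) to the optimal such reduction produces a genuine, oracle-free meta-reduction $\mathcal{M}$ that OW-CCA (resp.\ NM-CPA) breaks $\Gamma$ with advantage $\adv(\mathcal{M}) \ge \success_{\Gamma}^{\zgame}(\A)$ up to a negligible simulation error; here $\mathcal{M}$ fabricates $\A$ over the challenge key using its own challenger (the decryption oracle on ciphertexts distinct from the challenge in the OW-CCA case, and the distribution/related-ciphertext mechanism in the NM-CPA case), exactly as in those proofs. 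Putting the three bounds together, $\adv(\R^{\A}) \le \adv(\mathcal{M}) + \Delta + \mathrm{negl}$; if $\Gamma$ is OW-CCA (resp.\ NM-CPA) secure and has a non-malleable key generator, the right-hand side is negligible, so no arbitrary reduction can be useful, which is precisely the extension sought.

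I expect the internalisation claim to be the delicate step, for two reasons. First, one must check that the \emph{whole} interaction of the ideal $\A$ over a foreign key --- not merely the final distinguishing decision --- collapses to OW-CPA (decryption) access under that key; this is exactly what ``perfectly reducible to OW-CPA breaking $\Gamma$'' buys us, but it has to be invoked carefully so that the internal environment queries are seen to be answerable by $\R$ itself rather than by an oracle. Second, the accounting must keep the foreign-key decryption oracle confined to the \emph{analysis} (\textbf{Game 1}) and never smuggle it into the final meta-reduction $\mathcal{M}$, whose efficiency and oracle discipline are what the contradiction with the security of $\Gamma$ ultimately rests upon; the non-malleable key generator assumption is the sole device licensing the passage from the oracle-aided \textbf{Game 1} back to the oracle-free \textbf{Game 0}.
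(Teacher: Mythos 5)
Your proposal is correct and follows essentially the same route as the paper: the ``internalisation claim'' is precisely the paper's auxiliary lemma bounding $\adv(\R)$ by $\success_{\Gamma}^{\ogame}(\A)$ (foreign-key instances of $\A$ being answered via the Game 1 OW-CPA oracle, by perfect reducibility of invisibility to OW-CPA), and the subsequent chain $\adv(\R)\le\success_{\Gamma}^{\ogame}(\A)\le\success_{\Gamma}^{\zgame}(\A)+\Delta\le\min\{\success(\NM\dash\CPA[\Gamma]),\success(\OW\dash\CCA[\Gamma])\}+\Delta$ is exactly the paper's argument. Your added care about keeping the foreign-key oracle confined to the analysis and about the environment queries is a faithful elaboration rather than a departure.
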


To prove this theorem, we first need the following
lemma (similar to Lemma 6 of \cite{PaillierVillar2006})

\begin{lemma}
Let $\cal A$ be an adversary solving a problem $A$, reducible to OW-CPA breaking
an encryption scheme $\Gamma$, and let $\cal R$ be an arbitrary
reduction $\cal R$ that NM-CPA (OW-CCA) breaks $\Gamma$ given
access to $\cal A$. We have $\adv({\R}) \leq \success_{\Gamma}^{\ogame}({\A})$.
\end{lemma}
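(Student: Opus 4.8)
The plan is to show that every arbitrary reduction can be \emph{repackaged} as a \textbf{Game 1} reduction of the same advantage, whereupon the bound is immediate from the definition $\success_{\Gamma}^{\ogame}({\A})=\max_{\R}\adv_1({\R}^{\A})$. The single extra freedom that an arbitrary $\R$ has over a key-preserving one is that it may launch $\A$ over public keys \emph{other} than its own challenge key $\Gamma.\pk$. The whole argument hinges on one observation: any such ``off-key'' invocation of $\A$ can be answered \emph{without} the genuine adversary, precisely because $A$ is perfectly reducible to OW-CPA breaking $\Gamma$. Indeed, a decryption oracle relative to a foreign key $\Gamma.\pk'$ \emph{is} a OW-CPA breaker for that key, and OW-CPA breaking $\Gamma$ solves $A$; chaining these gives a way to reproduce $\A$'s answers using only the oracle.

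Concretely, I would fix an arbitrary reduction $\R$ that NM-CPA (OW-CCA) breaks $\Gamma$ w.r.t. the challenge key $\Gamma.\pk$ while making black-box use of $\A$, and construct a \textbf{Game 1} reduction $\R'$ that runs $\R$ internally, carrying out all of $\R$'s own computation verbatim and intercepting only $\R$'s calls to $\A$. Each such call is routed according solely to the key on which $\A$ is launched. When $\R$ invokes $\A$ over the challenge key $\Gamma.\pk$, $\R'$ relays the invocation to its own copy of $\A$; this is legitimate since in \textbf{Game 1} the reduction is granted access to $\A$ launched over $\Gamma.\pk$. When $\R$ invokes $\A$ over a key $\Gamma.\pk' \neq \Gamma.\pk$, $\R'$ does not call $\A$ at all: it runs the perfect reduction from $A$ to OW-CPA breaking $\Gamma$ and discharges that reduction's decryption queries with its own OW-CPA oracle queried at $\Gamma.\pk'$, which \textbf{Game 1} permits exactly because $\Gamma.\pk' \neq \Gamma.\pk$.

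Because the reduction from $A$ to OW-CPA is perfect (and the solutions to $A$ it produces are the canonical, correct ones), the answers $\R'$ synthesizes for off-key invocations are distributed identically to genuine answers of $\A$, while on-key invocations are serviced by the real $\A$. Hence $\R'$ reproduces exactly the view $\R$ would see against a real adversary, and $\R$'s output, together with its success in NM-CPA (OW-CCA) breaking $\Gamma$, is preserved: $\adv(\R) = \adv_1({\R'}^{\A})$. Since $\R'$ is a legitimate \textbf{Game 1} reduction, taking the maximum over all such reductions yields $\adv_1({\R'}^{\A}) \leq \success_{\Gamma}^{\ogame}({\A})$, which is the claim.

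The main obstacle, and the reason the case split is unavoidable, is the restriction that the \textbf{Game 1} decryption oracle may be queried only at keys different from $\Gamma.\pk$: off-key invocations of $\A$ are exactly the ones we can and must discharge through this oracle, whereas on-key invocations have no oracle available and must be handled by the genuine adversary supplied by the game. I would take particular care that the accounting remains faithful when $\R$ interleaves many invocations over different and possibly adaptively chosen keys, including rewinds; since each invocation is dispatched independently to either the real or the simulated $\A$ according only to its key, this bookkeeping goes through with no further assumption. Note finally that non-malleability of the key generator is \emph{not} needed here — it enters only afterwards, when this lemma is combined with the key-preserving bounds to prove Theorem~\ref{thm:arbitraryreductions}.
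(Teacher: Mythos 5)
Your proposal is correct and follows essentially the same route as the paper: both construct a \textbf{Game 1} algorithm that runs $\R$ internally, relays on-key invocations of $\A$ to the genuine adversary oracle, and discharges off-key invocations via the \textbf{Game 1} decryption (OW-CPA) oracle at $\Gamma.\pk'\neq\Gamma.\pk$ using the perfect reducibility of $A$ to OW-CPA breaking $\Gamma$, then concludes by maximality of $\success_{\Gamma}^{\ogame}({\A})$. Your write-up merely makes the bookkeeping and the perfectness of the off-key simulation more explicit than the paper does.
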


\begin{proof}
We will construct an algorithm $\cal M$ that plays \textbf{Game 1} with
respect to a perfect
oracle for $\cal A$ and succeeds in breaking the NM-CPA (OW-CCA) security of
$\Gamma$ with the same success probability of $\cal R$. Algorithm $\cal M$ gets a challenge
w.r.t. a public key $\pk$ and launches $\cal R$ over the same challenge and
the same public key. If $\cal R$ calls $\cal A$ on $\pk$, then $\cal M$ will call his own oracle for
$\cal A$. Otherwise, if $\cal R$ calls $\cal A$ on $\pk' \neq \pk$, $\cal M$
will invoke his own decryption oracle for $\pk'$ (OW-CPA oracle) to answer the
queries. In fact, by assumption, the problem A is reducible to OW-CPA breaking $\Gamma$. Finally,
when $\cal R$ outputs the result to $\cal M$, the latter will output the same
result to his own challenger.
\qed\end{proof}
The proof of Theorem \ref{thm:arbitraryreductions} is
similar to that of Theorem 5 in \cite{PaillierVillar2006}:
\begin{proof}
We first remark that the invisibility of the constructions in question is perfectly reducible to OW-CPA breaking the encryption scheme
underlying the construction.

\noindent Next, we note  that the advantage of the meta-reduction $\cal M$ in the proof of Lemma
\ref{lemma:NM-CPA-CS} (Lemma \ref{lemma:OW-CCA-CS}) is at least the same as the advantage of any
key-preserving reduction $\cal R$ reducing the invisibility of a given
confirmer signature to the NM-CPA (OW-CCA) security of its underlying encryption scheme
$\Gamma$. For
instance, this applies to the reduction making the best use of an invisibility
adversary $\cal A$ against the construction. Therefore we have: $\success_{\Gamma}^{\zgame}({\A}) \leq \success(\NM\dash\CPA[\Gamma])$,
 where $\success(\NM\dash\CPA[\Gamma])$ is the success of breaking $\Gamma$ in the
 NP-CPA sense. We also have $\success_{\Gamma}^{\zgame}({\A}) \leq \success(\OW\dash\CCA[\Gamma])$.
\noindent Now, Let $\cal R$ be an arbitrary reduction from NM-CPA (OW-CCA) breaking an encryption scheme $\Gamma$, with a non-malleable key
generator, to INV-CMA breaking the construction (using the same encryption scheme $\Gamma$). We have 
\begin{eqnarray*}
\adv({\R}) & \leq & \success_{\Gamma}^{\ogame}({\A})\\
               & \leq & \success_{\Gamma}^{\zgame}({\A}) + \Delta \\
               & \leq & \min \left\{\success(\NM\dash\CPA[\Gamma],\success(\OW\dash\CCA[\Gamma])\right \} + \Delta
\end{eqnarray*}
since $\Delta$ is negligible, if $\Gamma$ is NM-CPA (OW-CCA) secure, then the advantage of $\cal R$ is also negligible.

\qed\end{proof}

\paragraph{Existence of encryption with a non-malleable key generator}It is not difficult to see that factoring or RSA based encryption schemes are the first candidates to have a non-malleable key generator. In fact, if the public key in these schemes consists only of an RSA modulus $n$, then factorization of other moduli will not help factoring $n$. Examples of such schemes are countless and include OAEP \cite{BellareRogaway1993}, REACT-RSA
\cite{OkamotoPointcheval2001}, PKCS\#1 v2.2 \cite{PKCS1}, Rabin and related systems (Williams \cite{Williams1980},
Blum-Goldwasser \cite{BlumGoldwasser1984}, Chor-Goldreich\cite{ChorGoldreich1984}), the EPOC family, Paillier \cite{Paillier1999}, etc.

\noindent Discrete-log based encryption schemes fail however into this category. Actually, a discrete-log oracle w.r.t. some generator of a given group is sufficient to extract the discrete-log of any element (w.r.t. any element) in this group. Therefore, extension of the above separation results is not straightforward for these schemes; it must use the specific properties of the used encryption scheme. We provide an illustration of such an extension in the next subsection.

\subsection{Analysis of Damg\r{a}rd-Pedersen's \cite{DamgardPedersen1996} undeniable signatures}
\label{app:damgardPedersen}

Let $m \in \{0,1\}^\star$ be an arbitrary message. Damg\r{a}rd-Pedersen's confirmer signature consists of the following procedures:

\begin{description}
\item \textbf{Setup ($\setup$). }On input the security parameter $\kappa$, generate 
a $k$-bit prime $t$ and a prime $p \equiv 1 \bmod t$. Furthermore, select a
collision-resistant hash function $H$ that maps arbitrary-length messages to $\bbbz_t$. 
\item \textbf{Key generation ($\keygen$). } Generate $g$ of
order $t$, $x \in \bbbz_t^\times$, and $h=g^x \bmod p$. Furthermore, select a
generator $\alpha$ of $\bbbz_t^\times$ and $\nu \in \{0,1,\ldots,t-1\}$, and compute $\beta = \alpha^\nu \bmod t$. The public key is
 $\pk=(p,t,g,h,\alpha,\beta)$ and the private key is $(x,\nu)$.
\item \textbf{Signature ($\sign$). }The signer first computes an ElGamal
  signature $(s,r)$ on $m$, i.e. compute $r = g^b \bmod p$ for some $b
  \hasard \bbbz_t^\times$, then compute $s$ as $h(m)=rx+bs\bmod t$. Next, he
  computes an ElGamal encryption  $(E_1=\alpha^\rho,E_2=s\beta^\rho) \bmod t$, for $\rho
  \hasard \bbbz_{t-1}$, of $s$. The undeniable signature on $m$ is the triple $(E_1,E_2,r)$.
\item \textbf{Confirmation/Denial protocol ($\confirm$/$\deny$). }To confirm (deny) a purported
  signature $(E_1,E_2,r)$ on a certain message $m$, the signer issues a $\zk\pok$
 of the language: (see  \cite{DamgardPedersen1996} )
$$
\left\{ s  \colon \DL_{\alpha}(\beta)
= \DL_{E_1}(E_2\cdot s^{-1}) \wedge g^{h(m)}h^{-r} = (\neq) r^s  \right\}
$$
\end{description}

In \cite{DamgardPedersen1996}, the authors prove that the above signatures are unforgeable if the underlying ElGamal signature is also unforgeable, and they conjecture that the signatures meet the following invisibility notion if the problem DDH is hard:

\begin{definition}[Signature indistinguishability]
\label{def:sigInd}
It is defined through the
following game between an attacker $\cal A$ (\emph{a distinguisher}) and his challenger $\cal R$.
\begin{description}
 \item \textbf{Phase 1} after $\cal A$ gets the public key of the scheme $\pk$,
  from $\cal R$, he starts issuing \emph{status requests} and \emph{signature requests}. In a
status request, $\cal A$ produces a pair $(m,z)$, and receives a $1$-bit
answer which is $1$ iff $z$ is a valid undeniable signature on $m$
w.r.t. $\pk$. In a signature request, $\cal A$ produces a message $m$ and
receives an undeniable signature $z$ on it w.r.t. $\pk$.

\item \textbf{Challenge} Once $\cal A$ decides that \textbf{Phase 1} is over,
  he outputs a message $m$ and receives a string $z$ which is either a
  valid undeniable signature on $m$ (w.r.t $\pk$) or a randomly chosen string from the signature space.

\item \textbf{Phase 2} $\cal A$ resumes adaptively making the previous types
  of queries, provided that $m$ does not occur in any request, and that $z$ does
  not occur in any status request. Eventually, $\cal A$ will output a bit.
\end{description}
Let $p_r$, resp. $p_s$ be the probability that $\cal A$ answers $1$ in the
real, resp. the simulated case. Both probabilities are taken over the
random coins of both $\cal A$ and $\cal R$. We say that the signatures are
indistinguishable if $|p_r-p_s|$ is a negligible function in the security parameter.  
\end{definition}

It is clear that the Damg\r{a}rd-Pedersen signatures do not provide the INV-CMA notion according to Subsection \ref{subsec:invisibilityBreach}. In the rest of this section, we provide evidence that the Damg\r{a}rd-Pedersen signatures
are unlikely to meet the above indistinguishability notion under the DDH assumption. 

\begin{lemma}
Assume there exists a key-preserving reduction $\cal R$ that uses an
indistinguishability adversary
$\cal A$ (in the sense of Definition \ref{def:sigInd}) against the above scheme to solve the DDH problem. Then, there exists
an efficient meta-reduction $\cal M$ that solves the DDH problem.
\end{lemma}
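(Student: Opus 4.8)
The plan is to reproduce the meta-reduction template of Lemmas \ref{lemma:OW-CCA-CS} and \ref{lemma:NM-CPA-CS}, but now the oracle that the reduction $\R$ cannot honestly service is the \emph{status} oracle of Definition \ref{def:sigInd}. Concretely, ${\cal M}$ receives a DDH instance $(\alpha,\beta,u,v)$ in $\bbbz_t^\times$ and must decide whether $v=u^{\nu}$, where $\nu=\DL_\alpha(\beta)$. It launches $\R$ on this very instance. Because $\R$ is key-preserving (Definition \ref{def:key-preserving-reductions}), it installs $(\alpha,\beta)$ as the encryption part of the verification key handed to the simulated distinguisher $\A$ and, crucially, never learns the decryption exponent $\nu$. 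The entire leverage of the meta-reduction is that, lacking $\nu$, $\R$ cannot decrypt, yet to be a sound reduction it is obliged to answer status requests faithfully. Note this sidesteps the naive re-encryption attack of Subsection \ref{subsec:invisibilityBreach}, which is blocked here because Definition \ref{def:sigInd} forbids the challenge message from reappearing in Phase~2; ${\cal M}$ will instead operate entirely in Phase~1 on a fresh message.

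First I would let ${\cal M}$, impersonating $\A$, ask $\R$'s signing oracle for a signature on an arbitrary fixed message $m'$. This is answerable by $\R$ from the signing key $x$ it chose itself together with the public $(\alpha,\beta)$ (encryption of $s$ needs only $\alpha,\beta$, not $\nu$), so ${\cal M}$ obtains a genuine signature $z''=(E_1,E_2,r)$ on $m'$ whose unknown ElGamal-plaintext $s=E_2/E_1^{\nu}$ satisfies $g^{h(m')}=h^{r}r^{s}\bmod p$. Next, ${\cal M}$ folds its DDH challenge into the encryption layer by setting
$$
z'=\bigl(E_1\cdot u,\ E_2\cdot v,\ r\bigr).
$$
The decryption of $z'$ equals $E_2 v/(E_1 u)^{\nu}=s\cdot(v/u^{\nu})$, which is $s$ exactly when $v=u^{\nu}$ and is $s\cdot(v/u^{\nu})\neq s\bmod t$ otherwise. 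Since $t$ is prime, $s\in\bbbz_t^\times$, and $r$ has order $t$, the verification relation pins the plaintext uniquely modulo $t$; hence $z'$ is a valid signature on $m'$ \emph{iff} $(\alpha,\beta,u,v)$ is a Diffie-Hellman tuple. The point is that ${\cal M}$ assembles $z'$ knowing neither $s$ nor $\nu$.

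Then ${\cal M}$ submits the status request $(m',z')$ to $\R$ and outputs \accept\ (``DH tuple'') iff $\R$ answers that $z'$ is valid, aborting the emulation of $\A$ afterwards. This request is admissible: it lives in Phase~1, it uses the fresh message $m'$ rather than a challenge message, and $z'$ is not a challenge string, so it is precisely the kind of query a legitimate distinguisher is entitled to issue. A reduction that genuinely converts $\A$ into a DDH solver must furnish a status oracle indistinguishable from the real one, hence must return the correct bit on $(m',z')$; and by the previous paragraph that correct bit \emph{is} the DDH bit. Moreover $\R$ cannot shortcut the query by recalling that it issued $z''$: validity of $z'$ still amounts to deciding whether $v=u^{\nu}$, which is the DDH question itself. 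Thus ${\cal M}$ inherits $\R$'s success, i.e. $\adv({\cal M})\geq\adv(\R)-\mathrm{negl}(\kappa)$.

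The main obstacle, exactly as in the earlier meta-reductions, is to argue rigorously that $\R$ is \emph{forced} to answer this single status request correctly even though it cannot decrypt. The resolution rests on the key-preserving hypothesis: $\R$'s challenge key is $(\alpha,\beta)$, it is given neither $\nu$ nor a decryption oracle for it, so any correct status-answering behaviour of $\R$ is \emph{ipso facto} a DDH-deciding procedure, which is all ${\cal M}$ harvests. Two routine points make this airtight: that ${\cal M}$'s emulation of $\A$ remains consistent under any rewinding $\R$ performs (it does, since ${\cal M}$ plays a fixed-coin distinguisher issuing only the one signing request and the one status request), and that the false-positive probability---a random plaintext $s\cdot(v/u^{\nu})$ accidentally satisfying the signature equation for the fixed $(m',r)$---is negligible, which follows from the primality of $t$ pinning the plaintext modulo $t$. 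Consequently, hardness of DDH rules out such a key-preserving $\R$, giving the announced evidence that the Damg\r{a}rd-Pedersen signatures are unlikely to meet the indistinguishability notion under DDH.
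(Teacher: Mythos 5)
Your proposal is correct and follows essentially the same route as the paper's proof: obtain a genuine signature $(E_1,E_2,r)$ on an arbitrary message via a signature request, multiply the ElGamal components by the DDH challenge pair to form $(E_1\cdot u,\,E_2\cdot v,\,r)$, and submit this as a status request, whose answer is the DDH bit. The additional discussion of admissibility, forced correctness of the status oracle, and false positives elaborates points the paper leaves implicit, but the underlying argument is identical.
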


\begin{proof}
Let $\cal R$ be the key-preserving reduction that reduces the DDH problem to
distinguishing the Damg\r{a}rd-Pedersen signatures in the sense of Definition
\ref{def:sigInd}. We will construct an algorithm $\cal M$ that uses $\cal R$
to  solve the DDH problem by simulating a distinguisher against the
signatures.

Let $(\alpha,\beta,c_1=\alpha^a,c_2=\beta^b) \in \bbbz_t^\times \times \bbbz_t^\times$ be the DDH instance $\cal
M$ is asked to solve. $\cal M$ acting as a distinguisher of the signature will
make a signature request on an arbitrary message $m$. Let $(E_1,E_2,r)$ be the
answer to such a query. $\cal M$ will make now a status query on $(c_1\cdot
E_1,c_2\cdot E_2,r)$ and the message $m$. $(\alpha,\beta,c_1,c_2)$ is a yes-Diffie-Hellman
instance iff the result of the last query is the confirmation that $(c_1\cdot
E_1,c_2\cdot E_2,r)$ is a signature on $m$.
\qed \end{proof}

\paragraph{Extension to arbitrary reductions. } We cannot employ, in this case, the non-malleability of the key generator technique discussed above. In fact, this would correspond  to
assuming that the DDH problem, w.r.t. a given public key $\pk$, is difficult even when given access to a CDH oracle w.r.t. any $\pk' \neq \pk$, which is untrue.

\noindent  However, we can see that the result still holds true if $\cal R$ feeds the adversary $\cal A$ with the confirmer key $(\alpha',\beta')=(\alpha^\ell,\beta^\ell)$ for some $\ell$ not necessarily known to $\cal M$. In fact, $\cal M$ (or $\cal A$) checks that $(\alpha',\beta',\alpha,\beta)$ is a Diffie-Hellman tuple by first making a signature request on some message, then making a status request on the same message and on the product of the corresponding confirmer signature and the tuple $(\alpha,\beta,1)$ (the answer to such a status request should be the execution of the confirmation oracle). Next, $\cal A$ checks his DDH instance  $(\alpha,\beta,c_1,c_2)$ by using the same technique, namely first make a signature request on some message, followed by a status request on the same message and the product of the returned signature with the tuple $(c_1,c_2,1)$. The answer to this query is sufficient for $\cal M$ to conclude.

Finally, Damg\r{a}rd-Pedersen's undeniable signatures can be repaired so as to provide invisibility by producing  ElGamal's signature on the message to be signed concatenated with the used encapsulation $E_1$. In Section \ref{subsec:efficientStE-CDCS}, we will see that this repair is a special instance of our new StE paradigm for CDCS.

\section{Positive Results for CDCS}
\label{sec:positiveCDCS}
The above negative results are due, to a large extent, to the \emph{strong forgeability} of the confirmer signatures from StE or CtEaS. I.e. a polynomial-time adversary is able to produce, given a valid
confirmer signature on some message, another valid confirmer signature on the same message
without the help of the signer; the attacker requests the conversion of the given confirmer signature and then obtain a new confirmer signature on the same message by simply re-encrypting the response (note that a conversion query is not necessary if the used encryption scheme is homomorphic according to Subsection \ref{subsec:invisibilityBreach}). Therefore, any
reduction $\cal R$ from the invisibility of the construction to the security of the underlying encryption scheme will need more than a list of records maintaining the queried messages along with the corresponding confirmer and
digital signatures. Thus the insufficiency of notions like IND-CPA. In
\cite{CamenischMichels2000,GentryMolnarRamzan2005,WangBaekWongBao2007}, the authors stipulate that the given reduction
would need a decryption oracle (of the encryption scheme) in order to
handle the queries made by the INV-CMA attacker $\cal A$, which makes the
invisibility of the constructions rest on the IND-CCA security of the
encryption scheme. In this section, we remark that the queries made by $\cal A$ are not
completely uncontrolled by $\cal R$; they are encryptions of some
data already released by $\cal R$, provided the digital signature scheme is
strongly unforgeable, and thus known to her. Therefore, a
plaintext checking oracle suffices to handle those queries. 

The rest of this section will be organized as follows. In Subsection \ref{subsec:IND-PCA-sufficient}, we show that StE and CtEaS can thrive on IND-PCA encryption provided the used signature scheme is SEUF-CMA secure. Next, in Subsections \ref{subsec:efficientStE-CDCS} \& \ref{subsec:efficientCtEaS-CDCS}, we propose efficient variants of secure StE and CtEaS respectively which rest on IND-CPA encryption.

\subsection{Sufficiency of IND-PCA encryption}
\label{subsec:IND-PCA-sufficient}

\begin{thm}[StE paradigm]
\label{thm:INDPCASufficient-StE}
Given $(t,q_s,q_v,q_{sc}) \in \mathbb{N}^4$ and $(\varepsilon,\epsilon') \in [0,1]^2$, confirmer signatures from StE are ($t,\epsilon,q_s,q_v,q_{sc}$)-INV-CMA secure if the
underlying digital signature is $(t,\epsilon',q_s)$-SEUF-CMA secure and the underlying encryption scheme is
($t+q_sq_{sc}(q_{sc}+q_v),\epsilon\cdot (1-\epsilon')^{(q_{sc}+q_v)},\-q_{sc}(q_{sc} + q_v)$)-IND-PCA secure.
\end{thm}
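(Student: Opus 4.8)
The plan is to build a single reduction $\cal R$ that transforms any INV-CMA adversary $\cal A$ against the StE construction into an IND-PCA adversary against the underlying encryption scheme $\Gamma$, using the strong unforgeability of $\Sigma$ only to tame $\cal A$'s evaluation queries. The guiding observation, anticipated at the start of this section, is that SEUF-CMA forces every \emph{valid} confirmer signature $\mu$ that $\cal A$ submits to decrypt to a digital signature $\cal R$ has itself produced; consequently $\cal R$ never has to decrypt, and a plaintext-checking oracle suffices to recognise which known signature (if any) a queried $\mu$ conceals. This is precisely the separation between IND-CPA (which is ruled out) and IND-PCA (which we want to show sufficient).

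Concretely, $\cal R$ receives the IND-PCA challenge key and sets $\pkc=\Gamma.\pk$, while it generates the signer pair $(\pks,\sks)=(\Sigma.\pk,\Sigma.\sk)$ itself and runs $\cal A$ on $(\pks,\pkc)$. It keeps a list of every triple $(\mu_i,m_i,\sigma_i)$ it creates together with the sublist of signatures it has actually \emph{revealed}. Signing queries are answered honestly, since $\cal R$ knows $\sks$ and encryption needs only $\Gamma.\pk$. An evaluation query $(\mu,m)$ to $\convert$, $\confirm$ or $\deny$ is handled by first testing $\mu$ for an exact match against the stored $\mu_i$ (answered from the list), and otherwise by calling the plaintext-checking oracle on $(\sigma,\mu)$ for each candidate $\sigma$ — those already revealed on $m$, plus the two challenge signatures when $m\in\{m_0^\star,m_1^\star\}$. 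From the outcome $\cal R$ returns the matching $\sigma$ for $\convert$ (and adds it to the revealed list) or obtains the validity bit and simulates the $\confirm$/$\deny$/$\sconfirm$ transcript with the simulator guaranteed by non-transferability, which needs exactly one validity-oracle call — furnished here by the plaintext-checking answer. On the challenge, $\cal R$ submits $\sigma_0=\Sigma.\sign(m_0^\star)$ and $\sigma_1=\Sigma.\sign(m_1^\star)$ to its IND-PCA challenger, forwards the returned encryption $\mu^\star$ as the challenge confirmer signature, and echoes $\cal A$'s final guess $b^\star$. The restriction that forbids $(\mu^\star,m_i^\star)$ to the evaluation oracles is exactly what prevents $\cal R$ from ever having to query its plaintext-checking oracle on the illegal pairs $(\sigma_0,\mu^\star)$ or $(\sigma_1,\mu^\star)$.

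Next I would control the simulation error by a game hop. Let $\fail$ be the event that $\cal A$ ever submits a valid confirmer signature whose underlying digital signature was \emph{not} produced by the signing oracle; each occurrence is a fresh message/signature pair, hence an SEUF-CMA forgery, so charging at most $\epsilon'$ to each of the $q_{sc}+q_v$ evaluation queries yields $\Pr[\neg\fail]\ge(1-\epsilon')^{q_{sc}+q_v}$. Conditioned on $\neg\fail$ the simulated oracles agree with the real ones, so $\cal R$'s IND-PCA advantage is at least $\epsilon\cdot(1-\epsilon')^{q_{sc}+q_v}$. For the resources, each of the $q_{sc}+q_v$ evaluation queries triggers at most $q_{sc}$ (plus a constant) plaintext-checking calls, i.e. $q_{sc}(q_{sc}+q_v)$ in total, and the list scans over the $q_s$ stored signatures account for the remaining $q_s\cdot q_{sc}(q_{sc}+q_v)$ running-time overhead; this reproduces the announced parameters.

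The step I expect to be the real obstacle is justifying that, under $\neg\fail$, checking only the \emph{revealed} signatures (rather than all produced ones) never causes $\cal R$ to answer $\perp$ on a genuinely valid query. The sole remaining gap is a valid $\mu$ hiding a produced-but-unrevealed signature $\sigma_i$ — equivalently a re-use of $\mu^\star$ on a non-challenge message — which requires $\cal A$ to re-randomise a stored ciphertext into a fresh one with the same plaintext. But such a re-randomisation is itself an IND-PCA distinguisher (apply the plaintext-checking oracle to the new ciphertext), so its probability is bounded by $\cal R$'s own advantage and can be folded into the analysis rather than corrupting the simulation; this is exactly the place where IND-PCA, and not merely IND-CPA, is indispensable, in harmony with the necessity direction of Theorem~\ref{thm:negative-key-preserving}. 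A final routine verification is that $\sigma_0\neq\sigma_1$, so that the IND-PCA challenge is well posed; this holds generically because the two challenge messages differ.
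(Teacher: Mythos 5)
Your reduction has the same architecture as the paper's: $\cal R$ keeps the signing key, embeds the IND-PCA challenge key as the confirmer key, answers evaluation queries by recognising known digital signatures through the plaintext-checking oracle, simulates $\confirm$/$\deny$ via their zero-knowledge simulators, and bounds the deviation by the event that $\cal A$ submits a valid signature whose decryption is a fresh (strong) forgery on $\Sigma$, giving the factor $(1-\epsilon')^{q_{sc}+q_v}$. Up to that point you are in step with the paper.

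The genuine gap is exactly the one you flag yourself: you restrict the PCA candidate set to the signatures already \emph{revealed} on $m$, whereas the paper's reduction tests $\mu$ against \emph{every} digital signature $\sigma_i$ that $\cal R$ produced on $m$ during signing queries --- all of which she knows, since she generated them herself before encrypting. With the paper's rule, a valid query $\mu$ that encrypts a produced-but-unrevealed $\sigma_i$ (e.g.\ a re-randomisation of a stored $\mu_i$) is answered \emph{correctly} by the PCA test, so under $\neg\fail$ the simulation is perfect and no further argument is needed; the only residual failure is a strong forgery on $\Sigma$. Your version instead answers $\perp$ on such a query, and your patch --- that a re-randomiser ``is itself an IND-PCA distinguisher'' whose probability ``can be folded into the analysis'' --- does not go through as stated: to make it rigorous you would have to embed the IND-PCA challenge into one of the $q_s$ issued confirmer signatures via a hybrid argument, which costs an additional additive term (and a factor $q_s$) that the claimed advantage $\epsilon\cdot(1-\epsilon')^{q_{sc}+q_v}$ does not accommodate, and it silently assumes re-randomisation is the only route to a fresh encryption of an unrevealed $\sigma_i$. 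The fix is simply to widen the candidate set to all produced signatures on $m$; this removes the problematic case entirely (at the cost of at most $q_s$ PCA calls per evaluation query, which is consistent with the time bound $t+q_sq_{sc}(q_{sc}+q_v)$ stated in the theorem). Also note that your closing remark that $\sigma_0\neq\sigma_1$ ``holds generically because the two challenge messages differ'' is not automatic --- two distinct messages can in principle share a signature string --- though for the schemes considered this is a negligible-probability event and the paper does not dwell on it either.
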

\begin{proof}
Let $\cal A$ be an attacker that ($t,\epsilon,q_s,q_v,q_{sc}$)-INV-CMA breaks
a confirmer signature from StE, believed to use a $(t,\epsilon',q_s)$-SEUF-CMA
signature scheme. We construct an algorithm $\cal R$  that IND-PCA breaks the underlying
encryption:

\begin{description}
\item  \textbf{[$\keygen$]} $\cal R$ gets the public parameters
  of the target encryption scheme from her challenger, that are $\Gamma.\pk$ ,
  $\Gamma.\encrypt$, and $\Gamma.\decrypt$. Then, she chooses a secure signature scheme $\Sigma$ with parameters $\Sigma.\pk$, $\Sigma.\sk$,
  $\Sigma.\sign$, and $\Sigma.\verify$. 
\item \textbf{[$\sign$ queries]} For a signature query on a message
  $m$, $\cal R$ proceeds exactly as the standard algorithm using $\Sigma.\sk$ and $\Gamma.\pk$. $\cal R$ further maintains \emph{internally} in a list $\cal L$ the queried messages along with the corresponding confirmer signatures and the intermediate values namely the digital signatures (on the message) and the random nonce used to produce the confirmer signatures. It is clear that this simulation is indistinguishable from the standard $\sign$ algorithm.

\item \textbf{[$\sconfirm$ queries]} $\cal R$ executes the standard $\sconfirm$ protocol on a just generated signature using the randomness used to produce the confirmer signature in question. 

\item \textbf{[$\convert$ queries]} For a putative confirmer signature
  $\mu$ on $m$, $\cal R$ will look up the list $\cal L$. We note
  that each record of $\cal L$ comprises  four components, namely, (1) $m_i:$ the queried
  message (2) ${\sigma_i}:$ the digital signature on $m_i$ (3) $\mu_i = \Gamma.\encrypt_{\Gamma.\pk}(\sigma_i):$ the confirmer signature on $m_i$ (4) $r_i:$ the randomness used to encrypt $\sigma_i$ in $\mu_i$. 

If no record having as first component
  the message $m$ appears in $\cal L$, then $\cal R$ will output $\perp$. 

Otherwise, let $t$ be
  the number of records having as first component the message $m$. $\cal R$ will invoke the plaintext checking
  oracle (PCA) furnished by her own challenger on $(\sigma_i,\mu)$, for $1
  \leq i \leq t$, where $\sigma_i$ corresponds to the second component of such records. If the PCA
  oracle identifies $\mu$ as a valid encryption of some $\sigma_i$, $1 \leq i \leq
  t$, then $\cal R$ will return $\sigma_i$, otherwise she will return
  $\perp$.  

This simulation differs from the real one when the signature $\mu$ is
valid and was not obtained from the signing oracle. We note that the only ways to
create a valid confirmer signature without the help of $\cal R$ consist in  either encrypting
a digital signature obtained from the conversion
oracle or  coming up
with a new fresh pair of message and corresponding signature $(m,\mu)$. $\cal
R$ can handle the first case using her PCA oracle and list of
records $\cal L$. In the second case, we can distinguish two sub-cases: either
$m$ has not been queried to the signing oracle in which case the pair
$(m,\mu)$ corresponds to an existential forgery on the confirmer signature
scheme and thus to an existential forgery on the underlying digital scheme
according to \cite[Theorem 1]{CamenischMichels2000}, or $m$ has been queried to the
signing oracle but $\Gamma.\decrypt(\mu)$ is not an output of the selective
conversion oracle, which corresponds to a strong existential forgery on the
underlying digital signature. Therefore, the probability that this scenario does not happen is at least
$(1-\epsilon')^{q_{sc}}$ because the underlying digital signature scheme is
  $(t,\epsilon',q_s)$-SEUF-CMA secure by assumption.

\item \textbf{[$\{\confirm,\deny\}$ queries]} $\cal R$ will proceed exactly as in
  the selective conversion with the exception of simulating the denial
  protocol instead of returning $\perp$, or the confirmation protocol instead
  of returning the converted digital signature (the $\{\confirm,\deny\}$ protocols are concurrent zero-knowledge proofs, and thus they are simulatable). This simulation does not deviate from the standard execution of the protocols by at least $(1-\epsilon')^{q_v}$.

\item \textbf{[Challenge phase]} Eventually, $\cal A$ outputs two challenge messages $m_0$ and $m_1$. $\cal R$ will
then compute two signatures $\sigma_0$ and $\sigma_1$ on $m_0$  and $m_1$
respectively, which she gives to her own IND-PCA challenger. $\cal R$ receives then
the challenge $\mu^\star$, as the encryption of either $\sigma_0$ or
$\sigma_1$, which she will forward to $\cal A$.

\item \textbf{[Post challenge phase]} $\cal A$ will continue issuing queries to the
signing, confirmation/denial and selective conversion oracles and $\cal R$
can answer as previously. Note that in this phase, $\cal A$ is not allowed to
query the selective conversion or the confirmation/denial oracles on $(m_i,\mu^\star)$, $i=0,1$. Also, $\cal R$ is
not allowed to query her PCA oracle on $(\mu^\star,\sigma_i)$, $i=0,1$. 
Again, the probability to not deviate from the real invisibility game is at least $(1-\epsilon')^{q_{sc}+q_v}$.

\item \textbf{[Final output]} When $\cal A$ outputs his answer $b \in
\{0,1\}$, $\cal R$ will forward this very answer to her own challenger. Therefore
$\cal R$ will IND-PCA break the underlying encryption scheme with advantage at
least $\epsilon\cdot (1-\epsilon')^{(q_v+q_{sc})}$, in time at most $t +
q_sq_{sc}(q_v+q_{sc})$ after at most $q_{sc}(q_{sc} + q_v)$ queries to the PCA
oracle. 

\end{description}
\end{proof}

\begin{thm}[CtEaS paradigm]
\label{thm:INDPCASufficient-CtEaS}
Given $(t,q_s,q_v,q_{sc}) \in \mathbb{N}^4$ and $(\varepsilon,\epsilon') \in
[0,1]^2$, confirmer signatures from CtEaS are ($t,\epsilon,q_s,q_v,q_{sc}$)-INV-CMA
secure if they use a $(t,\epsilon',q_s)$-SEUF-CMA secure digital signature,
a statistically hiding and $(t,\epsilon_b)$ binding commitment, and a 
($t+q_sq_{sc}(q_{sc}+q_v),\frac{\epsilon}{2}\cdot
[(1-\epsilon')\cdot(1-\epsilon_b)]^{(q_{sc}+q_v)},q_{sc}(q_{sc} + q_v)$)-IND-PCA secure
encryption scheme.
\end{thm}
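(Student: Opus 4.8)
The plan is to replay the reduction of Theorem~\ref{thm:INDPCASufficient-StE}, now dragging the extra commitment layer through it. From an INV-CMA adversary $\cal A$ I would build a reduction $\cal R$ that IND-PCA breaks $\Gamma$: $\cal R$ receives $\Gamma.\pk$ from her challenger, generates herself the SEUF-CMA signature $\Sigma$ and the statistically hiding, binding commitment $\Omega$, and answers $\sign$ queries honestly while recording in a list $\cal L$ each tuple $(m_i,r_i,c_i,e_i,\sigma_i)$ with $c_i=\Omega.\commit(m_i,r_i)$, $e_i=\Gamma.\encrypt_{\Gamma.\pk}(r_i\|m_i\|\Sigma.\pk)$ and $\sigma_i=\Sigma.\sign_{\Sigma.\sk}(c_i)$. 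The $\sconfirm$ queries are run genuinely from the stored randomness, and $\confirm$/$\deny$ are simulated by their zero-knowledge property, exactly as in the StE case.

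The core is again the $\convert$ (and $\confirm$/$\deny$) oracle, answered without $\Gamma.\sk$ through the plaintext-checking oracle. On a candidate $\mu=(c,e,\sigma)$ and message $m$, $\cal R$ first tests $\Sigma.\verify_{\Sigma.\pk}(\sigma,c)=1$; by SEUF-CMA security the valid pair $(c,\sigma)$ coincides, except with probability $\epsilon'$ per query, with one she issued, hence with a record $i$ where $c=c_i=\Omega.\commit(m_i,r_i)$. Validity of $\mu$ on $m$ amounts to $e$ encrypting some $r\|m\|\Sigma.\pk$ with $c_i=\Omega.\commit(m,r)$. If $m\neq m_i$ this would open $c_i$ to two distinct messages, a binding collision of probability at most $\epsilon_b$; otherwise $m=m_i$ and $\cal R$ simply asks her PCA oracle whether $e$ encrypts $r_i\|m_i\|\Sigma.\pk$, returning $\{r_i,c_i,\sigma_i\}$ on success and $\perp$ otherwise. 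Here I would invoke injectivity of $\Omega$ (enjoyed, for instance, by Pedersen's commitment), which makes $r_i$ the unique randomness opening $c_i$ to $m_i$ and hence renders this single PCA test exhaustive. The two failure modes, a forgery and a binding collision, account for the factor $[(1-\epsilon')(1-\epsilon_b)]^{q_{sc}+q_v}$, and the PCA count and running time are inherited from Theorem~\ref{thm:INDPCASufficient-StE}.

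The genuinely new ingredient, and the origin of the factor $\frac{1}{2}$, is the challenge phase. The confirmer signature now couples three parts, and a single binding commitment $c^\star$ can be made consistent with the encrypted plaintext for only one of $m_0^\star,m_1^\star$. I would therefore have $\cal R$ draw a guess $b'\hasard\{0,1\}$, set $c^\star=\Omega.\commit(m_{b'}^\star,r^\star)$ and $\sigma^\star=\Sigma.\sign_{\Sigma.\sk}(c^\star)$, submit the two plaintexts $r^\star\|m_0^\star\|\Sigma.\pk$ and $r^\star\|m_1^\star\|\Sigma.\pk$ to her IND-PCA challenger, plant the returned ciphertext as $e^\star$ in $\mu^\star=(c^\star,e^\star,\sigma^\star)$, and finally echo $\cal A$'s bit. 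When the challenger's bit matches $b'$, $\mu^\star$ is a perfectly distributed signature on the matching message and $\cal A$'s answer is exactly the sought bit; these aligned cases carry $\cal A$'s full bias $\epsilon$. The statistical hiding of $\Omega$ is the tool I would use to argue that in the two misaligned cases, where $\mu^\star$ is malformed, the commitment $c^\star$ betrays essentially nothing about the committed message, so that $\cal A$'s verdict there is (up to statistical distance) uninformative about the challenger's bit; averaging the four cases then leaves $\cal R$ with advantage $\frac{\epsilon}{2}[(1-\epsilon')(1-\epsilon_b)]^{q_{sc}+q_v}$, as claimed.

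I expect the main obstacle to be precisely this last step, because the commitment randomness $r^\star$ is re-used inside the challenge ciphertext $e^\star$: information-theoretically $e^\star$ fixes $r^\star$ and would thus open $c^\star$, so statistical hiding cannot be applied directly to the correlated pair $(c^\star,e^\star)$. The clean remedy is a short hybrid that first decouples $e^\star$ from $c^\star$ --- replacing its plaintext so it no longer carries $r^\star$, a change hidden from the polynomially bounded $\cal A$ --- then switches the committed message by statistical hiding, and finally restores $e^\star$. Threading this hybrid through the per-query failure terms, and checking that the forbidden challenge queries $(\mu^\star,m_i^\star)$ never compel $\cal R$ to test her PCA oracle against $e^\star$ with one of the two challenge plaintexts, is the bookkeeping that needs the most care.
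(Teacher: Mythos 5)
Your reduction is essentially the paper's: the same key-preserving list-based simulation of $\convert$ and $\{\confirm,\deny\}$ via the PCA oracle with the same $[(1-\epsilon')(1-\epsilon_b)]^{q_{sc}+q_v}$ failure accounting, the same challenge embedding with a guess bit $b'$ and a single shared nonce $r^\star$, and the same four-case averaging yielding the factor $\frac{\epsilon}{2}$ (your ``echo $\cal A$'s bit'' is literally the paper's rule of outputting $b'$ iff $b_a=b'$). Two minor remarks: the theorem does not hypothesize an injective commitment, so rather than invoking injectivity to make one PCA test exhaustive you should, as the paper does, test $e$ against every recorded opening value of $c_i$ --- which is exactly why the PCA budget is $q_{sc}(q_{sc}+q_v)$ rather than $q_{sc}+q_v$; and your concern about the correlation between $c^\star$ and $e^\star$ in the misaligned case is well taken, since the paper settles it with a bare appeal to statistical hiding while your hybrid that first decouples $e^\star$ from $r^\star$ is the more careful justification of the claim that $\cal A$ answers $1-b'$ with probability $\frac{1}{2}$ when $b\neq b'$ (at the cost of an additional computational term that the stated bound absorbs).
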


\begin{proof}

Let $\cal A$ be an attacker against the CtEaS construction. We construct an
attacker $\cal R$ against the underlying encryption:

\begin{description}
\item \textbf{[$\setup$ and $\keygen$].} $\cal R$ gets the parameters of the encryption scheme $\Gamma$ from her
challenger. Then she chooses a $(t,\epsilon',q_s)$-SEUF-CMA digital signature $\Sigma$ (along with a key
pair ($\Sigma.\pk,\Sigma.\sk$)) and a secure commitment $\Omega$.

\item \textbf{[$\sign$ and $\sconfirm$ queries].} $\R$ proceeds exactly like the standard algorithm/protocol, with the exception of maintaining, in case of $\sign$, in a list $\cal L$ the queried messages, the corresponding confirmer signatures, and the intermediate values used to produce these, for instance the random strings used to produce the commitments.


\item \textbf{[$\convert$ and $\{\confirm,\deny\}$ queries]} To convert an alleged signature $\mu_i=(c_i,e_i,\sigma_i)$  on a message $m_i$, $\R$ checks the validity of $\sigma_i$ on $c_i$; if it is invalid, then $\R$ proceeds as prescribed by the standard algorithm. Otherwise, $\R$ checks the list $\cal L$ for records corresponding to the queried message $m_i$ and where $c_i$ has been used as a commitment on $m_i$. If $e_i$ is found in one of these records as encryption of some $r_i$ concatenated with $m_i\|\Sigma.\pk$ ($r_i$ is the opening value of $c_i$), then $\R$ proceeds as dictated by the standard algorithm. Otherwise, $\R$ queries her PCA oracle on $e_i$ and on each opening value of $c_i$ found in these records (concatenated always with $m_i\|\Sigma.\pk$). $\cal R$ returns the opening value giving rise to a 'yes' response (by the PCA oracle), if any, otherwise she returns $\perp$.

\noindent Verification ($\{\confirm,\deny\}$) queries are handled similarly with the exception of simulating the denial protocol instead of returning $\perp$, and the confirmation protocol instead of converting the signature.

\noindent This simulation differs from the standard procedure when $\mu_i$ is valid, but $m_i$ has not been queried before, or $c_i$ has not been used to generate commitments on $m_i$. The first case corresponds to an existential forgery on the construction which translates into breaking the binding property of the commitment scheme if $c_i$ has been used a commitment on some message $m_j \neq m_i$, or to breaking the existential unforgeability of the underlying digital signature otherwise. The second case corresponds to an existential forgery on the underlying signature scheme. Both cases do not happen with probability at least $[(1-\epsilon')\cdot(1-\epsilon_b)]^{q_v+q_{sc}}$.

\item \textbf{[Challenge phase]} At some point, $\cal A$ outputs two messages $m_0,m_1$ to $\cal R$. The
latter chooses a random string $r$ from the corresponding space. $\cal R$ 
outputs to her challenger  the strings
$r\|m_0\|\Sigma.\pk$ and $r\|m_1\|\Sigma.\pk$. She receives then a ciphertext $e_{b}$, encryption of $r\|m_b\|\Sigma.\pk$, for some $b \in \{0,1\}$. 
To answer her challenger, $\cal R$ chooses a bit $b' \hasard \{0,1\}$, computes a commitment $c_{b'}$ on the message $m_{b'}$ using the string $r$. Then, $\cal R$  outputs $\mu=(c_{b'},e_{b},\Sigma.\sign_{\Sigma.\sk}(c_{b'}))$ as a challenge signature to $\cal A$. 

Two cases: either $\mu$ is a valid confirmer signature on $m_{b'}$ (if
$b=b'$), or it is not a valid signature on either $m_0$ or $m_1$. However, since the used commitment is statistically hiding, i.e. $c_{b'}$ reveals no information about $m_{b'}$, then $\mu$ is conform to a challenge signature in a real INV-CMA game.


\item \textbf{[Post challenge phase]} $\cal A$ continues to issue queries and
  $\cal R$ continues to handle them as before. Note that at this stage, $\cal R$ cannot request her PCA oracle on
$(e_{b},r\|m_i\|\Sigma.\pk)$, $i \in \{0,1\}$. $\cal R$
would need to make such a query if she gets a
verification (conversion) query on a signature $(c_i,e_b,\sigma_i) \neq \mu$ and the message
$m_i \in  \{m_0,m_1\}$. $\cal R$ will respond to such a query by running the denial protocol
(output $\perp$). This simulation differs from the real algorithm when
$(c_i,e_b,\sigma_i)$ is valid on $m_i$. Again, such a scenario won't happen with
probability at least $(1-\epsilon')^{q_v+q_{sc}}$, because the query would
form a strong existential forgery on the digital signature scheme underlying
the construction.

\item \textbf{[Final output]} The rest of the proof follows in a
  straightforward way. Let $b_a$ be the bit output by  $\cal A$. $\cal R$ will output $b'$ to her challenger in case $b'=b_a$ and $1-b'$ otherwise.

The advantage of $\cal A$ in such an attack is defined
by $\epsilon = \adv({\cal A}) = \left|\Pr[b_a=b'|b=b'] - \frac{1}{2}\right|$

We assume again without loss of generality that $\epsilon=\Pr[b_a=b'|b=b'] -
\frac{1}{2}$. The advantage of $\cal R$ by definition the product $p_{\sf sim}\cdot p_{\sf chal}$, where $p_{\sf sim}$ is the probability of providing a simulation indistinguishable from that in a real attack; it is equal to $[(1-\epsilon_b)\cdot(1-\epsilon')]^{q_v+q_{sc}}$. Whereas $p_{\sf chal}$ is the probability that $\cal R$ solves her challenge provided the simulation is correct:

\begin{eqnarray*}
\small
p_{\sf chal}  &=& \left [\Pr[b'=b_a,b=b' ] + \Pr[b' \neq b_a,b \neq b'] - \frac{1}{2} \right ]\\
            &=&  \frac{1}{2}\left [\Pr[b'=b_a|b=b'] + \Pr[b' \neq b_a|b \neq b'] - 1 \right ]\\
               &=& \left [\frac{1}{2}(\epsilon +
                 \frac{1}{2} + \frac{1}{2} - 1) \right ]\\
    &=& \frac{\epsilon}{2}\\
\end{eqnarray*}

Actually, $\Pr[b \neq b']=\Pr[b= b'] =
\frac{1}{2}$ as $b'\hasard\{0,1\}$. Moreover, if $b\neq
b'$, then the probability that $\cal A$ answers $1-b'$  is $\frac{1}{2}$ (since $\mu$ is invalid on either $m_0$ or $m_1$).
\end{description}
\qed \end{proof}

\subsection{An efficient variant of StE}
\label{subsec:efficientStE-CDCS}

One attempt to circumvent the problem of \emph{strong forgeability} of
constructions obtained from the plain StE paradigm
can be achieved by binding the digital signature to its encryption. In this way, from a
digital signature $\sigma$ and a message $m$, an adversary cannot create a new
confirmer signature on $m$ by just re-encrypting $\sigma$. In fact, $\sigma$ forms a digital signature on $m$ and some data, say $c$, which uniquely defines the confirmer signature on $m$. Moreover, this data $c$ has to be
public in order to issue the $\{\sconfirm,\confirm,\deny\}$ protocols. 

In this subsection, we propose a realization of this idea using hybrid encryption
(the KEM/DEM paradigm). We also allow more flexibility without compromising the overall
security by encrypting only one part of the signature and leaving
out the other part, provided it does not reveal information about the key nor
about the message.

\subsubsection{The construction}
\label{subsec:kemConstruction}
Let $\Sigma$ be a digital signature scheme given by $\Sigma.\keygen$, which generates 
a key pair ($\Sigma.\sk$, $\Sigma.\pk$), $\Sigma.\sign$, and $\Sigma.\verify$. Let further $\kem$ be a KEM given
by $\kem.\keygen$, which generates a key pair ($\kem.\pk$,
$\kem.\sk$), $\kem.\encap$, and $\kem.\decap$. Finally, we consider a DEM $\dem$ given
by $\dem.\encrypt$ and $\dem.\decrypt$.

We assume that any digital signature $\sigma$, generated
using $\Sigma$ on an arbitrary message $m$, can be efficiently transformed in a
reversible way to a pair $(s,r)$ where $r$ reveals no information about $m$
nor about $(\Sigma.\sk,\Sigma.\pk)$. I.e. there exists an
algorithm that inputs a message $m$ and a key pair $(\Sigma.\sk,\Sigma.\pk)$
and outputs a string statistically indistinguishable from
$r$, where the probability is taken over the messages and the key pairs considered by $\Sigma$. This technical detail will improve the efficiency of the construction as it will not necessitate
encrypting the entire signature $\sigma$, but only the message-key-dependent part,
namely $s$. Finally, we assume that $s$ belongs to the message space of $\dem$. 

We further assume that the encapsulations generated by $\kem$ are exactly $\kappa$-bit long, where $\kappa$ is a security parameter. This can be for example realized by padding with zeros, on the left of the most significant bit of the
given encapsulation, until the resulting string has length $\kappa$. Moreover, the
operator $\|$ denotes the usual concatenation operation between two bit-strings. As a result, the first bit of $m$ will always be at the $(\kappa+1)$-th position in $c\|m$, where $c$ is a given encapsulation.

\vspace{.5 cm}
 The construction of confirmer signatures from $\Sigma$, $\kem$, and $\dem$ is given as follows.
\begin{description}
\item \textbf{Key generation ($\keygen$). }Set the signer's key pair to
  $(\Sigma.\sk,\Sigma.\pk)$ and the confirmer's key pair to $(\kem.\sk,\kem.\pk)$.
\item \textbf{Signature ($\sign$). }Fix a key $k$ together with its encapsulation
  $c$. Then, compute a (digital) signature $\sigma = \Sigma.\sign_{\Sigma.\sk}(c\|m)=(s,r)$ on $c\|m$, and output $\mu= (c,\dem.\encrypt_{k}(s),r)$.
\item \textbf{Verification ($\{\sconfirm,\confirm,\deny\}$). }To confirm (deny) a purported
  signature $\mu=(c,e,r)$, issued on a certain message $m$, the prover (either the signer or the confirmer) provides a concurrent zero-knowledge proof of knowledge of the decryption of $(c,e)$, which together with $r$ forms a valid (invalid) signature on $c\|m$. Providing such a proof is possible since the underlying statement is an NP language \cite{DworkNaorSahai2004}.
\item \textbf{Selective conversion ($\convert$). }To convert a signature
  $\mu=(c,e,r)$ issued on a certain message $m$, the confirmer first checks
  its validity. If it is invalid, he outputs $\perp$, otherwise, he computes
  $k=\kem.\decap_{\kem.\sk}(c)$ and outputs
  $(\dem.\decrypt_{k}(e),r)$.
\end{description}
     
\begin{thm}
\label{thm:newStE-forgery-CDCS}
Given $(t,q_s) \in \mathbb{N}^2$ and $\varepsilon \in [0,1]$, the above construction is ($t,\epsilon,q_s$)-EUF-CMA secure if 
the underlying digital signature scheme is ($t,\epsilon,q_s$)-EUF-CMA secure.
\end{thm}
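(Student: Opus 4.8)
The plan is to construct a reduction that turns any EUF-CMA adversary $\A$ against the new StE construction into an EUF-CMA adversary $\B$ against the underlying signature scheme $\Sigma$. The key structural observation is that a confirmer signature in this construction is $\mu=(c,\dem.\encrypt_k(s),r)$, where $(s,r)=\Sigma.\sign_{\Sigma.\sk}(c\|m)$ is a digital signature on the \emph{augmented message} $c\|m$. So a forgery on the confirmer scheme yields, after decryption by the confirmer, a digital signature on $c^\star\|m^\star$. Since $\B$ possesses the confirmer's secret key $\kem.\sk$ (the adversary $\A$ generates the confirmer key pair itself in the EUF-CMA game for CDCS, so $\B$ will play the role of the confirmer and hold $\kem.\sk$), $\B$ can always decapsulate and decrypt to recover the embedded digital signature.

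\textbf{First I would} set up the reduction as follows. $\B$ receives $\Sigma.\pk$ from its own signing challenger and forwards it to $\A$ as the signer's public key $\pks$. When $\A$ produces the confirmer key pair $(\pkc,\skc)=(\kem.\pk,\kem.\sk)$, $\B$ records $\skc$. For each signing query $m_i$ made by $\A$ to the $\sign$ oracle, $\B$ fixes a key $k_i$ with encapsulation $c_i$ using $\kem.\encap_{\kem.\pk}$, then queries its own $\Sigma$-signing oracle on the augmented message $c_i\|m_i$ to obtain $\sigma_i=(s_i,r_i)$, and returns $\mu_i=(c_i,\dem.\encrypt_{k_i}(s_i),r_i)$. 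This simulation is perfect, since it follows the real $\sign$ algorithm verbatim. Eventually $\A$ outputs a forgery $(m^\star,\mu^\star)$ with $\mu^\star=(c^\star,e^\star,r^\star)$ valid on $m^\star$ and $m^\star$ never queried. Using $\skc$, $\B$ computes $k^\star=\kem.\decap_{\kem.\sk}(c^\star)$ and $s^\star=\dem.\decrypt_{k^\star}(e^\star)$, so that $(s^\star,r^\star)$ is (reassembled into) a valid $\Sigma$-signature $\sigma^\star$ on $c^\star\|m^\star$. $\B$ outputs $(c^\star\|m^\star,\sigma^\star)$ to its challenger.

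\textbf{The step I expect to be the main obstacle} is arguing that $(c^\star\|m^\star,\sigma^\star)$ is a \emph{fresh} forgery for $\Sigma$, i.e. that $c^\star\|m^\star$ was never submitted to $\B$'s signing oracle. Here the key leverage is that $m^\star$ was never queried to the $\sign$ oracle by $\A$; every augmented message $c_i\|m_i$ that $\B$ forwarded to its $\Sigma$-challenger has the signer-queried message $m_i$ as its suffix beyond the first $\kappa$ bits. Because the encapsulations are normalized to exactly $\kappa$ bits (as fixed in the construction, so that the message always begins at position $\kappa+1$), the suffix of $c^\star\|m^\star$ past position $\kappa$ is exactly $m^\star$, which differs from every $m_i$. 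Hence $c^\star\|m^\star\neq c_i\|m_i$ for all $i$, and the forgery is genuinely new. This gives $\adv(\B)\ge\adv(\A)$ with the same running time up to the cost of the $q_s$ encapsulations and the final decapsulation/decryption, establishing the claimed $(t,\epsilon,q_s)$ bound. I would remark that only plain (not strong) EUF-CMA of $\Sigma$ is needed here, since message-freshness alone suffices; the stronger SEUF-CMA property is only required for invisibility, not for unforgeability.
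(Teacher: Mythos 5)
Your reduction is exactly the one the paper gives: simulate signing queries by asking the $\Sigma$-challenger for a signature on the augmented message $c_i\Vert m_i$, then decapsulate/decrypt the forgery and argue freshness of $c^\star\Vert m^\star$ from the fact that the fixed $\kappa$-bit length of encapsulations forces $c^\star\Vert m^\star = c_i\Vert m_i$ to imply $m^\star = m_i$. The proposal is correct and matches the paper's proof, merely spelling out the positional argument that the paper compresses into ``by construction.''
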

\begin{proof}
Let $\cal A$ be an attacker that ($t,\epsilon,q_s)$-EUF-CMA breaks the above construction. The algorithm $\cal R$ ($t,\epsilon,q_s$)-EUF-CMA breaks the underlying digital signature scheme $\Sigma$ as follows:
\begin{description}
\item \textbf{[Key generation]} $\cal R$ gets the parameters of $\Sigma$ from her challenger. Then she chooses an appropriate
  KEM $\kem$ and DEM $\dem$ and asks $\cal A$ to provide her with the
  confirmer key pair $(\kem.\sk,\kem.\pk)$. Finally, $\cal R$ fixes the above parameters as a setting for the confirmer signature scheme $\cal A$ is trying to attack. 
\item \textbf{[Signature queries]} For a signature query on a message $m$,
  $\cal R$ computes an encapsulation $c$ together with its
  decapsulation $k$ (using $\Gamma.\pk$). Then, she requests her challenger for a digital signature $\sigma=(s,r)$
  on $c\|m$. Finally, she encrypts $s$ in $\dem.\encrypt_k(s)$ and outputs the
  confirmer signature $(c,\dem.\encrypt_k(s),r)$. 
\item \textbf{[Final Output]} Once $\cal A$ outputs his forgery
  $\mu^\star=(c^\star,e^\star,r^\star)$ on $m^\star$. $\cal R$
  will compute the decapsulation of $c^\star$, say $k$. If $\mu^\star$ is valid then by
definition $(\dem.\decrypt_k(e^\star),r^\star)$ is a valid digital
signature on $c^\star\|m^\star$. Thus, $\cal R$ outputs
$(\dem.\decrypt_k(e^\star),r^\star)$ and $c^\star\|m^\star$ as a
valid existential forgery on $\Sigma$. In fact, if, during a query made
by $\cal A$ on a message $m^i$, $\cal R$ is compelled to query her
own challenger for a digital signature on $c^\star\|m^\star =
c^i\|m^i$, then $m^\star = m^i$ (by construction),
which contradicts the fact that $(\mu^\star,m^\star)$ is an existential
forgery output by $\cal A$.
\end{description}

\end{proof}

The following remark is vital for the invisibility of the resulting confirmer
signatures.
\begin{remark}
\label{rmq:strongforgery}
The previous theorem shows that existential unforgeability of the underlying
digital signature scheme suffices to ensure existential unforgeability of the
resulting construction. Actually, one can also show that this requirement on
the digital signature (EUF-CMA security) guarantees that no adversary,
against the construction, can come up with a valid confirmer signature
$\mu=(c,e,r)$ ($c$ is the encapsulation used to generate the confirmer
signature $\mu$) on a message $m$ that has been queried before to the signing
oracle but where $c$ was never used to generate answers (confirmer signatures) to the signature queries.

\noindent To prove this claim, we construct from such an adversary, say $\cal A$, an EUF-CMA
adversary $\cal R$ against the underlying digital signature scheme, which runs in the
same time and has the same advantage as $\cal A$. In fact, $\cal R$ will simulate $\cal
A$'s environment in the same way described in the proof of Theorem
\ref{thm:newStE-forgery-CDCS}. When $\cal A$ outputs his forgery
$\mu^\star=(c^\star,e^\star,r^\star)$ on a message $m_i$ that has been previously
queried to the signing oracle, $\cal R$ decrypts $(c^\star,e^\star)$ in
$s^\star$, which by definition forms, together with $r^\star$, a valid digital signature on
$c^\star\|m_i$. Since by assumption $c^\star$ was never used to generate
confirmer signatures on the queried messages, $\cal R$ never invoked her own
challenger for a digital signature on $c^\star\|m_i$. Therefore, $(s^\star,r^\star)$ will form a valid
existential forgery on the underlying digital signature scheme.
\end{remark}

In the rest of this subsection, we show that the new StE paradigm achieves a stronger notion (than INV-CMA) of invisibility that we denote SINV-CMA. This notion was first introduced in \cite{GalbraithMao2003}, and it captures the difficulty to distinguish confirmer signatures on an adversarially chosen message from random elements in the confirmer signature space. Again, the difference with the previously mentioned notions lies in the challenge phase where the SINV-CMA attacker outputs a message $m^\star$ and receives in return an element $\mu^\star$ which is either a confirmer signature on  $m^\star$ or a random element from the confirmer signature space. There is again the natural restriction of not querying the challenge pair to the $\sconfirm$, $\{\confirm,\deny\}$, and $\convert$ oracles. Similarly, a CDCS scheme is
$(t,\epsilon,q_s,q_v,q_{sc})$-SINV-CMA secure if no adversary operating in time
$t$, issuing $q_s$ queries to the signing oracle (followed potentially by queries to the $\sconfirm$ oracle), $q_v$ queries to the confirmation/denial oracles and $q_{sc}$ queries to the selective conversion
oracle that wins the game defined in Experiment  $\mathbf{Exp}^{\mathsf{SINV\mbox{-}\CMA}}(1^\kappa)$ (in Figure \ref{fig:sinv-cma4CS}) with advantage greater that $\epsilon$. The probability is taken over all the coin tosses. 

\begin{figure}
\label{fig:sinv-cma4CS}
\begin{footnotesize}
\begin{experiment}[Experiment $\mathbf{Exp}_{\cs,\A}^{\mathsf{SINV\mbox{-}\CMA}}(1^\kappa)$]

\item $\param \gets \cs.\setup(1^\kappa)$; 
\item $(\pks,\sks) \gets \cs.\keygen_{\entfont{S}}(1^\kappa)$; \\$(\pkc,\skc) \gets \cs.\keygen_\C(1^\kappa)$; 

\item $(m^\star,\mathcal{I}) \leftarrow {\mathcal A}^{\mathfrak{S}, \mathfrak{Cv}, \mathfrak{V}} (\text{find},\pks,\pkc)$ \\
\phantom{$(m^{\star},\mathcal{I}) $} $\left\vert
\begin{array}{l} 
\mathfrak{S} : m  \longmapsto \cs.\sign_{\sks}(m,\pkc) \\ 
\mathfrak{Cv}: (\mu,m) \longmapsto \cs.\convert_{\skc}(\mu,m) \\
\mathfrak{V} : (\mu,m) \longmapsto \cs.\{\sconfirm,\confirm,\deny\}(\mu,m) \\
\end{array} \right.$ \\
\item $\mu_1^{\star} \leftarrow \cs.\sign_{\sks}(m^\star,\pkc)$
\item $\mu_0^{\star} \hasard \cs.\sf{space}$; $b \hasard \{0,1\}$
\item $b^\star\leftarrow {\mathcal A}^{\mathfrak{S}, \mathfrak{Cv}, \mathfrak{V}}(\textsf{guess},\mathcal{I},\mu_b^{\star},\pks,\pkc)$~\\
\phantom{$b^\star$} $\left\vert
\begin{array}{l} 
\mathfrak{S} : m  \longmapsto \cs.\sign_{\{\sks,\pkc\}}(m) \\ 
\mathfrak{Cv}: (\mu,m) (\neq (\mu_b^\star,m^\star)) \longmapsto \cs.\convert_{\skc}(\mu,m) \\
\mathfrak{V} : (\mu,m) (\neq (\mu_b^\star,m^\star))\longmapsto \cs.\{\sconfirm,\confirm,\deny\}(\mu,m) \\
\end{array} \right.$ 

\item If $(b=b^\star)$ return $1$ else return $0$.

\end{experiment}
\end{footnotesize}
\caption{Strong invisibility in confirmer signatures}
\end{figure}

\begin{thm}
\label{thm:KEM-invisibility}
Given $(t,q_s,q_v,q_{sc}) \in \mathbb{N}^4$ and $(\varepsilon,\epsilon') \in
[0,1]^2$, the construction proposed above is ($t,\epsilon,q_s,q_v,q_{sc}$)-SINV-CMA
secure if it uses a $(t,\epsilon',q_s)$-SEUF-CMA secure digital signature, an ($t,\epsilon"$)-INV-OT secure DEM with injective encryption, and a ($t+q_s(q_v+q_{sc}),\epsilon\cdot (1-\epsilon")\cdot(1-\epsilon')^{q_v+q_{sc}}$)-IND-CPA secure KEM.
\end{thm}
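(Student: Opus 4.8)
The plan is to turn a $\SINV\mbox{-}\CMA$ distinguisher $\A$ against the construction into an $\IND\mbox{-}\CPA$ distinguisher $\R$ against the underlying KEM $\kem$, invoking the $\SEUF\mbox{-}\CMA$ security of $\Sigma$ and the $\INV\mbox{-}\OT$ security of $\dem$ only to argue that $\R$'s simulation is faithful and that the random-key challenge looks like a random signature. Concretely, $\R$ receives $\kem.\pk$ together with a challenge pair $(c^\star,k^\star)$, where $k^\star=\kem.\decap(c^\star)$ or $k^\star$ is a fresh random key; she then generates $\Sigma$ herself (so she holds $\Sigma.\sk$), picks $\dem$, sets $\pkc=\kem.\pk$ and $\pks=\Sigma.\pk$, and runs $\A$. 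The encapsulation $c^\star$ is kept aside and planted as the encapsulation of the $\SINV$ challenge signature, so that the KEM hidden bit coincides with the $\SINV$ bit $b$.

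First I would describe the oracle simulation, which is the heart of the argument and is exactly what lets the reduction avoid any decapsulation oracle (hence why $\IND\mbox{-}\CPA$, not $\CCA$, suffices). For a $\sign$ query on $m$, $\R$ runs $\kem.\encap$ to obtain $(c,k)$, signs $c\|m$ with $\Sigma.\sk$ to get $(s,r)$, outputs $(c,\dem.\encrypt_{k}(s),r)$, and records $(m,c,k,s,r)$ in a list $\cal L$; $\sconfirm$ queries concern just-produced signatures, so $\R$ runs the proof from the stored data. For $\convert$ and $\{\confirm,\deny\}$ queries on $(c,e,r)$ and $m$, $\R$ answers from $\cal L$: if the query exactly matches a produced signature she returns (or confirms) it, otherwise she returns $\perp$ (runs $\deny$). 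The key point, which I would prove as a self-contained claim mirroring Remark \ref{rmq:strongforgery}, is that any valid confirmer signature not produced by $\R$ yields a forgery on $\Sigma$: if $\R$ never signed $c\|m$ this is a forgery on a fresh message, while if she did sign it the injectivity of $\dem$ forces the decrypted pair to differ from the stored one, i.e. a strong forgery. Hence each of the $q_{sc}+q_v$ such queries deviates from the real game with probability at most $\epsilon'$, so $\R$'s view is faithful with probability at least $(1-\epsilon')^{q_{sc}+q_v}$.

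Then I would handle the challenge. When $\A$ outputs $m^\star$, $\R$ signs $c^\star\|m^\star$ to get $(s^\star,r^\star)$ and returns $\mu^\star=(c^\star,\dem.\encrypt_{k^\star}(s^\star),r^\star)$. If the KEM bit is $1$ then $k^\star$ is the true decapsulation and $\mu^\star$ is an honest signature on $m^\star$; if it is $0$ then $k^\star$ is a fresh random key, so $\dem.\encrypt_{k^\star}(s^\star)$ is a one-time encryption that by $\INV\mbox{-}\OT$ is indistinguishable from a random ciphertext. Combined with the standing assumption that $r^\star$ is statistically independent of $m^\star$ and of the keys, so that its distribution matches that of a random element (and $c^\star$ already follows the encapsulation distribution), $\mu^\star$ is then close to a uniform element of the signature space. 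Thus, echoing $\A$'s final guess $b^\star$ as her own KEM guess, $\R$ distinguishes the KEM challenge whenever $\A$ wins and the simulation was faithful; a routine accounting of the three contributions, namely the $\SINV$ advantage $\epsilon$, the $\INV\mbox{-}\OT$ loss $(1-\epsilon'')$ on the random-key branch, and the faithfulness factor $(1-\epsilon')^{q_v+q_{sc}}$, yields the claimed KEM advantage $\epsilon\cdot(1-\epsilon'')\cdot(1-\epsilon')^{q_v+q_{sc}}$, the overhead $q_s(q_v+q_{sc})$ coming from the list lookups.

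I expect the main obstacle to be the post-challenge queries built from the challenge encapsulation $c^\star$, for which $\R$ genuinely cannot decapsulate (she does not hold $\kem.\sk$, and on the random-key branch $k^\star$ is not even the right key). The resolution is once more the combination of $\dem$-injectivity and $\SEUF\mbox{-}\CMA$: any query $(c^\star,e,r)\neq(c^\star,e^\star,r^\star)$ that is valid on $m^\star$, and any valid query $(c^\star,\cdot,\cdot)$ on a message $m'\neq m^\star$, must carry an underlying $\Sigma$-signature that $\R$ never requested, hence a (strong) forgery; so $\R$ may safely answer $\perp$ (run $\deny$) on all of them, and it is precisely this that makes decapsulation unnecessary and keeps the reduction within $\IND\mbox{-}\CPA$.
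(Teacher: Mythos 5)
Your proposal is correct and follows essentially the same route as the paper's proof: plant the KEM challenge $(c^\star,k^\star)$ in the challenge signature, simulate verification/conversion from a list of (encapsulation, key) pairs, reduce every deviation to an (S)EUF-CMA forgery via the injectivity of the DEM and Remark \ref{rmq:strongforgery}, and use INV-OT to argue the random-key branch looks uniform, arriving at the same advantage $\epsilon\cdot(1-\epsilon'')\cdot(1-\epsilon')^{q_v+q_{sc}}$. The only (immaterial) difference is that you deny any query that is not an exact match of a recorded signature, whereas the paper uses the stored key to decrypt and check before answering; both deviations are bounded by the same forgery events.
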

\begin{proof}
Let $\cal A$ be an attacker that ($t,\epsilon,q_s,q_v,q_{sc}$)-SINV-CMA breaks our
construction. We construct an algorithm $\cal R$ that breaks the underlying KEM as follows.


\begin{description}
\item \textbf{[$\keygen$]} $\cal R$ gets the parameters
  of the KEM $\kem$ from her challenger. Then, she
  chooses an appropriate ($t,\epsilon"$)-INV-OT secure DEM $\cal D$ together with a
  $(t,\epsilon',q_s)$-SEUF-CMA secure signature scheme $\Sigma$. $\R$ further generates a key pair $(\Sigma.\sk,\Sigma.\pk)$ for $\Sigma$ and sets it as the signer's key pair.
\item \textbf{[$\sign$ and $\sconfirm$ queries]} For a signature query, $\R$ proceeds like the standard algorithm. She further maintains a list $\cal L$ of the encapsulations $c$ and keys $k$ used to generate the confirmer signatures. 
 \item \textbf{[$\{\confirm,\deny\}$ queries]} For a verification query on a signature $\mu=(c,e,r)$ and a message $m$, $\R$  looks up the list $\cal L$ for the decapsulation of $c$, which once found, allows $\R$ to check the validity of the signature and therefore simulate correctly the suitable protocol (confirmation or denial). If $c$ has not been used to generate confirmer signatures, then $\R$ will run the denial protocol. Note that $\R$ is able to perfectly simulate to confirmation/denial protocol since these  are by definition concurrent zero-knowledge proofs of knowledge.

This simulation differs from the real one when the signature  $\mu=(c,e,r)$ on
  $m$ is valid, but $c$ does not appear in any record of $\cal L$. We
  distinguish two cases: either $m$ was never queried to the signing
  oracle, then $(m,\mu)$ would correspond to an existential forgery on the
  confirmer signature (and thus to an existential forgery on $\Sigma$), or $m$ has been
  previously queried to the signing oracle in which case  $(m,\mu)$ would
  correspond to an existential forgery on $\Sigma$ thanks to Remark \ref{rmq:strongforgery}. Hence,
  the probability that both scenarios do not happen is at least $ (1-\epsilon')^{q_{v}}$ ($\Sigma$ is $(t,\epsilon',q_s)$-SEUF-CMA secure).
 \item \textbf{[$\convert$ queries]} Conversion queries are treated like verification queries with the exception of converting the signature instead of running $\confirm$, and issuing $\perp$ instead of running $\deny$. Similarly, this
   simulation does not differ from the real execution of the algorithm with probability at
   least $ (1-\epsilon')^{q_{sc}}$.
\item  \textbf{[Challenge]} Eventually, $\cal A$ outputs a challenge message $m^{\star}$. $\cal R$ uses her challenge $(c^\star,k^\star)$ to compute a digital signature
$(s^\star,r^\star)$ on $c^\star\|m^\star$. Then, she encrypts $s^\star$ in $e^\star$ using
$\dem.\encrypt_{k^\star}$  and outputs $\mu^\star=(c^\star,e^\star,r^\star)$ to $\cal
A$. Therefore, $\mu^\star$ is either a valid confirmer signature on $m^\star$ or an element
indistinguishable from a random element in the confirmer signature space ($k^\star$ is random and $\dem$ is INV-OT secure, moreover $r^\star$ is information-theoretically independent from $m$ and $\Sigma.\pk$). If $\mu^\star$,
 in the latter case, is a random element in the confirmer signature
space, then this complies with the scenario of a real attack. Otherwise, if
$\mu^\star$ is \emph{only indistinguishable from random}, then if the
advantage of $\cal A$ is non-negligibly different from the advantage of an invisibility adversary in a real attack, then $\cal A$ can be easily used to ($t,\epsilon''$)-INV-OT break $\dem$. To sum up, under the INV-OT assumption of $\dem$,
the challenge confirmer signature $\mu^\star$ is either a valid confirmer
signature on $m^\star$ or a random element in the confirmer signature space.

\item  \textbf{[Post challenge phase]} $\cal A$ will continue issuing queries to the previous oracles, and $\cal R$
can answer as previously. Note that in this phase, $\cal A$ might request the
verification or conversion of a confirmer signature
$(c^\star,-,-)$ on a message $m_i \neq m^\star$. According to the previous analysis, such a signature is invalid  w.r.t. $m_i$ with probability $ (1-\epsilon')^{q_{sc}+q_v}$. 

In case the verification/conversion query involves $m^\star$ and $c^\star$, then let $(c^\star,\tilde{e},\tilde{r}) \neq \mu^\star$ be the queried signature. We have $(\tilde{e},\tilde{r}) \neq (e^\star,r^\star)$. Two cases manifest. Either $r^\star = \tilde{r}$, in which case $(c^\star,\tilde{e},\tilde{r})$ is invalid w.r.t. $m^\star$ since otherwise $e^\star$ and $\tilde{e}$ will be two different ciphertexts that decrypt to $s^\star$, which is impossible since $\dem$ is by assumption a DEM with injective encryption. Or $r^\star \neq \tilde{r}$; therefore for $(c^\star,\tilde{e},\tilde{r})$ to be valid w.r.t. $m^\star$, $(\dem.\decrypt(\tilde{e}),\tilde{r})$ ($\neq (s^\star,r^\star)$) must be a valid digital signature on $c^\star\|m^\star$. Therefore, this latter scenario does not happen with probability at least $(1-\epsilon')$ since $\Sigma$ is SEUF-CMA secure.

Bottom line is, whenever a verification/conversion query involves $c^\star$ or an encapsulation $c$ that is not in the list $\cal L$, $\cal R$ will issue the denial protocol in case of a verification query, or the
symbol $\perp$ in case of a conversion query. The probability that the simulation does not differ from the real
execution is at least $ (1-\epsilon')^{q_{sc}+q_v}$.

\item \textbf{[Final output]} When $\cal A$ outputs his answer $b \in
\{0,1\}$, $\cal R$ will forward this answer to her own challenger. Therefore $\cal R$ will $(t+q_s(q_v+q_{sc}),\epsilon\cdot (1-\epsilon")\cdot(1-\epsilon')^{q_v+q_{sc}})$-IND-CPA break $\kem$.
\end{description}

 \end{proof}

\subsection{An efficient variant of CtEaS}
\label{subsec:efficientCtEaS-CDCS}
We remediate to the strong forgeability problem of CtEaS by using the same trick applied to StE, namely bind the used digital signature to the corresponding confirmer signature. This is achieved by producing a digital
signature on both the commitment and the encryption of the random string used to generate it. In this way, the attack discussed in Subsection \ref{subsec:invisibilityBreach} no longer applies, since an adversary will need to produce a digital signature on the commitment and the re-encryption of the
random string used in it. Note that such a fix already appears in the construction of \cite{GentryMolnarRamzan2005}, however, it was not exploitable as the invisibility was considered in the insider model.

\subsubsection{Commit\_then\_Encrypt\_then\_Sign: CtEtS}
Let $\Sigma$ be a signature scheme given by $\Sigma.\keygen$, that generates
$(\Sigma.\pk,\Sigma.\sk)$, $\Sigma.\sign$, and $\Sigma.\verify$. Let further
$\Gamma$ denote an encryption scheme given by $\Gamma.\keygen$, that generates
$(\Gamma.\pk,\Gamma.\sk)$, $\Gamma.\encrypt$, and $\Gamma.\decrypt$. Finally let
$\Omega$ denote a commitment scheme given by $\Omega.\commit$ and
$\Omega.\open$. We assume that $\Gamma$ produces ciphertexts of length exactly
a certain $\kappa$. As a result, the first bit of $c$ will always be at
the $(\kappa+1)$-th position in $e\|c$, where $e$ is an encryption produced by $\Gamma$. The construction from the CtEtS paradigm is as follows.
\begin{description}
\item \textbf{Key generation ($\keygen$). }The signer's key pair is $(\Sigma.\pk,\Sigma.\sk)$ and the confirmer's key pair is $(\Gamma.\pk,\Gamma.\sk)$.

\item \textbf{Signature ($\sign$). } To sign a message $m$, first produce a commitment $c$ on $m$ using a random string $r$, encrypt this string in $e$, and then produce a digital signature $\sigma=\Sigma.\sign_{\Sigma.\sk}(e\|c)$. The confirmer signature on $m$ is $\mu=(c,e,\sigma)$.

\item \textbf{Verification ($\{\sconfirm,\confirm,\deny\}$). }To verify  a signature on a given message, the prover (either the signer or the confirmer) provides a concurrent zero-knowledge proof of the underlying (NP) statement. 

\item \textbf{Selective conversion ($\convert$). } Conversion of a valid signature $\mu=(c,e,\sigma)$ is done by decrypting $e$.
\end{description}

It is clear that this new construction looses the parallelism of the original one, i.e. encryption and signature can no longer be carried out in parallel, however, it has the advantage of resting on cheaper encryption as we will show in the following. Moreover, it still preserves the merit of the CtEaS paradigm, namely possibility to instantiate with \emph{any} digital signature scheme \footnote{Practical realizations from the StE paradigm need to use digital signatures from a special class that we specify in Definition \ref{def:signatureClass}}.

\begin{thm}
\label{thm:newCtEaS-forgery-CDCS}
Given $(t,q_s) \in \mathbb{N}^2$ and $(\varepsilon,\epsilon') \in
[0,1]^2$, the construction depicted above is
$(t,\epsilon\cdot (1-\epsilon')^{q_s},q_s)$-EUF-CMA secure if it uses a $(t,\epsilon')$-binding commitment and a $(t,\epsilon,q_s)$-EUF-CMA secure
digital signature scheme.
\end{thm}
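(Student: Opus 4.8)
The plan is to reduce to the EUF-CMA security of the signature scheme $\Sigma$, invoking the binding property of $\Omega$ only to guarantee that the forgery extracted from ${\cal A}$ is \emph{fresh}. Concretely, I would build a forger ${\cal R}$ against $\Sigma$ as follows. ${\cal R}$ receives $\Sigma.\pk$ from its own challenger and installs it as the signer's public key; it then receives the confirmer key pair $(\Gamma.\pk,\Gamma.\sk)$ from ${\cal A}$ (recall that in the EUF-CMA game the adversary chooses the confirmer keys). To answer a signing query on $m_i$, ${\cal R}$ samples a random opening $r_i$, forms $c_i=\Omega.\commit(m_i,r_i)$ and $e_i=\Gamma.\encrypt_{\Gamma.\pk}(r_i)$, queries its \emph{own} signing oracle on the string $e_i\|c_i$ to obtain $\sigma_i$, and returns $\mu_i=(c_i,e_i,\sigma_i)$. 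This is distributed exactly as a genuine confirmer signature, so the simulation is perfect. When ${\cal A}$ halts with a valid forgery $\mu^\star=(c^\star,e^\star,\sigma^\star)$ on an unqueried message $m^\star$, ${\cal R}$ outputs $(e^\star\|c^\star,\sigma^\star)$.

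The core of the argument is to show that $(e^\star\|c^\star,\sigma^\star)$ is a legitimate forgery for $\Sigma$. Validity of $\mu^\star$ immediately gives that $\sigma^\star$ verifies on $e^\star\|c^\star$, so the only thing left is to check that ${\cal R}$ never queried $e^\star\|c^\star$ to its own oracle, i.e. that $e^\star\|c^\star\neq e_i\|c_i$ for every $i$. Here I would exploit the fixed-length convention on ciphertexts: since every $e$ is exactly $\kappa$ bits, the equality $e^\star\|c^\star=e_i\|c_i$ can be parsed unambiguously and forces $e^\star=e_i$ and $c^\star=c_i$. By correctness of decryption under the (honestly generated) confirmer key, $e^\star=e_i$ yields $\Gamma.\decrypt_{\Gamma.\sk}(e^\star)=r_i$; but validity of $\mu^\star$ on $m^\star$ means $c^\star$ opens, via this same value $r_i$, to $m^\star$, i.e. $c^\star=\Omega.\commit(m^\star,r_i)$. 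Combined with $c^\star=c_i=\Omega.\commit(m_i,r_i)$ and $m^\star\neq m_i$ (since $m^\star$ was never queried), this exhibits a commitment collision between $(m_i,r_i)$ and $(m^\star,r_i)$. Thus a non-fresh forgery can only arise from a break of the binding property of $\Omega$.

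To turn this into the claimed bound I would run, alongside ${\cal R}$, a binding adversary ${\cal B}$ that uses the identical simulation but generates the signature keys itself (it need not forge, only feed ${\cal A}$ a faithful environment) and, whenever ${\cal A}$'s forgery reuses some signed pair $(e_i,c_i)$, outputs the collision above. Every winning execution of ${\cal A}$ is then classified as either \emph{fresh} (in which case ${\cal R}$ wins its $\Sigma$ game) or \emph{reuse} (in which case ${\cal B}$ breaks binding), and these two cases are disjoint. Charging the reuse event to the commitments one query at a time — each of the $q_s$ signed commitments can be equivocated with probability at most $\epsilon'$ — the probability that the whole simulation remains collision-free is at least $(1-\epsilon')^{q_s}$, and conditioned on this event ${\cal R}$ converts ${\cal A}$'s success directly into a $\Sigma$-forgery; this per-query accounting of binding failures is what produces the factor $(1-\epsilon')^{q_s}$ appearing in the statement. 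I expect the delicate point to be precisely this freshness step: everything hinges on showing that reusing a previously signed $(e,c)$ pair is impossible without opening a single commitment to two distinct messages, which is exactly where both the fixed-length encoding and the decryption-correctness of the adversarially chosen confirmer key are genuinely used. The remaining pieces — perfect simulation of the signing oracle and the parsing of $e^\star\|c^\star$ — are routine.
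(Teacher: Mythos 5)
Your proposal is correct and follows essentially the same route as the paper's proof: reduce to the EUF-CMA security of $\Sigma$ by forwarding $e_i\|c_i$ to the signing oracle, use the fixed-length encoding of ciphertexts to parse any collision $e^\star\|c^\star=e_i\|c_i$ componentwise, and invoke the binding property of $\Omega$ (once per signing query, giving the $(1-\epsilon')^{q_s}$ factor) to rule out the forgery reusing a previously signed pair. Your write-up merely spells out in more detail the step the paper compresses into "with probability at least $(1-\epsilon')$, we have $m_i=m^\star$, which is a contradiction."
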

\begin{proof}

Let $\cal A$ be an EUF-CMA attacker against the construction. We construct
an EUF-CMA attacker $\cal R$ against the underlying digital signature scheme as follows.

$\cal R$ gets the parameters of the digital signature from her attacker, and
chooses suitable encryption and commitment schemes. After she gets the confirmer's key pair from $\A$, $\R$ can perfectly simulate signature queries using her own challenger. At some point, $\cal A$ will output a forgery
$\mu^\star=(c^\star,e^\star,\sigma^\star)$ on some message $m^\star$, which
was never queried before for signature. By definition, $\sigma^\star$ is a valid
digital signature on $e^\star\|c^\star$. Suppose there exists $1 \leq i \leq q_s$ such that $e^\star\|c^\star = e_i\|c_i$
where $\mu_i=(c_i,e_i,\sigma_i)$ was the output confirmer signature on the
query $m_i$. Due to the special way the strings $e_i\|c_i$ are created, this implies $(e_i,c_i)=(e^\star,c^\star)$. With probability at least $(1-\epsilon')$, we have $m_i = m^\star$ (the commitment is $(t,\epsilon')$-binding), which is a contradiction. Therefore,  $e^\star\|c^\star$ was not queried by $\cal R$ for a digital signature with probability at least $(1-\epsilon')^{q_s}$. $\cal R$ outputs to her challenger the EUF-CMA forgery  $\sigma^\star$ and $e^\star\|c^\star$. 
\qed \end{proof}
For the invisibility proof, we first need this lemma:
\begin{lemma}
\label{lemma:commitEncrypt}
Let $\Omega$ and $\Gamma$ be a commitment and a public key encryption
scheme respectively. We consider the following game between an adversary
$\cal A$ and his challenger $\cal R$:
\begin{enumerate}
\item $\cal R$ invokes the algorithms $\Gamma.\keygen(1^\kappa)$ to generate
  $(\pk,\sk)$, where $\kappa$ is a security parameter.
\item $\cal A$ outputs two messages $m_{0}$ and $m_{1}$ such that
  $m_{0} \neq m_{1}$ to his
  challenger.
\item ${\cal R}$ generates two  nonces $r_{0}$ and $r_{1}$ such
  that $r_{0} \neq r_{1}$. Next, he chooses two bits $b,b' \hasard \{0,1\}$ uniformly at
  random. Finally, he outputs to $\cal A$
  $c_{b}=\Omega.\commit(m_{b},r_{1-b'})$ and
  $e_{b'}=\Gamma.\encrypt_{\pk}(r_{b'})$.
\item $\cal A$ outputs a bit $b_{a}$ representing his guess of $c_{b}$
  not being the commitment of $m_{b}$ using the nonce
  $\Gamma.\decrypt(e_{b'})$. $\cal A$ wins the game if $b_{a} \neq b$,
  and we define his advantage 
$$
\adv({\cal A}) = \left|\Pr[b\neq b_{a}] - \frac{1}{2}\right|,
$$
where the probability is taken over the random tosses of both $\cal A$
and $\cal R$.
\end{enumerate}
If $\Omega$ is injective, $(t,\epsilon_b)$-binding, and $(t,\epsilon_{h})$-hiding, then $\adv({\cal A})$ in the above game
is at most $\frac{\epsilon_{h}}{1-\epsilon_b}$.
\end{lemma}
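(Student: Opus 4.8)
The plan is to reduce ${\cal A}$'s ability to win to the \emph{hiding} property of $\Omega$, using \emph{injectivity} to show that the ``consistency'' the adversary is implicitly asked about carries no information about $b$, and using \emph{binding} only to control a collision event that accounts for the $\frac{1}{1-\epsilon_b}$ factor.

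First I would record the key structural observation. The opening value hidden inside $e_{b'}$ is $\Gamma.\decrypt(e_{b'}) = r_{b'}$, whereas $c_b = \Omega.\commit(m_b, r_{1-b'})$ is formed with the \emph{other} nonce $r_{1-b'}\neq r_{b'}$. Since $\Omega$ is injective as a function of the opening value for the fixed message $m_b$, this forces $c_b \neq \Omega.\commit(m_b, r_{b'}) = \Omega.\commit(m_b,\Gamma.\decrypt(e_{b'}))$. Hence the statement ``$c_b$ is \emph{not} the commitment of $m_b$ under $\Gamma.\decrypt(e_{b'})$'' holds \emph{always}, independently of $b$. Consequently the only route for ${\cal A}$ to make $b_a$ correlate with $b$ is to decide which of $m_0,m_1$ underlies $c_b$; moreover the auxiliary ciphertext $e_{b'}$ depends only on $b'$ and on the nonces, so it is independent of $b$ and is of no help for this decision. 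I would next rewrite the advantage in distinguishing form: with $b,b'\hasard\{0,1\}$ independent, $\Pr[b_a\neq b] = \tfrac12 + \tfrac12\big(\Pr[b_a=1\mid b=0]-\Pr[b_a=1\mid b=1]\big)$, so that $\adv({\cal A}) = \tfrac12\big|\Pr[b_a=1\mid b=0]-\Pr[b_a=1\mid b=1]\big|$, which is exactly ${\cal A}$'s advantage in telling a commitment of $m_0$ from a commitment of $m_1$ (with an independent random nonce), in the presence of the $b$-independent data $e_{b'}$.

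I would then build a hiding distinguisher ${\cal B}$ against $\Omega$. Algorithm ${\cal B}$ runs ${\cal A}$ to obtain $m_0,m_1$, forwards them to its hiding challenger, and receives $c^\star = \Omega.\commit(m_\beta,\rho)$ for an unknown bit $\beta$ and a fresh nonce $\rho$; it generates $(\pk,\sk)\gets\Gamma.\keygen(1^\kappa)$ and a fresh nonce $\tilde r$ itself, sets $e = \Gamma.\encrypt_{\pk}(\tilde r)$, hands $(c^\star,e)$ to ${\cal A}$, and finally outputs $1-b_a$ as its guess for $\beta$. Identifying $\beta=b$, $\rho=r_{1-b'}$ and $\tilde r=r_{b'}$, this is a faithful simulation of the game (note that ${\cal B}$ can supply $\sk$ as well, since it generated the encryption key itself, so the argument does not rest on any security of $\Gamma$). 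Since ${\cal B}$ guesses $\beta$ correctly exactly when $1-b_a=b$, i.e. when $b_a\neq b$, its hiding advantage equals $\adv({\cal A})$ and is therefore at most $\epsilon_h$.

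The last step, and the one requiring care, is the $\frac{1}{1-\epsilon_b}$ factor. The embedding above is sound only as long as $c^\star$ unambiguously commits to $m_\beta$, that is, as long as it does not simultaneously open to the other challenge message; such a binding collision occurs with probability at most $\epsilon_b$, and on its complement the reduction perfectly reproduces the game. Restricting to this event rescales the bound, giving $\adv({\cal A})(1-\epsilon_b)\le\epsilon_h$ and hence $\adv({\cal A})\le\frac{\epsilon_h}{1-\epsilon_b}$. I expect the main obstacle to be exactly this bookkeeping: arguing rigorously that $e_{b'}$ leaks nothing about $b$ — so that it is injectivity, not the unproven security of $\Gamma$, that neutralises the auxiliary ciphertext — and isolating the precise binding event on which the hiding reduction must be permitted to fail.
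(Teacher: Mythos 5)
Your proposal is correct and follows essentially the same route as the paper: both reduce to the hiding property of $\Omega$ by forwarding $m_0,m_1$ to the hiding challenger, pairing the returned commitment with a self-generated encryption of an independent fresh nonce, outputting $1-b_a$, and invoking injectivity (to show $c_b$ is never the commitment of $m_b$ under $\Gamma.\decrypt(e_{b'})$, so the ciphertext is useless even if inverted) together with binding (to exclude, except with probability $\epsilon_b$, the collision with the other challenge message), which yields the same $(1-\epsilon_b)$ rescaling and the bound $\adv({\cal A})\le\frac{\epsilon_h}{1-\epsilon_b}$. Your write-up is in fact somewhat more explicit than the paper's about why $e_{b'}$ is independent of $b$ and about the exact event on which the reduction is faithful, but the argument is the same.
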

\begin{proof}
Let $\epsilon$ be the advantage of $\cal A$ in the game above. We will
construct an adversary $\cal R$ which breaks the hiding property of
the used commitment with advantage $\epsilon\cdot(1-\epsilon_b)$.
\begin{itemize}
\item $\cal R$ gets from $\cal A$ the messages $m_{0},m_{1}$, and
  forwards them to her own challenger.
\item $\cal R$ receives from her challenger the commitment
  $c_{b}=\Omega.\commit(m_{b},r)$ for some $b \hasard \{0,1\}$ and
  some nonce $r$.
\item $\cal R$ generates a nonce $r'$ and outputs to $\cal A$
$c_{b}$ and $e=\Gamma.\encrypt_{\pk}(r')$.
\item When $\cal A$ outputs a bit $b_{a}$, $\cal R$ outputs to her
  challenger $1-b_{a}$. 
\end{itemize}
If $\cal A$ can by some means get hold of $r'$, then he can compute
$c_{i}=\Omega.\commit(m_{i},r')$, $i=0,1$. Since $\Omega$ is
injective and binding, then $c_{b} \neq \Omega.\commit(m_{b},r') $
and $c_{b} \neq \Omega.\commit(m_{1-b},r')$ respectively, i.e. $c_{b} \notin\{c_{0},c_{1}\}$. Thus, $\A$ will get no information on the message underlying $c_b$ even if he manages to invert $e$.

We have by definition: 
\begin{eqnarray*}
\adv({\cal R}) &=& (1-\epsilon_b)\left|\Pr[1-b_{a} = b] - \frac{1}{2}\right|\\
 &=&  (1-\epsilon_b) \left|\Pr[b_{a}\neq b] - \frac{1}{2} \right|\\
&=& \epsilon\cdot(1-\epsilon_b)
\end{eqnarray*}
\qed\end{proof}
\begin{remark}
Note that the above lemma holds true regardless of the used
encryption $\Gamma$. For instance, it can be used with encryption schemes that do not require any kind of security. 
\end{remark}

\begin{thm}
\label{thm:newCtEaS-invisibility-CDCS}
Given $(t,q_s,q_v,q_{sc}) \in \mathbb{N}^4$ and $(\varepsilon,\epsilon') \in
[0,1]^2$, the construction depicted above is
$(t,\epsilon,q_s,q_v,q_{sc})$-INV-CMA secure if it uses a
$(t,\epsilon',q_s)$-SEUF-CMA secure digital signature, an injective,
$(t,\epsilon_b)$-binding, and ($t,\epsilon_h$)-hiding commitment, and a
$(t+q_s(q_v+q_{sc}),\frac{1}{2}(\epsilon+\frac{\epsilon_h}{1-\epsilon_b})(1-\frac{e_h}{1-\epsilon_b})\cdot[(1-\epsilon_b)\cdot(1-\epsilon')]^{q_v+q_{sc}})$-IND-CPA secure encryption scheme.
\end{thm}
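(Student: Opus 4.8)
The plan is to reduce the INV-CMA security of the CtEtS construction to the IND-CPA security of the underlying encryption scheme $\Gamma$, following the template of the proof of Theorem \ref{thm:INDPCASufficient-CtEaS} but exploiting the fact that the signature now covers $e\|c$, which is exactly what lets us dispense with the plaintext-checking oracle and hence lower the requirement from IND-PCA to IND-CPA. Concretely, I would build a reduction $\cal R$ that receives $\Gamma.\pk$ from its IND-CPA challenger, chooses a $(t,\epsilon',q_s)$-SEUF-CMA signature $\Sigma$ together with a key pair $(\Sigma.\sk,\Sigma.\pk)$ and an injective, binding, hiding commitment $\Omega$, and hands $(\Sigma.\pk,\Gamma.\pk)$ to $\cal A$. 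The central observation is that, since $\sigma$ signs $e\|c$, any confirmer signature $\mu=(c,e,\sigma)$ whose $\sigma$ verifies on $e\|c$ must, under SEUF-CMA, have been produced by the signing oracle, so $\cal R$ never needs to decrypt $e$ to decide the validity of a queried $\mu$.

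For $\sign$ and $\sconfirm$ queries, $\cal R$ runs the honest algorithms and stores in a list $\cal L$ each queried message together with $\mu=(c,e,\sigma)$ and the nonce $r$ used, so that $c=\Omega.\commit(m,r)$ and $e=\Gamma.\encrypt(r)$. To answer a $\convert$ or a $\{\confirm,\deny\}$ query on $((c,e,\sigma),m)$, $\cal R$ first checks $\Sigma.\verify(\sigma,e\|c)$; if it fails, $\mu$ is invalid and $\cal R$ denies or outputs $\perp$. If it succeeds, $\cal R$ looks up the (by SEUF-CMA, essentially unique) record of $\cal L$ carrying this $e\|c$, reads the associated message $m'$ and nonce $r$, and declares $\mu$ valid on $m$ iff $m=m'$, returning the stored opening $r$ upon conversion and simulating the concurrent zero-knowledge proof upon confirmation or denial. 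Since $c=\Omega.\commit(m',r)$, the binding property guarantees that $c=\Omega.\commit(m,r)$ can hold for $m\neq m'$ only with probability $\epsilon_b$. This reading of $\cal L$ replaces the PCA oracle of Theorem \ref{thm:INDPCASufficient-CtEaS}; it deviates from the real experiment only when $\mu$ is valid yet its $(e\|c,\sigma)$ was not output by the signing oracle (a strong forgery, probability $\leq\epsilon'$) or when a binding collision occurs (probability $\leq\epsilon_b$), so over the $q_v+q_{sc}$ verification and conversion queries the simulation is faithful with probability at least $[(1-\epsilon_b)(1-\epsilon')]^{q_v+q_{sc}}$, which is the factor $p_{\sf sim}$ in the statement.

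In the challenge phase $\cal A$ outputs $m_0\neq m_1$; $\cal R$ samples two distinct nonces $r_0\neq r_1$, submits them to its IND-CPA challenger and receives $e^\star=\Gamma.\encrypt(r_\beta)$ for the hidden bit $\beta$, then draws $b'\hasard\{0,1\}$, forms $c^\star=\Omega.\commit(m_{b'},r_{b'})$, signs $\sigma^\star=\Sigma.\sign(e^\star\|c^\star)$, and returns $\mu^\star=(c^\star,e^\star,\sigma^\star)$. When $\beta=b'$ the nonces match and $\mu^\star$ is a genuine confirmer signature on $m_{b'}$, i.e. a faithful INV-CMA challenge with hidden index $b'$; when $\beta=1-b'$ the commitment nonce $r_{b'}$ differs from the encrypted nonce $r_{1-b'}$, so by injectivity $\mu^\star$ is a \emph{mismatched} (invalid) challenge. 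This mismatched case is exactly the game of Lemma \ref{lemma:commitEncrypt}, whose bound $\frac{\epsilon_h}{1-\epsilon_b}$ limits how much $\cal A$'s behaviour can betray that the challenge is inconsistent. In the post-challenge phase $\cal R$ answers as before, additionally invoking SEUF-CMA (and injectivity) to declare invalid any queried $(c^\star,\tilde e,\tilde\sigma)\neq\mu^\star$ or $(\tilde c,e^\star,\tilde\sigma)\neq\mu^\star$ on $m_0$ or $m_1$, just as in the post-challenge analysis of Theorem \ref{thm:KEM-invisibility}.

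Finally $\cal R$ converts $\cal A$'s guess $b_a$ into its own guess for $\beta$ (output $b'$ when $b_a=b'$ and $1-b'$ otherwise, i.e. output $b_a$), and its advantage is the product $p_{\sf sim}\cdot p_{\sf chal}$. The factor $p_{\sf sim}$ is routine. The \textbf{main obstacle} is the computation of $p_{\sf chal}$: one must condition on whether the IND-CPA bit yields a valid challenge ($\beta=b'$, where $\cal A$ guesses with advantage $\epsilon$) or a mismatched one ($\beta=1-b'$, where Lemma \ref{lemma:commitEncrypt} caps the exploitable bias at $\delta=\frac{\epsilon_h}{1-\epsilon_b}$), and combine the two cases, while simultaneously using the same lemma to prevent $\cal A$ from detecting the mismatch, in order to obtain the stated $p_{\sf chal}=\frac{1}{2}(\epsilon+\frac{\epsilon_h}{1-\epsilon_b})(1-\frac{\epsilon_h}{1-\epsilon_b})$. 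Arranging the conditioning so that the $(\epsilon+\delta)(1-\delta)$ shape emerges, rather than a looser bound of the form $\frac{\epsilon\pm\delta}{2}$, is where the care is required.
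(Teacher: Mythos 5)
Your proposal follows essentially the same route as the paper's proof: the same list-based simulation of the conversion/verification oracles (justified by SEUF-CMA of the signature on $e\|c$ and the binding of the commitment, giving the factor $[(1-\epsilon_b)(1-\epsilon')]^{q_v+q_{sc}}$), the same challenge construction in which the IND-CPA challenger's bit determines whether the commitment nonce matches the encrypted one, the same appeal to Lemma \ref{lemma:commitEncrypt} to bound the adversary's behaviour in the mismatched case, and the same conditioning that yields $p_{\sf chal}=\frac{1}{2}(\epsilon+\frac{\epsilon_h}{1-\epsilon_b})$ with the $(1-\frac{\epsilon_h}{1-\epsilon_b})$ factor absorbed into the simulation probability. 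The only divergence is bookkeeping (which factor you attach to $p_{\sf sim}$ versus $p_{\sf chal}$), and the product matches the stated bound.
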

\begin{proof}(Sketch)

Let $\Sigma$, $\Gamma$, and $\Omega$ be the signature, encryption, and commitment schemes resp. underlying the construction. Let further $\R$ be the reduction using the invisibility attacker $\A$ in order to break $\Gamma$.

$\cal R$ gets the public key of $\Gamma$ from her challenger. She further generates the parameters of $\Sigma$ (for instance ($\Sigma.\sk$,$\Sigma.\pk$)) and of $\Omega$.
\begin{description}
\item \textbf{Pre-challenge phase.} 
Simulation of $\sign$ and $\sconfirm$ queries is done as dictated by the standard algorithm/\-protocol, with the exception of maintaining a list $\cal L$ of the strings used to produce commitments on the queried messages in addition to their encryptions. 

\noindent For a verification (conversion) query, $\R$ looks up the list $\cal L$ for the decryption of the first component of the signature; if it is found, $\R$ simulates the confirmation protocol (issues the converted signature in case of a conversion query), otherwise she simulates the denial protocol (issues the symbol $\perp$ in case of a conversion). The difference between this simulation and the real execution of the
algorithm manifests when a queried signature, say $(c_i,e_i,\sigma_i)$, is valid, on the queried message $m_i$, but $e_i$ is not present in the list. We
distinguish two cases, either the underlying message $m_i$ has been queried
previously on not. In the latter case, such a signature would correspond to an
existential forgery on the construction, thus, to an existential forgery on $\Sigma$ or to breaking the binding property of $\Omega$. In the former case, let $(c_j,e_j,\sigma_j)$ be
the output signature to $\cal A$ on the message $m_i$. We have
$e_i\|c_i \neq e_j\|c_j$ since $e_i \neq e_j$, and both $e_i$ and $e_j$ are the
$n$-bit prefixes of $e_i\|c_i$ and $e_j\|c_j$ resp. We conclude that the adversary would have to compute a digital signature on a string for which he had never  obtained a signature. Thus, the query would lead to an existential forgery on $\Sigma$. 

\noindent Bottom line is, the probability that the provided simulation does not deviate from the real execution is at
least $[(1-\epsilon')\cdot(1-\epsilon_b)]^{q_v+q_{sc}}$.

\item \textbf{Challenge phase.} At some point, $\cal A$ outputs two messages $m_0,m_1$ to $\cal R$. The latter chooses  two different random strings $r_0$ and $r_1$ and hands them to her challenger. $\cal R$ receives then a ciphertext $e_{b'}$, encryption of $r_{b'}$, for some $b' \in \{0,1\}$. To answer her challenger, $\cal R$  computes a commitment $c_b$ on the message $m_b$ for some $b \hasard \{0,1\}$ using the string $r_{b}$, then outputs $\mu=(c_{b},e_{b'},\Sigma.\sign_{\Sigma.\sk}(e_{b'}\|c_b))$ as a challenge confirmer signature to $\cal A$. Two cases: either $\mu$ is a valid confirmer signature on $m_b$ (if
$b'=b$), or it is not a valid signature on either $m_0$ or $m_1$. $\cal A$ cannot tell the difference between the provided challenge and that in a real attack with probability at least $1-\frac{\epsilon_h}{1-\epsilon_b}$ according to Lemma \ref{lemma:commitEncrypt}.

\item \textbf{Post-challenge phase.} $\cal A$ continues to issue queries and
  $\cal R$ continues to handle them as before. Note that in this phase, $\cal R$ might get a
verification (conversion) query on a signature $(c_b,e_b',-) \neq \mu$ and the message
$m_b$. $\cal R$ will respond to such a query by running the denial protocol
(output $\perp$). This simulation differs from the real algorithm when
$(c_b,e_b',-)$ is valid on $m_b$. Again, such a scenario won't happen with
probability at least  $[(1-\epsilon')\cdot(1-\epsilon_b)]^{q_v+q_{sc}}$, because the query would
form a strong existential forgery on $\Sigma$.

\item \textbf{Final output.} Let $b_a$ be the bit output by  $\cal A$. $\cal R$ will output $b$
 to her challenger in case $b=b_a$ and $1-b$ otherwise.

\end{description}
\noindent The advantage of $\cal A$ in such an attack is defined
by $\epsilon = \adv({\cal A}) = \left|\Pr[b_a=b|b'=b] - \frac{1}{2}\right|$.

\noindent We assume again without loss of generality that $\epsilon=\Pr[b_a=b|b'=b] -
\frac{1}{2}$. The advantage
of $\cal R$ is then given by the product $p_{\sf sim}\cdot p_{\sf chal}$, where $p_{\sf sim}$ is the probability of providing a simulation indistinguishable from that in a real attack; it is equal to $(1-\frac{e_h}{1-\epsilon_b})\cdot[(1-\epsilon_b)\cdot(1-\epsilon')]^{q_v+q_{sc}}$. Whereas $p_{\sf chal}$ is the probability that $\cal R$ solves her challenge provided the simulation is correct:

\begin{eqnarray*}
p_{\sf chal} &=& \left [\Pr[b=b_a,b'=b ] +
  \Pr[b \neq b_a,b' \neq b] - \frac{1}{2} \right ]\\
               &=&  \frac{1}{2}\left [\Pr[b=b_a|b'=b
                 ] + \Pr[b \neq b_a|b' \neq b] - 1 \right ]\\
               &=& \frac{1}{2}\left [(\epsilon +
                 \frac{1}{2}) + (\frac{\epsilon_{h}}{1-\epsilon_b}+\frac{1}{2}) - 1 \right ]\\
&=& \frac{1}{2}(\epsilon+\frac{\epsilon_h}{1-\epsilon_b})
\end{eqnarray*}

In fact, $\Pr[b' \neq b]=\Pr[b'= b] =
\frac{1}{2}$ as $b \hasard\{0,1\}$. Moreover, if $b'\neq
b$, then the probability that $\cal A$ answers $1-b$  is $\frac{1}{2}$ greater
than the advantage of the adversary in the game defined in Lemma
\ref{lemma:commitEncrypt}, namely $\frac{\epsilon_{h}}{1-\epsilon_b}$ .

\qed \end{proof}

\section{Practical Realizations of CDCS}
\label{sec:realizations-CDCS}
In this section, we provide practical realizations of confirmer signatures from StE, CtEtS, and EtS. We first introduce some classes of basic primitives that constitute important building blocks for the mentioned constructions. Then, we proceed to the description of our concrete instantiations of the paradigms.


\subsection{The class $\bbbs$ of signatures}
\begin{definition}
\label{def:signatureClass}
$\bbbs$ is the set of all digital signatures for which there exists a pair
of efficient algorithms, $\convert$ and $\retrieve$, where $\convert$ inputs a public
key $\pk$, a message $m$, and a valid signature $\sigma$ on $m$
(according to $\pk$) and
outputs the pair $(s,r)$ such that:
\begin{enumerate}
\item $r$ reveals no information about $m$ nor about $\pk$, i.e. there exists an algorithm $\mathsf{simulate}$ such that for every public key $\pk$ from the key space
  and for every message $m$ from the message space, the output $\mathsf{simulate}(\pk,m)$ is statistically
  indistinguishable from $r$.
\item there exists an algorithm $\compute$ that on the input $\pk$, the
  message $m$ and $r$, computes a description of a function $f: (\bbbg,\ast) \rightarrow (\bbbh,\circ_s)$:
\begin{itemize}
\item where $(\bbbg,\ast)$ is a group and $\bbbh$ is a set equipped with the binary operation $\circ_s$ ,
\item $\forall S,S' \in \bbbg$: $f(S \ast S')=f(S) \circ_s f(S')$.

\end{itemize}
and an $I \in \bbbh$, such that $f(s) = I$.
\end{enumerate}
and an algorithm $\retrieve$  that inputs $\pk$, $m$ and the correctly
converted pair $(s,r)$ and retrieves\footnote{Note that the $\retrieve$ algorithm suffices to ensure the non-triviality of the map $f$; given a pair $(s,r)$ satisfying the conditions described in the definition, one can efficiently recover the original signature on the message.} the signature $\sigma$ on $m$.
\end{definition}

The class $\bbbs$ differs from the class $\bbbc$, introduced in \cite{ShahandashtiSafavi-Naini2008}, in the condition required for the function $f$. In fact, in our description of $\bbbs$, the
function $f$ should satisfy a homomorphic property, whereas in the class $\bbbc$,
$f$ should only possess an efficient zero knowledge protocol for proving
knowledge of a preimage of a value in its range. We show in Theorem
\ref{thm:PKsignatures} that signatures in $\bbbs$ accept also efficient
ZK proofs for proving knowledge of preimages, and thus belong
to the class $\bbbc$. Conversely, one can claim that signatures in
$\bbbc$ are also in $\bbbs$, at least from a practical point of view, since it is
not known in general how to achieve efficient ZK protocols for proving knowledge of preimages of $f$ without having the latter
item satisfy some homomorphic properties. It is worth noting that similar to
the classes $\bbbs$ and $\bbbc$ is the class of signatures introduced in
\cite{GoldwasserWaisbard2004}, where the condition of having an efficient
 ZK protocol for proving knowledge of preimages is weakened to having
only a \emph{witness hiding} proof of knowledge. Again, although this is a
weaker assumption on $f$, all illustrations of signatures in this wider
class happen to be also in $\bbbc$ and $\bbbs$. Our resort to specify the homomorphic
property on $f$ will be justified later when describing the
confirmation/denial protocols of the resulting construction. In fact, these
protocols are concurrent composition of proofs and therefore need
a careful study as it is known that zero knowledge is not closed under concurrent
composition. Besides, the class $\bbbs$ encompasses most proposals that were
suggested so far, e.g. \cite{BellareRogaway1996,Schnorr1991,GennaroHaleviRabin1999,BonehLynnShacham2004,PointchevalStern2000,CramerShoup2000,CamenischLysyanskaya2002,CamenischLysyanskaya2004,BonehBoyen2004,ZhangSafaviNainiSusilo2004,Waters2005}. The reason why $\bbbs$ includes most digital signature schemes lies in
the fact that a signature verification consists in applying a function $f$ to
the ``vital'' part of the signature in question, then comparing the result to
an expression computed from the message underlying the signature, the
``auxiliary'' or ``simulatable'' part of the signature, and finally the public
parameters of the signature scheme. The function $f$ need not be one-way,
however the signature scheme would be trivially forgeable if it is not the case. Moreover, it ($f$) consists most of the time of an arithmetic operation (e.g. exponentiation, raising
to a power, pairing computation) which satisfies an easy homomorphic property.

\begin{figure*}
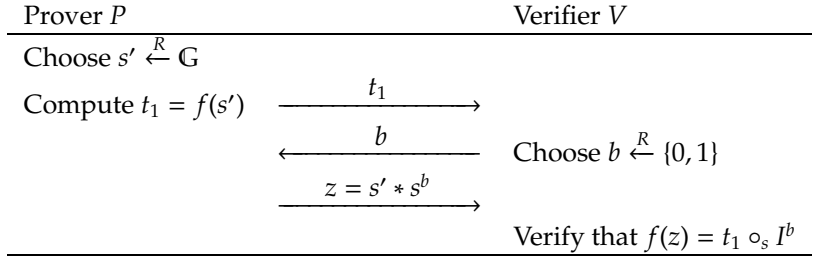

\begin{center}
\begin{tabular}{lcl}
Prover $P$& & Verifier $V$ \\
\hline

 Choose $s' \hasard \bbbg$ & & \\
 Compute $ t_1=f(s')$  & $\xrightarrow{\makebox[25mm]{$t_1$}}$ & \\
 
&  $\xleftarrow{\makebox[25mm]{$b$}}$  &  Choose $b \hasard \{0,1\}$\\
 
 &  $\xrightarrow{\makebox[25mm]{$z=s'\ast s^b$}}$  &  \\
 & & Verify that $f(z)=t_1\circ_s I^b$ \\
\hline
\end{tabular}

\caption{\label{fig:PKsignature} Proof of membership to the language
  $\{s \colon f(s) = I\}$ $\textbf{Common input: } I$ and $\textbf{Private
  input}: s$ }
\end{center}
\end{figure*}

\begin{thm}
\label{thm:PKsignatures}
The protocol depicted in Figure \ref{fig:PKsignature} is an efficient zero
knowledge protocol for
proving knowledge of preimages of the function $f$ described in Definition \ref{def:signatureClass}.

\end{thm}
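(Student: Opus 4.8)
The plan is to verify that the three-move protocol of Figure \ref{fig:PKsignature} is a $\Sigma$ protocol for the relation $\{(I,s)\colon f(s)=I\}$ — that is, to establish completeness, special soundness (from which the knowledge extractor follows), and honest-verifier zero knowledge — and then to amplify it into a full proof of knowledge by repetition. The one structural fact I would isolate at the outset is that, because $\bbbg$ is a group and $f$ is a homomorphism into $(\bbbh,\circ_s)$, the image $f(\bbbg)$ is itself a group: it is closed under $\circ_s$, its neutral element is $f(e)$ (with $e$ the identity of $\bbbg$), and the inverse of $f(S)$ is $f(S^{-1})$. Every element that the verifier and the extractor manipulate ($t_1$, $I$, $I^b$) lives in this image, so inverses there are well defined.

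First I would check completeness. If $P$ is honest then $z = s' \ast s^{b}$, and the homomorphic property gives $f(z) = f(s') \circ_s f(s)^{b} = t_1 \circ_s I^{b}$, reading $I^{0}=f(e)$ as the neutral element, which is exactly the verifier's test. For special soundness I would exhibit the extractor: given two accepting transcripts sharing the commitment $t_1$ but carrying the distinct challenges, $(t_1,0,z_0)$ and $(t_1,1,z_1)$, the verification equations read $f(z_0)=t_1$ and $f(z_1)=t_1 \circ_s I$. Setting $\tilde{s}=z_0^{-1}\ast z_1 \in \bbbg$ and applying the homomorphism gives $f(\tilde{s}) = f(z_0^{-1}) \circ_s f(z_1) = f(z_0)^{-1} \circ_s f(z_1) = t_1^{-1} \circ_s (t_1 \circ_s I) = I$, so $\tilde{s}$ is a valid preimage, computable from the two transcripts.

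For zero knowledge I would build an honest-verifier simulator: on target challenge $b$, draw $z \hasard \bbbg$ and set $t_1 = f(z) \circ_s (I^{b})^{-1}$, outputting $(t_1,b,z)$. This transcript always satisfies the verification equation, and since in a genuine run $z = s' \ast s^{b}$ is uniform over $\bbbg$ (multiplication by $s^{b}$ is a bijection) with $t_1$ then determined by $z$ and $b$, the simulated and real transcripts are identically distributed. Because the challenge space has only two elements, each execution has knowledge error $1/2$; to obtain a bona fide proof of knowledge with negligible error, and to upgrade honest-verifier zero knowledge to (concurrent) zero knowledge against arbitrary verifiers, I would compose the basic protocol sequentially $\kappa$ times, which drives the error to $2^{-\kappa}$ while preserving zero knowledge, and apply the extractor to any pair of rounds that share a commitment but differ in challenge.

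The main obstacle I anticipate is structural rather than computational: the definition endows $\bbbh$ only with a binary operation, so I must justify that the inverse operations used both in the extractor ($t_1^{-1}$) and in the simulator ($(I^{b})^{-1}$) are legitimate and efficiently computable. This is resolved by the observation that $f(\bbbg)$ is a genuine group and, in every concrete instantiation, $\bbbh$ is itself a group with efficiently computable inverses; I would record this as the standing assumption on $\circ_s$. A secondary point worth a remark is precisely the binary-challenge structure, whose per-round soundness error of $1/2$ makes the repetition argument indispensable to the claim of an actual proof of knowledge.
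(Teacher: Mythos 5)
Your proof is correct and follows exactly the ``standard techniques'' that the paper invokes without elaboration: completeness from the homomorphic property, special soundness via the extractor $z_0^{-1}\ast z_1$ from two accepting transcripts with the same commitment, and an HVZK simulator, followed by repetition to drive down the $1/2$ knowledge error (which the paper itself addresses later in its section on reducing the soundness error). Your explicit flagging of the need for efficiently computable inverses in $\bbbh$ (or at least in the subgroup $f(\bbbg)$) is a legitimate point that the paper's terse Definition \ref{def:signatureClass} leaves implicit, and is satisfied in all the instantiations considered.
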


The proof is straightforward using the standard techniques.\qed

\subsection{The class $\bbbe$ of encryption schemes}

\begin{definition}
\label{def:cryptosystemClass-EoS}
$\bbbe$ is the set of public key encryption schemes $\Gamma$ that have the following
properties:
\begin{enumerate}
\item The message space is a group ${\cal M }=(\bbbg,\ast)$ and the ciphertext
  space $\cal C$ is a set equipped with a binary operation $\circ_e$.
\item Let $m \in {\cal M}$ be a message and $c$ its encryption with respect to
  a key $\pk$. On the common input $m$, $c$, and $\pk$, there exists an efficient zero knowledge proof $\zkp$ of $m$ being the
  decryption of $c$  with respect to $\pk$. The private input of the prover is either
  the private key $\sk$, corresponding to $\pk$, or the randomness used to
  encrypt $m$ in $c$.
\item $\forall m,m' \in {\cal M}$, $\forall \pk \colon$ 
$$\Gamma.\encrypt_{\pk}(m \ast m') = \Gamma.\encrypt_{\pk}(m) \circ_e \Gamma.\encrypt_{pk}(m')$$
Moreover, given the randomnesses used to encrypt $m$ in $\Gamma.\encrypt_{\pk}(m)$ and
$m'$ in $\Gamma.\encrypt_{pk}(m')$, one can deduce (using only the public parameters) the randomness used to
encrypt $m \ast m'$ in \\$\Gamma.\encrypt_{\pk}(m) \circ_e \Gamma.\encrypt_{pk}(m')$.
\end{enumerate} 
\end{definition}

Examples of encryption schemes include for instance ElGamal's encryption \cite{ElGamal1985}, Paillier's encryption \cite{Paillier1999}, or the Boneh-Boyen-Shacham scheme \cite{BonehBoyenShacham2004}. In fact, these schemes are homomorphic and possess an efficient proof of correctness of a decryption, namely the proof of equality of two discrete logarithms in case of \cite{ElGamal1985,BonehBoyenShacham2004} and the proof of knowledge of an $N$-th root in case of  \cite{Paillier1999}. Note that both  ElGamal's and  Boneh-Boyen-Shacham's encryptions are derived from the KEM/DEM paradigm and are therefore suitable for use in the new StE paradigm.
\begin{figure*}
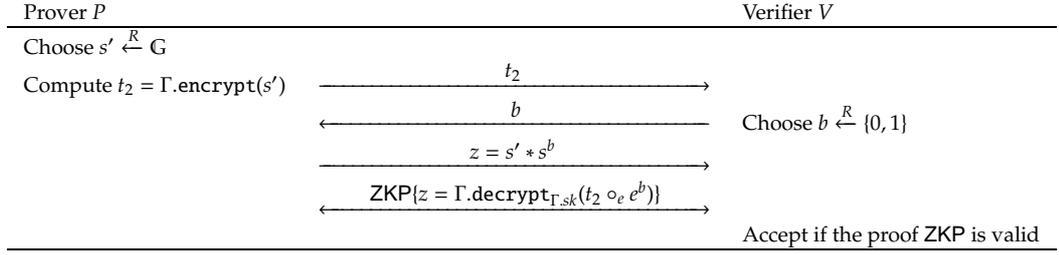

\footnotesize
\begin{center}
\begin{tabular}{lcl}
Prover $P$& & Verifier $V$ \\
\hline

 Choose $s' \hasard \bbbg$ & & \\
 Compute $ t_2= \Gamma.\encrypt(s')$  & $\xrightarrow{\makebox[50mm]{$t_2$}}$ & \\
 
&  $\xleftarrow{\makebox[50mm]{$b$}}$  &  Choose $b \hasard \{0,1\} $\\
 
 &  $\xrightarrow{\makebox[50mm]{$z=s' \ast s^b$}}$  &  \\
&  $\xleftrightarrow{\makebox[50mm]{$\zkp\{z=\Gamma.\decrypt_{\Gamma.\sk}(t_2\circ_e e^b)\}$}}$  &  \\

 & & Accept if the proof $\zkp$ is valid \\
\hline
\end{tabular}

\caption{\label{fig:PKcryptosystemEoS} Proof system of membership 
  to  the language $\{ s \colon s = \Gamma.\decrypt_{\Gamma.\sk}(e)\}$ $\textbf{Common input: } (e,\Gamma.\pk)$ and $\textbf{Private input: }$ $s$ and $\Gamma.\sk$ or randomness encrypting $s$ in $e$}

\end{center}
\end{figure*}

\begin{thm}
\label{thm:PKcryptosystemsEoS}
Let $\Gamma $ be an encryption scheme from the above class $\bbbe$. Let further $c$ be an encryption
of some message $m$ under some public key $\pk$. The protocol depicted in Figure
\ref{fig:PKcryptosystemEoS} is a zero knowledge protocol for proving knowledge of the decryption of $c$. 
\end{thm}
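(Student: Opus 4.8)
The plan is to prove the three defining properties of a zero-knowledge proof of knowledge---\textbf{completeness}, \textbf{(special) soundness}, and \textbf{zero knowledge}---for the three-move protocol of Figure~\ref{fig:PKcryptosystemEoS}, treating the inner proof $\zkp\{z = \Gamma.\decrypt_{\Gamma.\sk}(t_2 \circ_e e^b)\}$ as a black box whose soundness and zero-knowledgeness are supplied by property~2 of Definition~\ref{def:cryptosystemClass-EoS}. The one algebraic tool I will use throughout is the homomorphic property~3 in its decryption form, namely \[ \Gamma.\decrypt_{\Gamma.\sk}(c_1 \circ_e c_2) = \Gamma.\decrypt_{\Gamma.\sk}(c_1) \ast \Gamma.\decrypt_{\Gamma.\sk}(c_2), \] which follows from the homomorphism on $\Gamma.\encrypt$ together with correctness of decryption. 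Here $e$ denotes the common-input ciphertext (the $c$ of the statement), and the prover holds either $\Gamma.\sk$ or the coins encrypting $s$ in $e$.

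First, \emph{completeness}: for the honest prover $z = s' \ast s^b$, and the inner statement becomes, for $b=0$, $s' = \Gamma.\decrypt_{\Gamma.\sk}(\Gamma.\encrypt(s'))$, and for $b=1$, after using property~3 to rewrite $t_2 \circ_e e = \Gamma.\encrypt(s' \ast s)$, it becomes $s' \ast s = \Gamma.\decrypt_{\Gamma.\sk}(\Gamma.\encrypt(s' \ast s))$; both are true, and the prover runs the genuine $\zkp$ either from $\Gamma.\sk$ or from the randomness of $t_2 \circ_e e^b$ (obtainable via the randomness clause of property~3). For \emph{special soundness}, I take two accepting transcripts with a common first message $t_2$ and the two challenges $b=0,1$; the soundness of the inner $\zkp$ then guarantees $z_0 = \Gamma.\decrypt_{\Gamma.\sk}(t_2)$ and $z_1 = \Gamma.\decrypt_{\Gamma.\sk}(t_2 \circ_e e)$, whence the decryption homomorphism gives $z_1 = z_0 \ast s$ with $s = \Gamma.\decrypt_{\Gamma.\sk}(e)$, and the extractor outputs the witness $s = z_0^{-1} \ast z_1 \in \bbbg$. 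Since each run has challenge space $\{0,1\}$, the knowledge error is $1/2$ and is driven down by repetition.

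For \emph{zero knowledge} I would exhibit a simulator that, given the (honest) challenge $b$, samples $z \hasard \bbbg$, forms a fresh encryption $w = \Gamma.\encrypt_{\Gamma.\pk}(z)$ with known coins, and sets $t_2 := w \circ_e (e^b)^{-1}$, so that $t_2 \circ_e e^b = w$ holds by construction and decrypts to $z$. The inner transcript is then generated by running the real $\zkp$ prover on the true statement $z = \Gamma.\decrypt_{\Gamma.\sk}(w)$ using the known encryption coins as witness (equivalently by invoking the $\zkp$ simulator, as $\zkp$ is zero knowledge). For $b=0$ one has $t_2 = w$, a fresh encryption of the uniform $z$, matching the honest distribution exactly; for $b=1$ one must argue that $w \circ_e e^{-1}$ is a uniformly distributed fresh encryption of $z \ast s^{-1}$. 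Full zero knowledge against a cheating verifier then follows by the standard rewinding argument made possible by the binary challenge.

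The step I expect to require the most care---the main obstacle---is precisely this last distributional claim: I must verify that translating a uniformly random encryption of $z$ by the fixed ciphertext $e^{-1}$ yields a uniformly random encryption of $z \ast s^{-1}$. I would derive it from the randomness clause of property~3 together with the group structure that $\circ_e$ induces on the ciphertext space of the schemes in $\bbbe$ (as is concretely the case for ElGamal, Paillier, and Boneh--Boyen--Shacham), giving perfect---or, should the encryption coins only be statistically uniform on their coset, statistically close---simulation. A secondary point needing attention is the clean modular use of the embedded $\zkp$: its soundness underpins extraction while its zero-knowledgeness underpins simulation, and because this protocol is reused inside concurrent compositions later in the paper, I would keep the simulator straight-line in its outer moves so as not to obstruct that later analysis.
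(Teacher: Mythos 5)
Your proposal is correct and follows essentially the same route as the paper's proof: completeness by direct computation, extraction of $s = z_0^{-1} \ast z_1$ from two accepting transcripts with the same $t_2$ using the homomorphism induced on decryption, and a simulator that guesses the challenge, sets $t_2 = \Gamma.\encrypt(z) \circ_e e^{-b'}$ for a uniform $z$, and rewinds on a wrong guess. The one place you go beyond the paper is in flagging the distributional claim for the $b=1$ branch (that $\Gamma.\encrypt(z)\circ_e e^{-1}$ is distributed as a fresh encryption of $z \ast s^{-1}$), which the paper asserts implicitly via ``both first messages are encryptions of random values''; your caveat that this may only be statistical for a general member of $\bbbe$ is a legitimate refinement, not a divergence.
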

\begin{proof}

Completeness is straightforward. 

Validity (knowledge extractability) is also easy. In fact, suppose a malicious prover $\tilde{P}$ can successfully answer two different challenges $0$ and $1$ (challenge space is $\{0,1\}$) for the same commitment value $t_2$:
$$
z_1 = \Gamma.\decrypt(t_2) ~\wedge ~ z_2 = \Gamma.\decrypt(t_2\circ_e e)
$$
Since $\circ_e$ induces a group law in the ciphertext space of $\Gamma$, we have:
$z_1^{-1} = \Gamma.\decrypt(t_2^{-1})$. It follows that $\tilde{P}$ can compute a decryption of $e$ as $z_1^{-1}\ast z_2 = \Gamma.\decrypt(e)$. We conclude that the soundness error probability of the protocol is at most $1/2$ (we assume that $\zkp$ has negligible soundness error). We will see in Subsection \ref{subsec:soundnessError} how to reduce the soundness error without necessarily repeating the protocol many times.

For the zero-knowledgeness, we describe the following simulator:

\begin{enumerate}
\item Generate uniformly a random challenge $b' \hasard \{0,1\}$. Choose
  a random $z \hasard \bbbg$, compute $t_2=\Gamma.\encrypt_{\Gamma.\pk}(z) \circ_e e^{-b'}$ and sends it to the verifier.
\item Get $b$ from the verifier.
\item If $b= b'$, the simulator sends back $z$ and simulates the proof
  $\mathsf{ZKP}$ for $z$ being the decryption of $t_2\circ_e e^b$ (this proof
  is simulatable since it is zero knowledge by assumption). Otherwise, it goes to Step 2 (\emph{rewinds} the verifier).
\end{enumerate}

The prover's first message is always an encryption of a random value, and so is the first message of the simulator. Since $b'$ is chosen uniformly at random
from $\{0,1\}$, then, the probability that the simulator does not rewind the verifier
is $1/2$, and thus the simulator runs in polynomial time in the security parameter. Finally, the
distribution of the answers (last messages) of the prover and of the simulator is the
same. We conclude that the proof is perfectly zero knowledge.

\end{proof}

\subsection{The class $\mathbb{C}$ of commitments}
\begin{definition}
\label{def:commitClass}
$\mathbb{C}$ is the set of all commitment schemes for which there exists an algorithm
$\compute$ that  inputs the commitment  key
  $\pk$, the message $m$  and the commitment $c$ on $m$, and computes a
  description of a map $f: (\bbbg,\ast) \rightarrow (\bbbh,\circ_c)$ where:
\begin{itemize}
\item $(\bbbg,\ast)$ is a group and $\bbbh$ is a set equipped with the binary operation $\circ_c$ ,
\item $\forall r,r' \in \bbbg$: $f(r \ast r')=f(r) \circ_c f(r')$.

\end{itemize}
and an $I \in \bbbh$, such that $f(r) = I$, where $r$ is the opening value of $c$
w.r.t. $m$.
\end{definition}

It is easy to check that Pedersen's commitment scheme is in this
class. Actually, most commitment schemes have this built-in property
because it is often the case that the committer wants to prove efficiently that a
commitment is produced on some message. This is possible if the function $f$
is homomorphic as shown in Figure \ref{fig:commit}.

\begin{figure*}
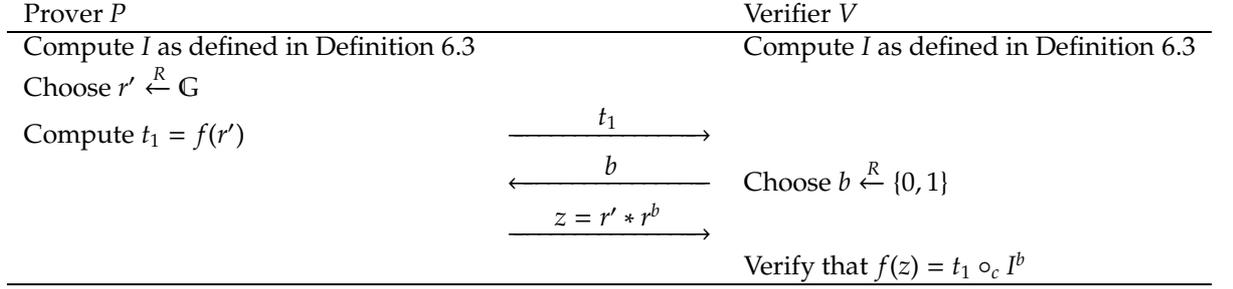

\begin{center}
\begin{tabular}{lcl}
Prover $P$& & Verifier $V$ \\
\hline
Compute $I$ as defined in Definition \ref{def:commitClass}& & Compute $I$ as defined in Definition \ref{def:commitClass}\\
 Choose $r' \hasard \bbbg$ & & \\
 Compute $ t_1=f(r')$  & $\xrightarrow{\makebox[25mm]{$t_1$}}$ & \\
 
&  $\xleftarrow{\makebox[25mm]{$b$}}$  &  Choose $b \hasard \{0,1\} $\\
 
 &  $\xrightarrow{\makebox[25mm]{$z=r'\ast r^b$}}$  &  \\
 & & Verify that $f(z)=t_1\circ_c I^b$ \\
\hline
\end{tabular}

\caption{\label{fig:commit} Proof of membership to the language
  $\{r \colon c = \commit(m,r)\}$ $\textbf{Common input: } (c,m)$ and $\textbf{Private
  input}: r$.}
\end{center}
\end{figure*}

\begin{thm}
\label{thm:commitClass}
The protocol depicted in Figure \ref{fig:commit} is an efficient zero knowledge protocol for
proving knowledge of preimages of the function $f$ described in Definition \ref{def:commitClass}.
\qed \end{thm}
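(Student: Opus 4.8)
The plan is to verify that the protocol of Figure~\ref{fig:commit} is a $\Sigma$~protocol for the relation $\{(I,r): f(r)=I\}$ and then to upgrade honest-verifier simulation to full zero knowledge by rewinding, exactly as was done for the structurally identical protocols of Theorem~\ref{thm:PKsignatures} and Theorem~\ref{thm:PKcryptosystemsEoS}. Since $f\colon(\bbbg,\ast)\to(\bbbh,\circ_c)$ is a homomorphism and $\bbbg$ is a group, I would first record that the image $f(\bbbg)$ inherits a group structure under $\circ_c$ (identity $f(1_\bbbg)$, inverses $f(g)^{-1}=f(g^{-1})$, associativity), so that all the expressions $I^b$, $I^{-b}$, and the inverse of a commitment used below are well defined. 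This observation is the only place where the bare ``set with a binary operation'' of Definition~\ref{def:commitClass} needs to be pinned down, and it is the step I would be most careful about.

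For \textbf{completeness}, an honest prover holding $r$ with $f(r)=I$ sends $z=r'\ast r^b$, and the homomorphic property gives $f(z)=f(r')\circ_c f(r)^b=t_1\circ_c I^b$, so the verifier's check succeeds. For \textbf{validity} (knowledge extraction), I would use special soundness on the binary challenge: from two accepting transcripts $(t_1,0,z_0)$ and $(t_1,1,z_1)$ sharing the first message one reads off $f(z_0)=t_1$ and $f(z_1)=t_1\circ_c I$, so the extractor outputs $w=z_0^{-1}\ast z_1$, which satisfies $f(w)=f(z_0)^{-1}\circ_c f(z_1)=t_1^{-1}\circ_c(t_1\circ_c I)=I$ and is therefore a valid preimage. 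Hence the soundness error is at most $1/2$ per run.

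For \textbf{zero knowledge} I would exhibit the standard rewinding simulator: guess $b'\hasard\{0,1\}$, pick $z\hasard\bbbg$, set $t_1=f(z)\circ_c I^{-b'}$ and send it; on receiving the verifier's $b$, output $z$ if $b=b'$ (the check holds since $t_1\circ_c I^b=f(z)$ in that case) and otherwise rewind. Because $f(z)$ is uniform over $f(\bbbg)$ and $\circ_c$ is a group operation there, the message $t_1$ is distributed independently of the guess $b'$, so $\Pr[b=b']=1/2$ against any (even malicious) verifier; the simulator therefore halts in expected two rounds, i.e. in polynomial time, and the transcript it produces is distributed exactly as a genuine one, since in both cases $t_1$ is uniform over $f(\bbbg)$ and the response $z$ is then uniform subject to the verification equation $f(z)=t_1\circ_c I^b$. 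This yields perfect zero knowledge and completes the argument.

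I expect no real obstacle beyond the bookkeeping of the group structure on the image of $f$ noted above; the remainder is the textbook analysis of a preimage-knowledge $\Sigma$~protocol and is entirely parallel to the proof already given for Theorem~\ref{thm:PKcryptosystemsEoS}.
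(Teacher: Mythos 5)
Your proof is correct and is exactly the ``standard techniques'' argument the paper invokes but does not write out for this theorem: completeness from the homomorphic property, a knowledge extractor computing $z_0^{-1}\ast z_1$ from two accepting transcripts with the same commitment, and the rewinding simulator with guessed challenge, all mirroring the paper's written proof of Theorem~\ref{thm:PKcryptosystemsEoS}. Your added care about the group structure on the image $f(\bbbg)$ (so that $I^{b}$, $I^{-b'}$ and inverses are well defined) is a reasonable and harmless tightening of a point the paper leaves implicit.
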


\subsection{Practical realizations from StE}
We combine a secure signature scheme $ \Sigma \in \bbbs$ and a
secure encryption scheme $\Gamma \in \bbbe$, which is \emph{derived from the KEM/DEM paradigm}, in the way described in Subsection \ref{subsec:efficientStE-CDCS}. Namely we first compute an encapsulation $c$
together with its corresponding key $k$. Then we compute a signature $\sigma$ on
 $c$ concatenated with the message to be signed. Finally convert $\sigma$ to
$(s,r)$ using the $\convert$ algorithm described in Definition \ref{def:signatureClass} and encrypt
$s$ in $e={\cal D}.\encrypt_k(s)$ using $k$. The resulting confirmer signature is $(c,e,r)$.
We describe in Figure \ref{fig:conf-deny} the
confirmation/denial protocols corresponding to the resulting
construction. Note that the confirmation protocol can be also run by the
signer who wishes to confirm the validity of a just generated confirmer signature using the randomness used to generate it.

\begin{figure*}
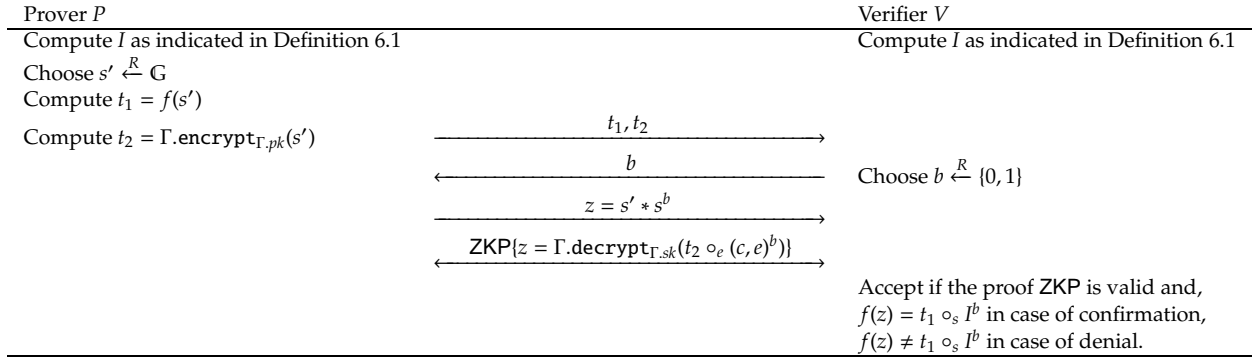

\begin{footnotesize}
\begin{center}
\begin{tabular}{lcl}
Prover $P$& & Verifier $V$ \\
\hline
Compute $I$ as indicated in Definition \ref{def:signatureClass}& & Compute $I$ as indicated in Definition \ref{def:signatureClass}\\

Choose $s' \hasard \bbbg$& & \\
Compute $ t_1=f(s')$& & \\
Compute $ t_2= \Gamma.\encrypt_{\Gamma.\pk}(s')$ & $\xrightarrow{\makebox[50mm]{$t_1,t_2$}}$ & \\
 
&$\xleftarrow{\makebox[50mm]{$b$}}$ &Choose $b \hasard \{0,1\}$\\
 
 &$\xrightarrow{\makebox[50mm]{$z=s' \ast s^b$}}$ &  \\
&$\xleftrightarrow{\makebox[50mm]{$\zkp\{z=\Gamma.\decrypt_{\Gamma.\sk}(t_2 \circ_e (c,e)^b)\}$}}$&  \\

& &Accept if the proof $\zkp$ is valid and, \\
& &$f(z) = t_1\circ_s I^b$ in case of confirmation, \\
& &$f(z) \neq t_1\circ_s I^b$ in case of denial. \\
\hline
\end{tabular}

\caption{\label{fig:conf-deny} Confirmation/Denial protocol for the new StE.
  $\pok\{s \colon s = \Gamma.\decrypt(c,e) \wedge
  \Sigma.\verify(\retrieve(s,r),m\|c) = (\neq) 1\}$ $\textbf{Common input: } (c,e,r,\Sigma.\pk,\Gamma.\pk)$ and $\textbf{Private
  input: } \Gamma.\sk$ or randomness encrypting $s$ in $(c,e)$}
\end{center}
\end{footnotesize}
\end{figure*}

\begin{thm}
\label{thm:conf}
Let $\Sigma$ and $\Gamma $ be signature and encryption schemes from the classes $\bbbs$ and $\bbbe$ resp.
The confirmation protocol (run by either the signer on a just generated
signature or by the confirmer on any signature) described in Figure
\ref{fig:conf-deny} is a zero knowledge proof of knowledge.
\end{thm}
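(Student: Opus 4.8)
The plan is to establish the three defining properties of a zero-knowledge proof of knowledge — completeness, validity (knowledge extractability), and the zero-knowledge property — by welding together the arguments already developed for the two constituent sub-protocols, namely the proof of knowledge of a preimage of $f$ (Theorem \ref{thm:PKsignatures}) and the proof of knowledge of a decryption (Theorem \ref{thm:PKcryptosystemsEoS}). The crucial observation is that both sub-protocols share a single commitment pair $(t_1,t_2)$, a single challenge $b$, and a single response $z=s'\ast s^b$; this common response is exactly what forces any extracted witness to satisfy both relations at once. Completeness is immediate: assuming the prover holds the vital part $s$ of the signature, so that $s=\Gamma.\decrypt(c,e)$ and $f(s)=I$, the homomorphic property of $f$ gives $f(z)=f(s')\circ_s f(s)^b=t_1\circ_s I^b$, while the homomorphic property of $\Gamma$ gives $\Gamma.\decrypt(t_2\circ_e(c,e)^b)=s'\ast s^b=z$; hence the $f$-check and the inner $\zkp$ both succeed and an honest verifier accepts.

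For validity I would rewind a possibly cheating prover that answers both challenges $b=0$ and $b=1$ on the same $(t_1,t_2)$, obtaining responses $z_0,z_1$ with $f(z_0)=t_1$ and $f(z_1)=t_1\circ_s I$, together with the guarantees (from soundness of the inner $\zkp$) that $z_0=\Gamma.\decrypt(t_2)$ and $z_1=\Gamma.\decrypt(t_2\circ_e(c,e))$. Setting $s=z_0^{-1}\ast z_1$ and using that $\circ_e$ induces a group law on the ciphertext space, exactly as in the proof of Theorem \ref{thm:PKcryptosystemsEoS}, yields $s=\Gamma.\decrypt(c,e)$; applying the homomorphism of $f$ then gives $f(s)=f(z_0)^{-1}\circ_s f(z_1)=t_1^{-1}\circ_s(t_1\circ_s I)=I$. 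Thus the \emph{single} extracted element $s$ simultaneously decrypts $(c,e)$ and lies in the preimage $f^{-1}(I)$, so $\retrieve(s,r)$ recovers a valid signature on $m\|c$, i.e. a genuine witness. Since the challenge space is $\{0,1\}$, a prover ignorant of the witness can answer at most one challenge, so the soundness error is $1/2$ (the inner $\zkp$ being assumed to have negligible soundness error); reducing this error is deferred to Subsection \ref{subsec:soundnessError}.

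For zero-knowledge I would adapt the simulator of Theorem \ref{thm:PKcryptosystemsEoS}. It picks $b'\hasard\{0,1\}$ and $z\hasard\bbbg$, sets $t_1=f(z)\circ_s I^{-b'}$ and $t_2=\Gamma.\encrypt(z)\circ_e(c,e)^{-b'}$, and sends $(t_1,t_2)$; upon receiving the verifier's challenge $b$, if $b=b'$ it returns $z$ and invokes the simulator of the inner $\zkp$ (available since that proof is zero knowledge by the definition of $\bbbe$), and otherwise it rewinds. When $b=b'$ one checks directly that $f(z)=t_1\circ_s I^b$ and $z=\Gamma.\decrypt(t_2\circ_e(c,e)^b)$, so the produced transcript is accepting and identically distributed to a real one; since $b'$ is uniform and independent of the verifier's view before the challenge, the simulator rewinds with probability $1/2$ and therefore runs in expected polynomial time.

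The step I expect to be the main obstacle is the validity argument, since it must certify that the \emph{same} extracted element $s$ witnesses both the decryption relation and the signature-verification relation. This hinges on the shared response $z$ carrying simultaneously through the homomorphism $f$ and through the homomorphic decryption, and on the inner $\zkp$ being sound enough to guarantee that $z_0,z_1$ really are the claimed decryptions rather than merely passing the $f$-check; care is also needed to ensure the image $f(\bbbg)$ inherits the group structure (so that $f(z_0)^{-1}=f(z_0^{-1})$ is legitimate), which follows from $f$ being a group homomorphism as stipulated in Definition \ref{def:signatureClass}.
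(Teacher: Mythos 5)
Your proposal is correct and follows essentially the same route as the paper, which disposes of the theorem in two lines by observing that the protocol is the parallel composition of the sub-protocols of Figures \ref{fig:PKsignature} and \ref{fig:PKcryptosystemEoS} and inheriting completeness, knowledge extractability, and the ZK simulator from them. Your write-up merely fills in the details the paper leaves implicit — in particular the (worthwhile) observation that the shared response $z$ forces the single extracted element to witness both the decryption and the $f$-preimage relations simultaneously.
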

\begin{proof}
The confirmation protocol in Figure \ref{fig:conf-deny} is a parallel
composition of the proofs depicted in Figures \ref{fig:PKsignature} and
\ref{fig:PKcryptosystemEoS}. Therefore completeness and soundness (knowledge extractability) follow  from the completeness and soundness of the underlying proofs
(see \cite{Goldreich2001}). Finally, the ZK simulator is the parallel composition of the ZK simulators for the mentioned protocols.
\qed \end{proof}

\begin{thm}
\label{thm:deny}
The denial protocol described in Figure \ref{fig:conf-deny}, for $\Sigma \in \bbbs$ and $\Gamma \in \bbbe$, is a proof of knowledge with computational zero knowledge if $\Gamma$ is IND-CPA-secure.
\end{thm}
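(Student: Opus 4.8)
My plan is to verify, for the denial protocol of Figure \ref{fig:conf-deny}, the three defining properties of a zero-knowledge proof of knowledge — completeness, validity (knowledge extraction), and the zero-knowledge property — and to isolate the last one as the place where the IND-CPA hypothesis on $\Gamma$ is genuinely needed and where the argument departs from the confirmation case of Theorem \ref{thm:conf}. The denial protocol is the same parallel composition of the preimage proof of Figure \ref{fig:PKsignature} and the decryption proof of Figure \ref{fig:PKcryptosystemEoS}, the only change being that the verifier's final test reads $f(z)\neq t_1\circ_s I^b$. Completeness and validity I would treat exactly as there, using the homomorphic structure of Definitions \ref{def:signatureClass} and \ref{def:cryptosystemClass-EoS}: on the meaningful challenge one has $f(z)=t_1\circ_s f(s)^b$, which differs from $t_1\circ_s I^b$ precisely because the signature is invalid, $f(s)\neq I$; and from two accepting transcripts $(t_1,t_2,0,z_0)$, $(t_1,t_2,1,z_1)$ the soundness of the inner $\zkp$ forces $z_0=\Gamma.\decrypt(t_2)$ and $z_1=\Gamma.\decrypt(t_2\circ_e(c,e))$, so that $s=z_0^{-1}\ast z_1=\Gamma.\decrypt(c,e)$ is extracted and the inequality certifies $f(s)\neq I$. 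This carries the same constant error as the confirmation protocol, reduced as in Subsection \ref{subsec:soundnessError}.

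The crux is the zero-knowledge property. For confirmation the simulator can guess the challenge $b'$ and set $t_1=f(z)\circ_s I^{-b'}$ and $t_2=\Gamma.\encrypt(z)\circ_e(c,e)^{-b'}$ for a freely chosen $z$, which is perfectly consistent and passes the equality test, giving \emph{perfect} ZK. For denial the simulator cannot both stay consistent and satisfy the strict inequality while producing the honest response $z=s'\ast s^{b'}$, since it does not know $s$. I would therefore build the denial simulator $\Sim$ so as to keep the first message honest but decouple the response: guess $b'\hasard\{0,1\}$ (rewinding the verifier when its challenge differs), pick $s'\hasard\bbbg$ and set $t_1=f(s')$, $t_2=\Gamma.\encrypt(s')$ exactly as a real prover, then sample $z\hasard\bbbg$ subject to $f(z)\neq t_1\circ_s I^{b'}$ (re-sampling the negligibly-often bad $z$), and finally produce the inner proof with the simulator of Theorem \ref{thm:PKcryptosystemsEoS}. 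The only residual gap between this and a real run is the joint behaviour of $(t_1,z)$: in a real transcript $f(z)=t_1\circ_s f(s)^{b'}$ is a fixed offset of $t_1$ determined by the hidden $s$, whereas in the simulation $z$ is independent of $t_1$.

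To make that gap negligible I would reduce to the IND-CPA security of $\Gamma$, and this reduction is the main content of the proof. Given a distinguisher $\A$ between real and simulated denial transcripts I construct an IND-CPA adversary $\R$ that receives $\Gamma.\pk$, chooses two messages $m_0,m_1$ with $f(m_0)\neq I$, and takes its challenge ciphertext as the denied object, $(c,e)\leftarrow\Gamma.\encrypt(m_\gamma)$. $\R$ then emits $t_1=f(s')$, $t_2=\Gamma.\encrypt(s')$ for a fresh $s'$, sets the response to $z=s'\ast m_0$ (so $f(z)=t_1\circ_s f(m_0)\neq t_1\circ_s I$, satisfying the denial test), and simulates the inner proof. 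When $\gamma=0$ we have $z=s'\ast s$, i.e. the honest response, and the transcript is distributed as a real one; when $\gamma=1$ the response uses the ``wrong'' plaintext $m_0$ and the transcript is exactly what $\Sim$ outputs, so $\A$'s real-versus-simulated verdict reveals $\gamma$. Crucially $\R$ reveals $z$ from the known $(s',m_0)$ and never needs $\Gamma.\sk$, which is also why the guarantee is ZK only against verifiers who do not hold $\Gamma.\sk$ — the natural model for confirmer signatures. The delicate points I expect to spend most effort on are accounting cleanly for the factor lost to the guess of $b'$ and the rewinding, and arguing that the simulated inner proof contributes no further gap beyond the zero-knowledgeness already established in Theorem \ref{thm:PKcryptosystemsEoS}; modulo these, a non-negligible distinguishing advantage yields a non-negligible IND-CPA advantage, so the simulation is computationally indistinguishable from a real denial precisely under the IND-CPA security of $\Gamma$.
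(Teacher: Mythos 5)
Your completeness and extraction arguments match what the paper (tersely) does, and you correctly identify that the whole weight of the theorem sits on the simulator. But your simulator and, above all, your reduction diverge from the paper's in a way that opens a genuine gap. The paper's simulator picks $z$ and $b'$ first, sets $t_2=\Gamma.\encrypt_{\Gamma.\pk}(z)\circ_e(c,e)^{-b'}$ and takes $t_1$ \emph{random} in $f(\bbbg)$; this keeps the inner statement ``$z=\Gamma.\decrypt(t_2\circ_e(c,e)^{b'})$'' \emph{true}, so the inner $\zkp$ can legitimately be simulated, and it confines the entire computational gap to the severed link between $t_1$ and the plaintext of the \emph{fresh} ciphertext $t_2$ --- an IND-CPA question about a ciphertext the simulator generates itself. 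Your simulator does the opposite: it keeps $(t_1,t_2)=(f(s'),\Gamma.\encrypt(s'))$ honest and randomizes $z$, so with overwhelming probability $z\neq s'\ast s^{b'}$ and you are asking the simulator of Theorem \ref{thm:PKcryptosystemsEoS} to ``prove'' a \emph{false} statement. The zero-knowledge property of the inner proof only guarantees indistinguishability for instances in the language, so this step is unsupported, and nothing in your IND-CPA reduction closes that hole.

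The reduction itself is also the wrong shape. You embed the IND-CPA challenge in the common input, $(c,e)\leftarrow\Gamma.\encrypt(m_\gamma)$, so the $\gamma=0$ hybrid is a real transcript for an instance decrypting to $m_0$ while the $\gamma=1$ hybrid is a transcript for a \emph{different} instance decrypting to $m_1$. Zero knowledge is a per-instance property: the distinguisher sees $(c,e)$, and you must compare real and simulated transcripts on the \emph{same} $(c,e)$ --- which, moreover, is an arbitrary alleged signature handed to the prover, not a fresh encryption you are entitled to request from an IND-CPA challenger. In addition, the $\gamma=1$ branch does not reproduce $\Sim$'s output as claimed: there $z=s'\ast m_0$, hence $f(z)=t_1\circ_s f(m_0)$ is a fixed offset of $t_1$ determined by the reduction's own $m_0$, whereas your $\Sim$ outputs $z$ independent of $t_1$; the two distributions coincide only after averaging over a uniform hidden $m_0$, not for the fixed instance the distinguisher is given. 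The repair is to follow the paper: leave $(c,e)$ untouched, build $t_2$ from $z$ homomorphically, and plant the IND-CPA challenge in $t_2$ (encrypting either the $f$-preimage of $t_1$ or an independent random element of $\bbbg$).
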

\begin{proof}
Using the standard techniques, we prove that the denial protocol depicted in Figure
\ref{fig:conf-deny} is complete and sound. Similarly, we provide the following simulator to prove the ZK property.

 \begin{enumerate}
\item Generate $b' \in_R \{0,1\}$. Choose $z \in_R \bbbg$ and a random $t_1 \in_R f(\bbbg)$ and $t_2=\Gamma.\encrypt_{\Gamma.\pk}(z) \circ_e (c,e)^{-b'}$.
\item Get $b$ from the verifier. If $b= b'$, it sends $z$ and simulates the
  proof  $\mathsf{ZKP}$ of $z$ being the decryption of $t_2\circ_e (e,s_k)^b$. If $b \neq b'$, it goes to Step 1.
\end{enumerate}
The prover's first message is an encryption of a random value $s' \in_R \bbbg$,
in addition to $f(s')$. The simulator's first
message is an encryption of a random value $z\ast {s^{-b'}}$ and the element $t_1 \in_R f(\bbbg)$, which is \emph{independent of $z$}. Distinguishing
those two cases is at least as hard as breaking the IND-CPA security of the
underlying encryption scheme. Therefore, under the IND-CPA security of the
encryption scheme, the simulator's and prover's first message distributions are
indistinguishable. Moreover, the simulator runs in expected polynomial time, since
the number of rewinds is $2$. Finally, the distributions of the prover's and the simulator's messages in the
last round are again, by the same argument, indistinguishable under the
IND-CPA security of the encryption scheme.
\qed \end{proof}

\paragraph{Concurrent Zero knowledgeness}
If the proof  $\zkp$ underlying the above protocols is a public-coin Honest-Verifier Zero Knowledge (HVZK) protocol, then there are a number of efficient transformations that turns the above confirmation/denial protocols into proofs that are  concurrent zero knowledge, e.g.  \cite{MicciancioPetrank2002}. For instance, if  $\zkp$ is a Sigma protocol, then the aforementioned confirmation/denial protocols can be efficiently turned into concurrent ZK proofs according to  \cite{Damgard2000}; this transformation preserves the number of rounds while it incurs a tiny overhead in the computational complexity (computation of a commitment on a message). Note that although the transformation \cite{Damgard2000} is in the auxiliary string model, such a scenario is easy to achieve in a public-key setting; for example certificates computed by a PKI on public keys are possible candidates to auxiliary strings to the players.

\paragraph{Performance of the new StE}
Our variant of StE improves the plain paradigm \cite{CamenischMichels2000} as it weakens the assumption on the underlying
encryption from IND-CCA to IND-CPA. This
impacts positively the efficiency of the construction from many sides. In fact,
the resulting signature is shorter and its generation/verification cost is smaller. An illustration is given by ElGamal's encryption and its IND-CCA
variant, namely Cramer-Shoup's encryption where the ciphertexts are at least
twice longer than ElGamal's ciphertexts. Also, there is a multiplicative
factor of at least two in favor of ElGamal's encryption/decryption
cost. Moreover, the confirmation/denial protocols are rendered more efficient
by allowing homomorphic encryption schemes as shown earlier in this subsection, e.g. \cite{ElGamal1985,BonehBoyenShacham2004} \footnote{Both schemes are IND-CPA secure and are derived from the KEM/DEM paradigm. Moreover, the underlying KEM and DEM present interesting homomorphic properties that make them belong the class $\bbbe$ of encryption schemes. We refer to the discussion after Definition \ref{def:cryptosystemClass-EoS} for the details}. Such encryption schemes were not
possible to use before since a homomorphic scheme can never attain the IND-CCA security. Besides,
even when the IND-CCA encryption scheme is decryption verifiable, e.g.
Cramer-Shoup, the involved protocols are much more expensive than
those corresponding to their IND-CPA variant. 

\noindent The construction achieves also better performances than the
proposal of \cite{GoldwasserWaisbard2004}, where the confirmer signature
comprises $k$ commitments and $2k$ IND-CCA encryptions, where $k$
is the number of rounds used in the confirmation protocol. Moreover, the
denial protocol presented in \cite{GoldwasserWaisbard2004} suffers the resort
to proofs of general NP statements (where the considered encryption is
IND-CCA). The same remark applies to the construction of \cite{Wikstroem2007}
where both the confirmation and denial protocols rely on proofs of general NP statements.

Finally, we remark that our new StE, first introduced in \cite{ElAimani2008}, captures many efficient realizations of
confirmer/undeniable signatures, e.g. \cite{LeTrieuKurosawaOgata2009b,SchuldtMatsuura2010}. It also serves for analyzing some early schemes that had a speculative security: the Damg\r{a}rd-Pedersen \cite{DamgardPedersen1996}  undeniable signatures. In fact, we showed in Subsection \ref{app:damgardPedersen}, that these signatures are unlikely to be invisible, and we proposed a fix so that
they meet the required security notion; interestingly, this repair turns out to be a special
instantiation of the new StE paradigm. Actually, even the confirmation/denial protocols
provided in \cite{DamgardPedersen1996} happen to be a special case of
the confirmation/denial protocols depicted in Figure \ref{fig:conf-deny}.

\subsection{Practical realizations from CtEtS}
\label{subsec:NewCtEaS-instantiations}
The CtEtS has the merit of supporting \emph{any} digital signature scheme as a building block. In fact, efficient as the StE paradigm is, it still applies only to a restricted class of signatures. For instance, StE does not seem to be plausible with the PSS signature scheme \cite{BellareRogaway1996}.

CtEtS does not involve a proof of knowledge of a signature in its confirmation/denial protocols. In fact, confirmation (denial) of a signature on a certain message consists in proving knowledge of the decryption of a given ciphertext, and that this decryption is (is not) the opening value of a given commitment on the message in question. More specifically, the confirmation/denial protocols for CtEtS, when the encryption $\Gamma$ belongs to the class $\bbbe$ and the commitment $\Omega$ belongs to the class $\mathbb{C}$, are depicted in Figure \ref{fig:conf-denySOC}.

\begin{figure*}
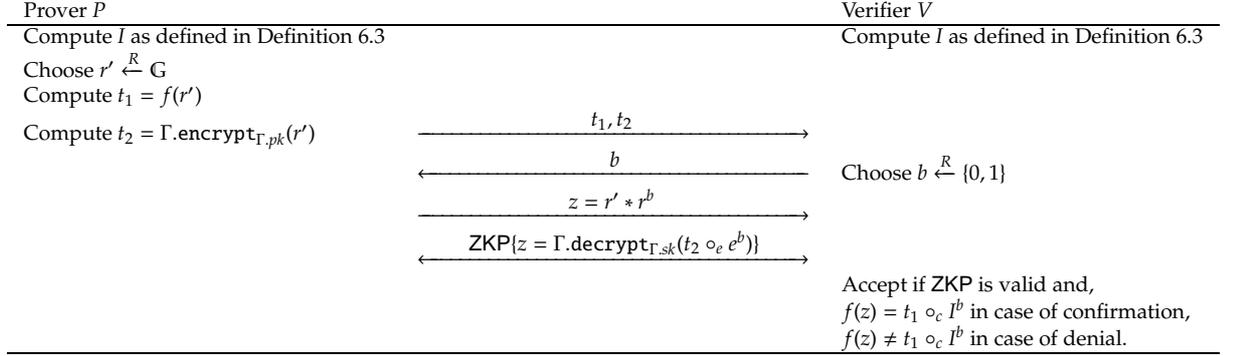

\begin{center}
\footnotesize
\begin{tabular}{lcl}
Prover $P$& & Verifier $V$ \\
\hline

Compute $I$ as defined in Definition \ref{def:commitClass}& & Compute $I$ as defined in Definition \ref{def:commitClass}\\
Choose $r' \hasard \bbbg$& & \\
Compute $t_1=f(r')$& & \\
Compute $t_2= \Gamma.\encrypt_{\Gamma.\pk}(r')$ & $\xrightarrow{\makebox[50mm]{$t_1,t_2$}}$ & \\
 
&$\xleftarrow{\makebox[50mm]{$b$}}$ &Choose $b \hasard \{0,1\} $\\
 
 &$\xrightarrow{\makebox[50mm]{$z=r' \ast r^b$}}$ &  \\
&$\xleftrightarrow{\makebox[50mm]{$\mathsf{ZKP}\{z=\Gamma.\decrypt_{\Gamma.\sk}(t_2\circ_e e^b)\}$}}$&  \\

& &Accept if $\mathsf{ZKP}$ is valid and, \\
& &$f(z) = t_1 \circ_c I^b$ in case of confirmation, \\
& &$f(z) \neq t_1 \circ_c I^b$ in case of denial. \\
\hline
\end{tabular}

\caption{\label{fig:conf-denySOC} Confirmation/Denial protocol for the new CtEtS paradigm.
  $\pok \{r \colon  r = \Gamma.\decrypt(e) \wedge
  c = (\neq) \Omega.\commit(m,r) \}$ $\textbf{Common input: }
  (e,c,m,\Gamma.\pk,\Omega.\pk)$ and $\textbf{Private
  input: } \Gamma.\sk$ or randomness encrypting $r$ in $e$.}

\end{center}
\end{figure*}

\begin{thm}
\label{thm:confSOC}
Let $\Omega$ and $\Gamma $ be commitment  and encryption schemes from the classes $\bbbc$ and $\bbbe$ resp. The confirmation protocol  depicted in Figure
\ref{fig:conf-denySOC} is a zero knowledge proof of knowledge.
\end{thm}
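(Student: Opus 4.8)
The plan is to treat the protocol of Figure~\ref{fig:conf-denySOC} exactly as the confirmation protocol of Figure~\ref{fig:conf-deny} was treated in Theorem~\ref{thm:conf}, namely as a combination of two $\Sigma$-type proofs sharing a single response: the proof of knowledge of a commitment opening from Figure~\ref{fig:commit} (available since $\Omega \in \bbbc$) and the proof of knowledge of a decryption from Figure~\ref{fig:PKcryptosystemEoS} (available since $\Gamma \in \bbbe$). The crucial observation is that both sub-proofs are driven by the \emph{same} first message $r'$ and the \emph{same} response $z = r' \ast r^b$; this is what will force the extracted witness to be simultaneously the opening value of $c$ and the decryption of $e$, which is precisely the conjunction the language demands. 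I would first fix notation by letting $I = f(r)$ be the common value of Definition~\ref{def:commitClass}, where $r$ is the opening value of $c$ on $m$, and note that in the confirmation case $e = \Gamma.\encrypt(r)$.

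For \textbf{completeness} I would simply expand the two verification relations using the homomorphic properties: the check $f(z) = t_1 \circ_c I^b$ follows from $f(z) = f(r') \circ_c f(r)^b = t_1 \circ_c I^b$, and $z = \Gamma.\decrypt(t_2 \circ_e e^b)$ follows because $t_2 \circ_e e^b = \Gamma.\encrypt(r') \circ_e \Gamma.\encrypt(r)^b = \Gamma.\encrypt(r' \ast r^b) = \Gamma.\encrypt(z)$, so the inner $\zkp$ goes through. For \textbf{knowledge extractability} I would invoke special soundness: from two accepting transcripts $(t_1,t_2,0,z_0)$ and $(t_1,t_2,1,z_1)$, the inner $\zkp$ guaranteeing $z_0 = \Gamma.\decrypt(t_2)$ and $z_1 = \Gamma.\decrypt(t_2 \circ_e e)$, I set $r^\star = z_0^{-1} \ast z_1$. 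Since $f$ is a homomorphism, $f(r^\star) = t_1^{-1} \circ_c (t_1 \circ_c I) = I$, hence $c = \Omega.\commit(m,r^\star)$; and since $\circ_e$ induces a group law on ciphertexts, $r^\star = \Gamma.\decrypt(t_2^{-1}) \ast \Gamma.\decrypt(t_2 \circ_e e) = \Gamma.\decrypt(e)$. Thus the \emph{same} $r^\star$ is a valid witness for the conjunction, and the soundness error is at most $1/2$ (modulo the negligible soundness error of $\zkp$).

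For the \textbf{zero-knowledge} property I would reuse the rewinding simulator of Theorem~\ref{thm:PKcryptosystemsEoS}: guess a challenge $b' \hasard \{0,1\}$, pick $z \hasard \bbbg$, send $t_1 = f(z) \circ_c I^{-b'}$ and $t_2 = \Gamma.\encrypt_{\Gamma.\pk}(z) \circ_e e^{-b'}$, and upon receiving $b$ either output $z$ together with a simulated $\zkp$ (if $b = b'$) or rewind (if $b \neq b'$). A correct guess makes both verification relations hold, the expected number of rewinds is two, and the simulator runs in expected polynomial time.

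The main obstacle I anticipate is the distributional argument establishing that the simulation is \emph{perfect} rather than merely computational. The point to verify carefully is that, since $z$ is uniform on $\bbbg$, the element $z \ast r^{-b'}$ is also uniform, so that $t_1 = f(z \ast r^{-b'})$ is distributed exactly as the honest $f(r')$ while $t_2$ is a fresh encryption of the \emph{same} uniform plaintext, distributed exactly as the honest $\Gamma.\encrypt(r')$ once the re-randomization clause of Definition~\ref{def:cryptosystemClass-EoS} is used to argue that the randomness of $\Gamma.\encrypt(z) \circ_e e^{-b'}$ is itself uniform; this is what makes the joint law of $(t_1,t_2)$ coincide with the real one. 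This is exactly the place where the confirmation protocol behaves better than its denial counterpart (cf.\ Theorem~\ref{thm:deny}), where $t_1$ is forced to be independent of $z$ and only computational zero knowledge under IND-CPA can be salvaged.
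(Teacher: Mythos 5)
Your proposal is correct and follows essentially the same route as the paper, which proves Theorem~\ref{thm:confSOC} by noting it is the analogue of Theorem~\ref{thm:conf}: a parallel composition of the commitment-opening proof of Figure~\ref{fig:commit} and the decryption proof of Figure~\ref{fig:PKcryptosystemEoS}, with completeness, special soundness, and the rewinding simulator inherited from the components. Your write-up merely makes explicit what the paper leaves implicit in the phrase ``parallel composition,'' namely that the shared response $z$ forces the extracted $r^\star=z_0^{-1}\ast z_1$ to witness both conjuncts simultaneously, and that the joint distribution of $(t_1,t_2)$ in the simulation matches the real one.
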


\begin{thm}
\label{thm:denySOC}
The denial protocol depicted in Figure  \ref{fig:conf-denySOC}, for $\Omega \in \bbbc$ and $\Gamma \in \bbbe$, is a proof of knowledge with computational zero knowledge if $\Gamma$ is IND-CPA-secure.
\end{thm}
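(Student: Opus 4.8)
The plan is to mirror the proof of Theorem~\ref{thm:deny}, its exact counterpart for the new StE paradigm, since the protocol of Figure~\ref{fig:conf-denySOC} is structurally identical to that of Figure~\ref{fig:conf-deny}: the homomorphic map $f$ associated with the commitment $\Omega \in \bbbc$ (Definition~\ref{def:commitClass}) now plays the role that the map associated with the signature scheme $\Sigma \in \bbbs$ played there. As in Theorem~\ref{thm:confSOC}, the protocol is the parallel composition of the proof of knowledge of a preimage of $f$ (Figure~\ref{fig:commit}, Theorem~\ref{thm:commitClass}) and of the proof of knowledge of the decryption of $e$ (Figure~\ref{fig:PKcryptosystemEoS}, Theorem~\ref{thm:PKcryptosystemsEoS}), the sole difference being the flipped acceptance test $f(z) \neq t_1 \circ_c I^b$. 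Completeness and the validity (knowledge extraction) property I would dispatch by the standard techniques used for Theorem~\ref{thm:deny}: from two accepting transcripts sharing the first message $(t_1,t_2)$ but with distinct challenges, the embedded $\zkp$ forces $z_0 = \Gamma.\decrypt(t_2)$ and $z_1 = \Gamma.\decrypt(t_2 \circ_e e)$, so that, by the homomorphic property of $\Gamma \in \bbbe$, $r := z_1 \ast z_0^{-1} = \Gamma.\decrypt(e)$ is the extracted witness and the inequality tests certify $f(r) \neq I$, i.e. $c \neq \Omega.\commit(m,r)$; the per-round soundness error of $1/2$ is then reduced by the technique of Subsection~\ref{subsec:soundnessError}.

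The substance of the proof, and the only place where the IND-CPA hypothesis is invoked, lies in the zero-knowledge simulator. Following the simulator of Theorem~\ref{thm:deny}, I would have it guess a challenge $b' \hasard \{0,1\}$, draw $z \hasard \bbbg$, set $t_2 = \Gamma.\encrypt_{\Gamma.\pk}(z) \circ_e e^{-b'}$, and --- crucially --- pick $t_1 \hasard f(\bbbg)$ \emph{independently} of $z$ and $t_2$, rather than as the correlated value $f(z) \circ_c I^{-b'}$ used by the confirmation simulator of Theorem~\ref{thm:confSOC}. This independent choice is forced by the denial setting: the confirmation simulator selects $t_1$ so that the \emph{equality} $f(z) = t_1 \circ_c I^{b'}$ holds, whereas here the acceptance test is the \emph{inequality} $f(z) \neq t_1 \circ_c I^{b}$, which the confirmation-style correlated $t_1$ would violate. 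Upon receiving $b$ the simulator reveals $z$ together with the (simulatable) $\zkp$ when $b = b'$ and rewinds otherwise; since $b'$ is uniform, this costs an expected two rewinds and the simulator runs in expected polynomial time.

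The main obstacle is to argue that real and simulated transcripts are indistinguishable. In an honest run the first message $(t_1,t_2) = (f(r'), \Gamma.\encrypt_{\Gamma.\pk}(r'))$ is \emph{correlated}, both components being determined by the same nonce $r'$, so that $t_1 = f(\Gamma.\decrypt(t_2))$; in the simulation $t_1$ is \emph{independent} of $t_2$. A distinguisher between the two distributions would in effect decide, from $(t_1,t_2)$ alone, whether $t_1 = f(\Gamma.\decrypt(t_2))$, which leaks information about the plaintext of $t_2$ and hence breaks the IND-CPA security of $\Gamma$; the same reduction bounds the distinguishing advantage on the last-round message $z$. I therefore expect only \emph{computational} zero knowledge, contingent precisely on the IND-CPA security of $\Gamma$, as claimed. \qed
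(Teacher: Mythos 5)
Your proposal is correct and follows essentially the same route as the paper, which for this theorem simply defers to the proof of Theorem~\ref{thm:deny}: you reconstruct that argument faithfully, including the key point that the simulator must pick $t_1 \hasard f(\bbbg)$ independently of $z$ (unlike the confirmation simulator) and that the resulting real/simulated discrepancy in the first message is exactly what the IND-CPA assumption on $\Gamma$ absorbs. Your extraction and rewinding analysis likewise matches the paper's treatment in Theorems~\ref{thm:PKcryptosystemsEoS} and~\ref{thm:deny}, so there is nothing to add.
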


The proofs are similar
to those of Theorem \ref{thm:conf} and Theorem \ref{thm:deny} respectively. \qed 
Similar to the new StE paradigm, our new CtEtS achieves better performances than the original technique (short signature, small
generation, verification, and conversion cost) while applying to any signature scheme. Moreover, it accepts many efficient instantiations (if the used commitment and encryption belong to the already mentioned classes) as its confirmation/denial protocols no longer relies on general proofs of NP statements. 

\subsection{The EtS paradigm}
\label{subsec:EtS-CDCS}

EtS can be seen as a special instance of CtEtS since IND-CPA encryption can be easily used to get
statistically binding and computationally hiding commitments. Therefore, one can first commit to the message to be signed  using the the encryption scheme, then sign the resulting ciphertext. The confirmer signature is composed of the ciphertext and of its signature. In fact, there will be
no need to encrypt the string used to produce the ciphertext (commitment) since the private key of the encryption scheme is sufficient to check the validity of a ciphertext w.r.t. a given message. Finally, selective conversion is achieved by releasing a \emph{Non-Interactive} ZK (NIZK) proof that the ciphertext (first part of the confirmer signature) decrypts to the message in question. Note that in this paradigm, the setting should include a \emph{trusted authority} that generates the common reference string (CRS); again, this is plausible in a public key setting as PKIs can successfully play this role.

\noindent Similar to CtEtS, unforgeability of EtS rests on the unforgeability of the underlying signature, whereas invisibility rests on the strong unforgeability (SEUF-CMA) of the signature and on the indistinguishability (IND-CPA) of the encryption. We discuss in the rest of this subsection how to achieve practical realizations from this technique.

\subsubsection{Confirmation/denial protocols} Confirmation in EtS amounts to a proof of correctness of a decryption (i.e. a given ciphertext encrypts a given message). This is in general easy since in most encryption
schemes, one can define, given a ciphertext $c$ and its underlying plaintext $m$, two
homomorphic one-way maps $f$  and $g$, and two quantities $I$ and $J$
such that $f(r)=I$ and  $g(\sk)=J$, where $r$ is the randomness used to
encrypt $m$ in $c$, and $\sk$ is the private key of the encryption scheme. Examples of such encryptions include \cite{ElGamal1985,BonehBoyenShacham2004,Paillier1999,CramerShoup2003,CamenischShoup2003}. The confirmation protocol in this case will be reduced to a proof of knowledge of a preimage of $J$ ($I$) by the function $g$ ($f$), for which we provided an efficient proof
in Figure \ref{fig:PKsignature}.

Concerning the denial protocol, it is not always straightforward. In most
discrete-logarithm-based encryptions, this protocol amounts to a proof of
inequality of discrete logarithms as in
\cite{ElGamal1985,BonehBoyenShacham2004,CramerShoup2003}. In case the
encryption scheme belongs to the class $\bbbe$ defined earlier, Figure \ref{fig:deny-EtS} provides an efficient proof that
$c$ encrypts some $\tilde{m}$ which is different from $m$. In the protocol provided in this
figure, $f$ denotes an arbitrary \emph{homomorphic} map:
$$
f(m\ast m')=f(m)\circ f(m')
$$

\begin{figure*}
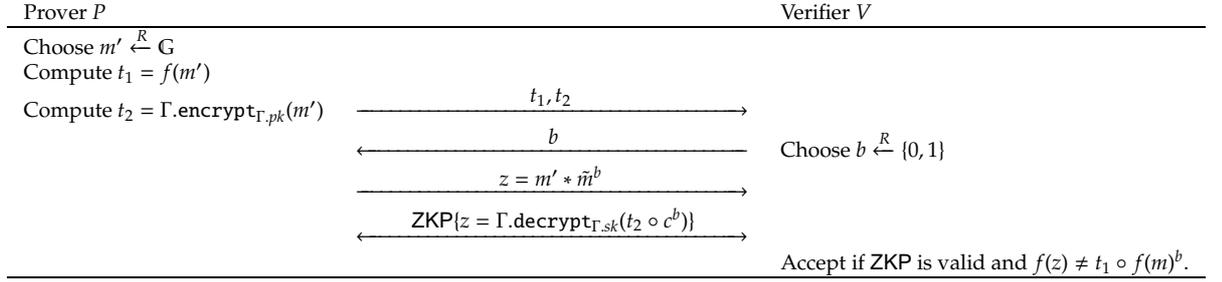

\footnotesize
\begin{center}

\begin{tabular}{lcl}
Prover $P$& & Verifier $V$ \\
\hline

Choose $m' \hasard \bbbg$& & \\
Compute $ t_1=f(m')$& & \\
 Compute $ t_2= \Gamma.\encrypt_{\Gamma.\pk}(m')$ & $\xrightarrow{\makebox[50mm]{$t_1,t_2$}}$ & \\
 
&$\xleftarrow{\makebox[50mm]{$b$}}$ &Choose $b \hasard \{0,1\}$\\
 
 &$\xrightarrow{\makebox[50mm]{$z=m' \ast \tilde{m}^b$}}$ &  \\
&$\xleftrightarrow{\makebox[50mm]{$\zkp\{z=\Gamma.\decrypt_{\Gamma.\sk}(t_2\circ c^b)\}$}}$&  \\

 & &Accept if $\zkp$ is valid and $f(z) \neq t_1 \circ f(m)^b$.\\
\hline
\end{tabular}

\caption{\label{fig:deny-EtS} Denial protocol in EtS. $\pok\{\tilde{m} \colon \tilde{m} = \Gamma.\decrypt(c) \wedge
   \tilde{m} \neq m  \}$ \textbf{Common input: }
  $(m,c,\Gamma.\pk)$ and \textbf{Private
  input: } $\tilde{m}$ and $\Gamma.\sk$ or randomness encrypting $\tilde{m}$ in $c$}

\end{center}
\end{figure*}

Similarly, the above denial protocol can be shown to be a proof of knowledge with computational ZK, if $\Gamma$ is IND-CPA secure.

\subsubsection{Selective conversion}
\label{subsubsection:conversion-CDCS}
Selective conversion in confirmer signatures from EtS consists in providing the non-interactive variant of the confirmation protocol. We note in this paragraph few solutions to achieve this goal:
\begin{description}

\item \emph{The case of fully decryptable encryption schemes} I.e.
encryption schemes where decryption leads to the randomness used to produce the
ciphertext. In this case, selective conversion can simply be achieved by releasing the randomness used to generate the ciphertext. Examples of encryption schemes from this class include \cite{Paillier1999}'s encryption: the scheme operates on messages in $\bbbz_N$, where $N=pq$ is a
safe RSA modulus. Encryption of a message $m$ is done by picking a random $r
\in_R \bbbz_N^{\times}$ and then computing the ciphertext $c=r^N(1+mN) \bmod
N^2$. Decryption of a ciphertext $c$ is done by raising it to $\lambda=\mathrm{lcm}(p-1,q-1)$ to find $m$. It is easy to see that recovering $r$, once $m$ is computed, amounts to an extraction of the $N$-th root of $\frac{c}{1+mN}$.

\item \emph{Damg{\aa}rd et al. \cite{DamgardFazioNicolosi2006}'s
    solution. }This solution transforms a 3-move interactive ZK protocol $P$ with linear
  answer to a non-interactive ZK one (NIZK) using a homomorphic
  encryption scheme in a registered key model, i.e. in a model where the verifier
  registers his key. The authors in
  \cite{DamgardFazioNicolosi2006} proposed an efficient illustration using
  Paillier's encryption and the proof of equality of two discrete
  logarithms. We conclude that with such a technique, EtS accepts an efficient instantiation if the considered
  encryption allows proving the correctness of a decryption using the proof of
  equality of two discrete logarithms, e.g. \cite{ElGamal1985,BonehBoyenShacham2004,CramerShoup2003}.
\item \emph{Groth and Sahai \cite{GrothSahai2008}'s solution. }This technique is applicable in general for encryption schemes where the encryption/decryption algorithms perform only group or pairing (if bilinear groups are involved) operations on the randomness or the private key resp. 

\item  \emph{Lindell \cite{Lindell2014}'s solution. } This Fiat-Shamir like transform turns any Sigma protocol for a relation $R$  into a NIZK proof for the associated language $L_R$, in the common reference string model (without any random oracle). The concrete computational complexity of the transform is slightly higher than the original Fiat-Shamir transform. 

\end{description}

\subsection{Reducing the soundness error}
 \label{subsec:soundnessError}
The protocols presented earlier in this section consist of a proof of knowledge of preimages, by some \emph{homomorphic} map, which incidentally satisfy a relation efficiently provable via a zero knowledge proof $\zkp$. 

In this subsection, we show how to reduce the soundness error of these protocols without necessarily repeating them. We will focus on the part of the protocol proving knowledge of the preimage; actually we assume $\zkp$ has a negligible soundness error since it can itself implement the optimizations we propose if it is a proof of knowledge for group homomorphisms.

Let $f \colon (\bbbg,\ast) \rightarrow (\bbbh,\circ)$ be the homomorphic map underlying the proof of knowledge. Let further $I$ be the value for which we want to prove knowledge of a preimage. We consider a challenge space $\cal C$ that satisfies, for some known values $\ell \in \bbbz$ and $u \in \bbbg$, the following \cite{Maurer2015}:
\begin{enumerate}
\item $\gcd(b_1-b_2,\ell)=1$ for all $b_1,b_2 \in {\cal C}$ (with $b_1\neq b_2$),
\item $f(u) = I^\ell $. 
\end{enumerate}

Note that the above conditions are easily met in groups with known prime order $\ell$, i.e. discrete-log based groups. 

The protocol below is an efficient zero knowledge proof of knowledge of a preimage of $I$, if ${\cal C}$ is polynomially bounded.


\begin{center}
\small
\begin{tabular}{lcl}
Prover $P$& & Verifier $V$ \\
\hline

 Choose $s' \hasard \bbbg$ & & \\
 Compute $ t=f(s')$  & $\xrightarrow{\makebox[15mm]{$t$}}$ & \\
 
&  $\xleftarrow{\makebox[15mm]{$b$}}$  &  Choose $b \hasard {\cal C \subseteq \mathbb{N}} $\\
 
 &  $\xrightarrow{\makebox[15mm]{$z=s'\ast s^b$}}$  &  \\

 & & Accept if $f(z)=t\circ I^b$ \\
\hline
\end{tabular}
\end{center}

Completeness is straightforward. 

For knowledge extractability, we consider two accepting transcripts for the same commitment value $t$ and different challenges $b_1,b_2$ ($b_2 \geq b_1$).  Let $z_1,z_2$ the responses of the prover in the last round.

\noindent We have $f(z_1) = t \circ I^{b_1}$ and  $f(z_2) = t \circ I^{b_2}$. Therefore $f(z_1^{-1}\ast z_2) = I^{b_2-b_1}$. 

\noindent We compute values $x,y$ by the Extended Euclidean Algorithm to get. $x\ell + y(b_2-b_1) = 1$. It follows that $I= I^{x\ell}I^{y(b_2-b_1)}$. Thus $I=f(u^x \ast (z_1^{-1}\ast z_2)^y)$. In other words, a preimage of $I$ can be computed as $u^x \ast (z_1^{-1}\ast z_2)^y$.

Finally, the ZK simulator is similar to that of the original protocol with the exception of drawing the challenge $b'$, in the first stage of the protocol, from $\cal C$. The new probability of not rewinding the verifier is $1/|{\cal C}|$. Thus, $\cal C$ must be polynomially bounded in order to guarantee a polynomial
running time of the simulator.

\section{Verifiable Signcryption}

\subsection{Syntax and model}

A verifiable signcryption scheme consists of the following algorithms/protocols:
\begin{description}

\item \textbf{Setup ($\setup(1^\kappa)$). }This probabilistic algorithm inputs a security parameter $\kappa$, and generates the public parameters $\param$ of the signcryption scheme.
\item \textbf{Key generation ($\keygen_U(1^\kappa,\param), U \in \{S,R\}$). }This probabilistic algorithm inputs the security parameter $\kappa$ and the public parameters $\param$, and outputs a key pair $(\pk_U,\sk_U)$ for the system user $U$ which is either the sender $S$ or the receiver $R$. 
\item \textbf{Signcryption ($\signcrypt(m,\sk_S,\pk_S,\pk_R)$). }This probabilistic algorithm inputs a message $m$, the key pair $(\sk_S,\pk_S)$ of the sender, the public key $\pk_R$ of the receiver, and outputs the signcryption $\mu$ of the message $m$.
\item \textbf{Proof of validity ($\proveValidity(\mu,\pk_S,\pk_R)$). }This is an interactive protocol between the receiver or the sender who has just generated a signcryption $\mu$ on some message, and any verifier: the sender uses the randomness used to create $\mu$ (as private input) and the receiver uses his private key $\sk_R$ in order to convince the verifier that $\mu$ is a valid signcryption on some message. The common input to both the prover and the verifier comprises the signcryption $\mu$, $\pk_S$, and $\pk_R$. At the end, the verifier either accepts or rejects the proof. 
\item \textbf{Unsigncryption ($\unsigncrypt(\mu,\sk_R,\pk_R,\pk_S)$). }This is a deterministic algorithm which inputs a putative signcryption $\mu$ on some message, the key pair $(\sk_R,\pk_R)$ of the receiver, and the public key $\pk_S$ of the sender, and outputs either the message underlying $\mu$ or an error symbol $\perp$.

\item \textbf{Confirmation/Denial ($\{\confirm,\deny\}(\mu,m,\pk_R,\pk_S)$). }These are  interactive protocols between the receiver and any verifier; the receiver uses his private key $\sk_R$ (as private input) to convince any verifier that a signcryption $\mu$ on some message $m$ is/is not valid. The common input comprises the signcryption $\mu$ and the message $m$, in addition to $\pk_R$ and $\pk_S$. At the end, the verifier is either convinced of  the validity/invalidity of $\mu$ w.r.t. $m$ or not.
\item \textbf{Signature extraction ($\sigExtract(\mu,m,\sk_R,\pk_R,\pk_S)$).} This is an algorithm which inputs a signcryption $\mu$, a message $m$, the key pair $(\sk_R,\pk_R)$ of the receiver, and the public key $\pk_S$ of the sender, and outputs either an error symbol $\perp$ if $\mu$ is not a valid signcryption on $m$, or a string $\sigma$ which is a valid digital signature on $m$ w.r.t $\pk_S$ otherwise.

\item \textbf{Signature verification ($\sigVerify(\sigma,m,\pk_S)$).} This is an algorithm for verifying extracted signatures. It inputs the extracted signature $\sigma$, the message $m$ and $\pks$, and outputs either $0$ or $1$.
\end{description}

We require in a signcryption scheme correctness and soundness. Moreover, the protocols $\proveValidity$ and $\{\confirm, \deny\}$ must be complete, sound, and non-transferable. The formal definitions of these notions are similar to the confirmer signatures case (see Subsection \ref{subsec:syntax+model} ), therefore we omit them here due to page limitations.
 
Finally, we require in a signcryption scheme two further properties: unforgeability, which protects the sender's authenticity from \emph{malicious insider} adversaries (i.e. the receiver), and  indistinguishability, which protects the sender's privacy from \emph{outsider adversaries}.


\begin{definition}[Unforgeability]\label{def:sigcry-EUFCMA}
We consider a signcryption scheme $\SC$ given by the algorithms/protocols defined earlier in this section. Let ${\mathcal A}$ be a PPTM. We consider the random
experiment depicted in Experiment $\mathbf{Exp}_{\ensuremath{\SC},{\mathcal A}}^{\mathsf{euf\mbox{-}cma}}(1^\kappa)$.  

\begin{experiment}[Experiment $\mathbf{Exp}_{\ensuremath{\SC},{\mathcal A}}^{\mathsf{euf\mbox{-}cma}}(1^\kappa)$]

\item $\param \leftarrow \ensuremath{\SC}.\setup(1^\kappa)$;
\item $(\pks,\sks) \leftarrow \ensuremath{\SC}.\keygen_S(1^\kappa,\param)$;
\item $(\pkr,\skr) \leftarrow {\cal A}(\pks)$;
\item $\mu^{\star} \leftarrow \mathcal{A}^{\mathfrak{S}}(\pks,\pkr,\skr)$;\\
\phantom{$\mu^{\star} \leftarrow$} $
  \begin{array}{l} 
\mathfrak{S} : m \longmapsto \SC.\signcrypt{\{\sks,\pks,\pkr\}}(m) \\ 
\end{array}$
\item \textsf{return} 1 \textsf{if and only if :} \\
\hspace{3mm} - $\SC.\unsigncrypt_{\{\skr,\pkr,\pks\}}[\mu^\star] = m^\star$ \\
\hspace{3mm} - $m^\star$ was not queried to $\mathfrak{S}$
\end{experiment}

\noindent We define the \emph{success} of $\mathcal{A}$ \emph{via}: 
$$\displaystyle
{\mathbf{Succ}_{\ensuremath{\SC},{\mathcal A}}^{\textsf{euf-cma}}(1^\kappa) = \Pr\left[\mathbf{Exp}_{\ensuremath{\SC},{\mathcal
      A}}^{\textsf{euf-cma}}(1^\kappa)=1\right].} 
$$

\noindent Given $(t,q_s) \in \mathbb{N}^2$ and $\varepsilon \in [0,1]$, $\cal
A$ is called a $(t,\varepsilon,q_s)$-EUF-CMA adversary against
\ensuremath{\SC} if, running in time $t$ and issuing $q_s$
queries to the $\SC.\signcrypt$ oracle, $\cal A$ has
$\mathbf{Succ}_{\ensuremath{\SC},{\mathcal A}}^{\textsf{euf-cma}}(1^\kappa) \geq
\varepsilon$. The scheme \ensuremath{\SC}  is said to be
$(t,\varepsilon,q_s)$-EUF-CMA secure if no
$(t,\varepsilon,q_s)$-EUF-CMA adversary against it exists. 
 \end{definition}

\begin{remark}
Note that $\cal A$ is not given the $\proveValidity$, $\unsigncrypt$, $\sigExtract$, and $\{\confirm,\deny\}$ oracles. In fact, these oracles are useless for him as he has the receiver's private key $\skr$ at his disposal.
\end{remark}


\begin{definition}[Indistinguishability (IND-CCA)]\label{def:sigcryINDCCA}
Let $\SC$ be a signcryption scheme, and let ${\mathcal A}$ be a PPTM. We consider the random experiment for $b \hasard \{0,1\}$ depicted in Experiment $\mathbf{Exp}_{\ensuremath{\SC},{\mathcal
A}}^{\textsf{ind-cca}\mbox{-}b}(1^\kappa)$. 
\begin{center}
\footnotesize
\begin{experiment}[Experiment $\mathbf{Exp}_{\ensuremath{\SC},{\mathcal
A}}^{\textsf{ind-cca}\mbox{-}b}(1^\kappa)$]

\item $\param \leftarrow \ensuremath{\SC}.\setup(1^\kappa)$;
\item $(\sks,\pks) \leftarrow \ensuremath{\SC}.\keygen_S(1^\kappa,\param)$;
\item $(\skr,\pkr) \leftarrow \ensuremath{\SC}.\keygen(1^\kappa,\param)$;

\item $(m_0^{\star},m_1^\star,\mathcal{I}) \leftarrow {\mathcal A}^{\mathfrak{S}, \mathfrak{V}, \mathfrak{U}, \mathfrak{C}}
(\pks,\pkr)$;\\
\phantom{$(m_0^{\star},m_1^{\star},\mathcal{I}) \leftarrow$} $\left\vert
\begin{array}{l} 
\mathfrak{S} : m  \longmapsto \SC.\signcrypt_{\{\sks,\pks,\pkr\}}(m) \\ 
\mathfrak{V} : \mu \longmapsto \SC.\proveValidity(\mu,\pks,\pkr) \\
\mathfrak{U} : \mu \longmapsto \SC.\unsigncrypt_{\skr,\pkr,\pks}(\mu) \\
\mathfrak{C} : (\mu,m) \longmapsto \SC.\{\confirm,\deny\}(\mu,m,\pkr,\pks) \\
\mathfrak{P} : (\mu,m) \longmapsto \SC.\sigExtract(\mu,m,\pkr,\pks) \\
\end{array} \right.$ \\

\item $\mu^{\star} \leftarrow \SC.\signcrypt_{\{\sks,\pks,\pkr\}}(m^\star_b)$;

\item $d\leftarrow {\mathcal A}^{\mathfrak{S}, \mathfrak{V}, \mathfrak{U}, \mathfrak{C}}
(\mathcal{I},\mu^{\star},\pks,\pkc)$;\\
\phantom{$$} $\left\vert
\begin{array}{l} 
\mathfrak{S} : m  \longmapsto \SC.\signcrypt_{\{\sks,\pks,\pkr\}}(m) \\
\mathfrak{V} : \mu \longmapsto \SC.\proveValidity(\mu,\pks,\pkr) \\ 
\mathfrak{U} : \mu (\neq \mu^\star) \longmapsto \SC.\unsigncrypt_{\skr,\pkr,\pks}(\mu) \\
\mathfrak{C} : (\mu,m) (\neq (\mu^\star,m_i^\star), i=0,1) \longmapsto
\SC.\{\confirm,\deny\}(\mu,m) \\
\mathfrak{P} : (\mu,m) (\neq (\mu^\star,m_i^\star), i=0,1) \longmapsto
\SC.\sigExtract(\mu,m) \\
\end{array} \right.$ \\
\item Return $d$;
\end{experiment}
\end{center}

\noindent We define the \emph{advantage} of $\mathcal{A}$ \emph{via}:

$$ 
\mathbf{Adv}_{\ensuremath{\SC},{\mathcal A}}^{\textsf{ind-cca}}(1^\kappa)= \left\vert \Pr\left[\mathbf{Exp}_{\ensuremath{\SC},{\mathcal
A}}^{\mathsf{ind-cca-b}}(1^\kappa)=b\right] - \frac{1}{2} \right\vert.
$$
Given $(t,q_s,q_v,q_{u},q_{cd},q_{e}) \in \mathbb{N}^6$ and $\varepsilon \in [0,1]$, $\cal
A$ is called a $(t,\varepsilon,q_s,q_v,q_{u},q_{cd},q_{e})$-IND-CCA adversary
against \ensuremath{\SC} if, running in time $t$  and issuing $q_s$ queries to the $\signcrypt$ oracle, $q_v$ queries to the
$\proveValidity$ oracle, $q_{u}$ queries to the $\unsigncrypt$
oracle, $q_{cd}$ queries to the $\{\confirm,\deny\}$
oracle, and $q_{e}$ to the $\sigExtract$ oracle, $\cal A$ has\\ $\mathbf{Adv}_{\ensuremath{\SC},{\mathcal
    A}}^{\mathsf{ind-cca}}(1^\kappa) \geq \varepsilon$. The scheme
\ensuremath{\SC} is
$(t,\varepsilon,q_s,q_v,q_{u},q_{cd},q_{e})$-IND-CCA secure if no
$(t,\varepsilon,q_s,q_v,q_{u},q_{cd},q_{e})$-IND-CCA adversary against it
exists.

\end{definition}

\subsection{Classical constructions for verifiable signcryption}
\label{subsec:mainConstructions}
Let $\Sigma$ be a digital signature scheme given by $\Sigma.\keygen$ which generates a key pair ($\Sigma.\sk$, $\Sigma.\pk$), $\Sigma.\sign$, and $\Sigma.\verify$. Let furthermore $\Gamma$ denote a public key encryption scheme described by $\Gamma.\keygen$ that generates the key pair ($\Gamma.\sk$,$\Gamma.\pk$), $\Gamma.\encrypt$, and $\Gamma.\decrypt$. Finally, let $\Omega$ be a commitment scheme given by the algorithms $\Omega.\commit$ and $\Omega.\open$.  The most popular paradigms used to devise signcryption schemes from basic primitives are:

\begin{itemize}
\item The \emph{``sign-then-encrypt'' (StE) paradigm} \cite{AnDodisRabin2002,MatsudaMatsuuraSchuldt2009,ChibaMatsudaSchuldtMatsuura2011}. Given a message $m$, $\signcrypt$ first produces a signature $\sigma$ on the message using $\Sigma.\sk$, then encrypts $m\|\sigma$ under $\Gamma.\pk$. The result forms the signcryption on $m$. To $\unsigncrypt$, one first decrypts the signcryption using $\Gamma.\sk$ in $m\|\sigma$, then checks the validity of $\sigma$, using $\Sigma.\pk$, on $m$. Finally, $\sigExtract$ of a valid signcryption $\mu=\Gamma.\encrypt(m\|\sigma)$ on $m$ outputs $\sigma$.
 
\item The \emph{``encrypt-then-sign'' (EtS) paradigm} \cite{AnDodisRabin2002,MatsudaMatsuuraSchuldt2009}. Given a message $m$, $\signcrypt$ produces an encryption $e$ on $m$ using $\Gamma.\pk$, then produces a signature $\sigma$ on $e$ using $\Sigma.\sk$; the signcryption is the pair $(e,\sigma)$. To $\unsigncrypt$ such a signcryption, one first checks the validity of $\sigma$ w.r.t. $e$ using $\Sigma.\pk$, then decrypts $e$ using $\Gamma.\sk$ to get $m$. Finally, $\sigExtract$ outputs a zero knowledge non-interactive (NIZK) proof that $m$ is the decryption of $e$; such a proof is possible since the statement in question is in NP (\cite{GoldreichMicaliWigderson1986} and \cite{BlumFeldmanMicali1988}). This paradigm naturally requires the presence of a trusted authority in order to generate the common reference string needed for the NIZK proofs.

\item The \emph{``commit-then-encrypt-and-sign'' (CtEaS) paradigm} \cite{AnDodisRabin2002}. This construction has the advantage of performing the signature and the encryption \emph{in parallel} in contrast to the previous sequential compositions. Given a message $m$, one first produces a commitment $c$ on it using some random nonce $r$, then encrypts $m\|r$ under $\Gamma.\pk$, \emph{and} produces a signature $\sigma$ on $c$ using $\Sigma.\sk$. The signcryption is the triple $(e,c,\sigma)$. To $\unsigncrypt$ such a signcryption, one first checks the validity of $\sigma$ w.r.t. $c$, then decrypts $e$ to get $m\|r$, and finally checks the validity of the commitment $c$ w.r.t $(m,r)$. $\sigExtract$ is achieved by releasing the decryption of $e$, namely $m\|r$.
 
\end{itemize}

The proofs of  well (mal) formed-ness, namely $\sf{prove}$-$\sf{Validity}$ and $\{\confirm,\deny\}$ can be carried out since the languages in question are in NP and thus accept zero knowledge proof systems \cite{GoldreichMicaliWigderson1986}. 

\subsection{Negative results for StE and CtEaS}
We proceed in this subsection as we did in confirmer signatures. First, we prove that OW-CCA and NM-CPA encryption are insufficient to yield IND-CCA constructions from StE or CtEaS. We first prove this result for \emph{key-preserving reductions}, then we generalize it to arbitrary reductions assuming further properties on the underlying encryption. Next, we rule out OW-CPA, IND-CPA, and OW-PCA by remarking that ElGamal's \cite{ElGamal1985} encryption  meets all those notions but 
leads to a simple attack against IND-CCA, when employed in constructions from StE or CtEaS.

\begin{lemma}
\label{lemma:OW-CCA-insufficient}
Assume there exists a key-preserving reduction $\cal R$ that converts an IND-CCA adversary $\cal A$ against signcryptions from the StE (CtEaS) paradigm to a OW-CCA adversary against the underlying encryption scheme. Then, there exists a meta-reduction $\cal M$ that OW-CCA breaks the encryption scheme in question. 
\end{lemma}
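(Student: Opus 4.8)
The plan is to mirror the meta-reduction of Lemma \ref{lemma:OW-CCA-CS} (the confirmer-signature case), adapting it from the invisibility game to the indistinguishability game of signcryption. First I would let $\Gamma$ be the encryption scheme $\cal M$ attacks, with challenge public key $\Gamma.\pk$ and OW-CCA challenge ciphertext $c$. Since $\cal R$ is key-preserving, $\cal M$ launches $\cal R$ on the very same key $\Gamma.\pk$ (handed to the simulated IND-CCA adversary as the receiver's public key $\pkr$, while $\cal R$ freely chooses $\pks$ together with the signature scheme) and on the same challenge $c$. Every decryption query issued by $\cal R$ is, by definition of OW-CCA, on a ciphertext different from $c$, so $\cal M$ forwards it to its own decryption oracle and relays the answer. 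All that then remains is for $\cal M$ to simulate a \emph{successful} IND-CCA adversary $\cal A$ against the construction; once this simulation is faithful, $\cal R$ is expected to output the decryption of $c$, which $\cal M$ relays to its own challenger.

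For the StE instantiation I would have $\cal M$, acting as $\cal A$, pick two independent random messages $m_0^\star,m_1^\star$ and first probe whether $c$ is itself a valid signcryption on one of them by querying $\cal R$'s $\sigExtract$ oracle on $(c,m_0^\star)$ and $(c,m_1^\star)$. If either query returns some $\sigma \neq \perp$, then $\cal M$ has already won outright: the plaintext of $c$ is $m_i^\star\|\sigma$, both parts of which $\cal M$ now knows, so it outputs this plaintext to its OW-CCA challenger and halts. Otherwise $c$ is not a valid signcryption on either challenge message, so when $\cal M$ submits $m_0^\star,m_1^\star$ and receives the challenge signcryption $\mu^\star=\signcrypt(m_b^\star)=\Gamma.\encrypt(m_b^\star\|\sigma_b)$, we are guaranteed $\mu^\star \neq c$. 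Hence $\cal M$ may legitimately ask its own OW-CCA decryption oracle to open $\mu^\star$, read off $m_b^\star$, recover $b$ with certainty, and feed $b$ back to $\cal R$. All remaining oracle queries of the simulated $\cal A$ are answered by $\cal R$ itself, and the forbidden-query restrictions of the IND-CCA game are respected because $\cal M$ never queries $\cal R$ on $\mu^\star$ nor on $(\mu^\star,m_i^\star)$.

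For the CtEaS instantiation the OW-CCA challenge is the ciphertext $e^\star$ (the encryption component of a signcryption), and $\cal M$ launches $\cal R$ on $e^\star$. After $\cal M$ submits random $m_0^\star,m_1^\star$ it receives $\mu^\star=(c_b^\star,e_b^\star,\sigma_b^\star)$ with $e_b^\star=\Gamma.\encrypt(m_b^\star\|r)$. The key observation is that, since $m_0^\star,m_1^\star$ are chosen independently of $e^\star$, the plaintext underlying $e^\star$ has prefix $m_0^\star$ or $m_1^\star$ only with negligible probability; consequently $e_b^\star \neq e^\star$, and $\cal M$ may decrypt $e_b^\star$ through its own challenger, read $m_b^\star$, and recover $b$ exactly as above. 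As before, $\cal R$ then returns the decryption of $e^\star$, which $\cal M$ relays.

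The main obstacle, and the only place requiring care, is ensuring that the challenge object on which $\cal M$ must call its decryption oracle (namely $\mu^\star$ for StE, or $e_b^\star$ for CtEaS) is distinct from $\cal M$'s own OW-CCA challenge, since the OW-CCA oracle refuses exactly that one ciphertext. For StE this is forced by the preliminary $\sigExtract$ probe, which simultaneously disposes of the degenerate case where $c$ is already a valid signcryption; for CtEaS it follows from a statistical independence argument on the randomly chosen challenge messages. Once this separation is secured, the simulation of $\cal A$ is perfect, and $\cal M$ inherits essentially the success probability of $\cal R$, so that if $\Gamma$ is OW-CCA secure no such key-preserving $\cal R$ can exist, completing the meta-reduction.
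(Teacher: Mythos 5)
Your proposal is correct and follows essentially the same route as the paper, which itself proves this lemma by declaring it ``similar to Lemma~\ref{lemma:OW-CCA-CS}'': launch the key-preserving $\cal R$ on the same key and challenge, answer its decryption queries via your own oracle, probe whether the challenge ciphertext is already a valid signcryption on the randomly chosen messages, and otherwise decrypt the (necessarily distinct) challenge signcryption to win the simulated IND-CCA game. Your adaptations — using $\sigExtract$ so that the recovered $\sigma$ together with the known $m_i^\star$ yields the full plaintext $m_i^\star\|\sigma$ in the StE case, and the independence argument for $e_b^\star\neq e^\star$ in the CtEaS case — are exactly the adjustments the paper's deferral implies.
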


\begin{lemma}
\label{lemma:NM-CPA-insufficient}
Assume there exists a key-preserving reduction $\cal R$ that converts an IND-CCA adversary $\cal A$ against signcryptions from the StE (CtEaS) paradigm to an NM-CPA adversary against the underlying encryption scheme. Then, there exists a meta-reduction $\cal M$ that NM-CPA breaks the encryption scheme in question.
\end{lemma}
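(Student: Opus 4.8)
The plan is to mirror almost verbatim the argument used for confirmer signatures in Lemma \ref{lemma:NM-CPA-CS}, since an IND-CCA adversary against a signcryption scheme is handed exactly the oracles (in particular $\unsigncrypt$ and $\sigExtract$) that play the role of the conversion oracle in the confirmer setting. Given the key-preserving reduction $\cal R$ that NM-CPA-breaks $\Gamma$ using an IND-CCA adversary $\cal A$, I would build a meta-reduction $\cal M$ that plays its own NM-CPA game against $\Gamma$ w.r.t. the same public key $\Gamma.\pk$. $\cal M$ launches $\cal R$ on $\Gamma.\pk$ and simulates $\cal A$ by feeding off the oracle simulation that $\cal R$ must provide, then abandons $\cal R$ once the decryption information it needs has been extracted. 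Crucially, $\cal M$ never runs $\cal R$ through $\cal R$'s own NM-CPA challenge phase: it exploits only $\cal R$'s faithful answering of the signcryption environment's queries.

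For the StE paradigm a signcryption on $m$ has the form $\mu=\Gamma.\encrypt_{\Gamma.\pk}(m\|\sigma)$ with $\sigma=\Sigma.\sign_{\Sigma.\sk}(m)$. First I would have $\cal M$ (acting as $\cal A$) request signcryptions on two distinct messages $m_0\neq m_1$, obtaining $\mu_0,\mu_1$, and then query $\sigExtract$ on $(\mu_i,m_i)$ to recover $\sigma_0,\sigma_1$; since $m_0\neq m_1$ the two plaintexts $m_0\|\sigma_0$ and $m_1\|\sigma_1$ are automatically distinct. $\cal M$ then submits the distribution $D=\{m_0\|\sigma_0,\,m_1\|\sigma_1\}$ to its own NM-CPA challenger and receives a challenge $\mu^\star$ encrypting $m_b\|\sigma_b$. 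To learn $b$, $\cal M$ hands $(\mu^\star,m_0)$ to $\cal R$'s $\sigExtract$ oracle: the answer is $\sigma_0$ exactly when $\mu^\star$ encrypts $m_0\|\sigma_0$ (message component $m_0$, valid signature) and is $\perp$ when it encrypts $m_1\|\sigma_1$ (message component $m_1\neq m_0$). Having identified the plaintext, $\cal M$ outputs $\Gamma.\encrypt_{\Gamma.\pk}(\overline{m_b\|\sigma_b})$ together with the relation $R(x,x')=(x'=\overline{x})$ and halts.

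For the CtEaS paradigm a signcryption is $\mu=(e,c,\sigma)$ with $c=\Omega.\commit(m,r)$, $e=\Gamma.\encrypt_{\Gamma.\pk}(m\|r)$ and $\sigma=\Sigma.\sign_{\Sigma.\sk}(c)$. Here $\cal M$ again requests signcryptions on $m_0\neq m_1$, recovers the encrypted plaintexts $m_0\|r_0$ and $m_1\|r_1$ via $\sigExtract$, and submits $D=\{m_0\|r_0,\,m_1\|r_1\}$ to its challenger, receiving a challenge $e^\star$. It then picks a bit $b\hasard\{0,1\}$, assembles the candidate signcryption $\mu^\star=(e^\star,c_b,\sigma_b)$, and queries $\sigExtract$ on $(\mu^\star,m_b)$ through $\cal R$. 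Because $\sigma_b$ is valid on $c_b$ and the commitment is binding, $\mu^\star$ unsigncrypts successfully on $m_b$ precisely when $e^\star$ encrypts $m_b\|r_b$; so a non-$\perp$ answer tells $\cal M$ that $e^\star$ encrypts $m_b\|r_b$ (whence it outputs $\Gamma.\encrypt_{\Gamma.\pk}(\overline{m_b\|r_b})$), while a $\perp$ answer tells it $e^\star$ encrypts $m_{1-b}\|r_{1-b}$ (whence it outputs $\Gamma.\encrypt_{\Gamma.\pk}(\overline{m_{1-b}\|r_{1-b}})$), in both cases with the relation $R(y,y')=(y'=\overline{y})$.

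The hard part will be justifying that $\cal R$ is obliged to answer the decisive $\sigExtract$ query correctly. The query $(\mu^\star,m_0)$ (resp. $(\mu^\star,m_b)$) is issued before $\cal M$ ever outputs IND-CCA challenge messages to $\cal R$, so from $\cal R$'s viewpoint it is an ordinary, permitted pre-challenge extraction query rather than the forbidden query on the signcryption challenge; a faithful simulation of $\cal A$'s environment must therefore return the genuine extraction result, which---since $\cal R$ is key-preserving and possesses only NM-CPA access---amounts to decrypting under $\Gamma.\pk$ the very ciphertext ($\mu^\star$ or $e^\star$) that $\cal M$ is trying to analyse. I would conclude, exactly as in the arbitrary-reduction treatment for confirmer signatures, that $\cal M$ solves its NM-CPA challenge whenever $\cal R$ provides a correct simulation, so a useful such $\cal R$ cannot exist unless $\Gamma$ is already NM-CPA insecure.
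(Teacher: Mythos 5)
Your proposal is correct and takes essentially the same route as the paper, which proves this lemma simply by noting that the argument of Lemma \ref{lemma:NM-CPA-CS} carries over: you instantiate exactly that meta-reduction, with $\sigExtract$/$\unsigncrypt$ playing the role of the conversion oracle and the distribution $D$ adjusted to the plaintexts actually encrypted in each paradigm. Your observation that the two candidate plaintexts are automatically distinct in the StE case (since they contain $m_0\neq m_1$) is a minor simplification over the confirmer-signature version, but does not change the structure of the argument.
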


The proofs are similar to those of Lemma \ref{lemma:OW-CCA-CS} and Lemma \ref{lemma:NM-CPA-CS} respectively.

\noindent We similarly generalize the previous results to arbitrary reductions as in Subsection \ref{subsec:arbitraryReductions} if the encryption scheme has a \emph{non-malleable key generator}, which informally means that OW-CCA (NM-CPA) breaking the encryption, w.r.t. a public key $\pk$, is no easier when given access to a decryption oracle w.r.t. any key $\pk'$ different from $\pk$.

\noindent Moreover, we can rule out the OW-CPA, OW-PCA, and IND-CPA notions by remarking that ElGamal's encryption meets all those notions (under different assumptions), but cannot be employed in StE and CtEaS as it is malleable. In fact, the indistinguishability adversary can create a new signcryption (by re-encrypting the ElGamal encryption) on the challenge message, and query it for unsigncryption. The answer to such a query is sufficient to conclude. 

In consequence of the above analysis, the used encryption scheme has to satisfy at least IND-PCA security in order to lead to secure signcryption from StE or CtEaS. This translates in expensive operations, especially if verifiability is further required for the resulting signcryption.

\subsection{Positive results for signcryption schemes}

\subsubsection{The new StE and CtEaS paradigms} Signcryptions from StE or CtEaS suffer the strong forgeability: given a signcryption on some message, one can create another signcryption on the same message without the sender's help. To circumvent this problem, we can apply the same techniques used previously in confirmer signatures, namely bind the digital signature to its corresponding signcryption. This translates for CtEaS in producing the digital signature on both the commitment and the encryption. Similarly to confirmer signatures, the new CtEaS looses the parallelism of the original one, i.e. encryption and signature can no longer be carried out in parallel, however it has the advantage of resting on cheap encryption compared to the early one. The new StE uses similarly an encryption scheme from the KEM-DEM paradigm, and the digital signature is produced on both the encapsulation of the key (used later for encryption) and the message .

Unfortunately,  verifiability turns out to be a hurdle in both StE and CtEaS paradigms; the new (and old) StE paradigm encrypts the message to be signcrypted concatenated with a digital signature. As we are interested in proving the validity of the produced signcryption, we  need to exploit the homomorphic properties of the signature and of the encryption schemes in order to provide proofs of knowledge of the encrypted signature and message. As a consequence, the used encryption and signature need to operate on elements from a set with a known algebraic structure rather than on bit-strings. The same thing applies to the new (and old) CtEaS paradigm as encryption is performed on the concatenation of the message to be signcrypted and the opening value of the commitment scheme.

This leaves us with only the EtS paradigm to get efficient verifiable signcryption. In fact, the sender needs simply to prove knowledge of the decryption of a given ciphertext. Also, the receiver has to prove that a message is/isn't the decryption of a given ciphertext. Such proofs are easy to carry out if one considers the already mentioned class $\bbbe$. Moreover, $\sigExtract$ (similar to conversion in confirmer signatures) can be made efficient for many encryption schemes from the class $\bbbe$. Unfortunately, signcryptions from EtS are not anonymous, i.e. disclose the identity of the sender (anyone can check the validity of the digital signature on the ciphertext w.r.t. the sender's public key).

To  sum-up, EtS provides efficient verifiability but at the expense of the sender's anonymity. StE achieves better privacy  but at the expense of verifiability. It would be nice to have a technique that combines the merits of both paradigms while avoiding their drawbacks. This is the main contribution in the next paragraph.

\subsubsection{A new paradigm for efficient verifiable signcryption}

The core of the idea consists in first encrypting the message to be signcrypted using a public key encryption scheme, then applying the new StE to the produced encryption. The result of this operation in addition to the encrypted message form the new signcryption of the message in question. In other terms, this technique can be seen as a merge between EtS and StE; thus we can term it the "encrypt-then-sign-then-encrypt" paradigm (EtStE).

Consider a signature scheme $\Sigma$, an encryption scheme $\Gamma$, and another encryption $(\kem,\dem)$ derived from the KEM/DEM paradigm. On input the security parameter $\kappa$, generate the parameters $\param$ of these schemes. Note that a trusted authority is needed to generate the common reference strings for the NIZK proofs. We assume that signatures issued with $\Sigma$ can be written as $(r,s)$, where $r$ reveals no information about the signed message nor about the public signing key, and $s$ represents the ``significant'' part of the signature. Signcryptions from EtStE are as follows:
\begin{description}
\item \textbf{Key generation. }Invoke the key generation algorithms of the building blocks and set the sender's key pair to $(\Sigma.\pk,\Sigma.\sk)$, and the receiver's key pair to $(\{\Gamma.\pk,\kem.\pk\},\{\Gamma.\sk,\kem.\sk\})$.
\item \textbf{Signcrypt. }On a message $m$, produce an encryption $e=\Gamma.\encrypt_{\Gamma.\pk}(m)$ of $m$. Then fix a key $k$ along with its encapsulation $c$ using $\kem.\encrypt_{\kem.\pk}$, produce a signature $(r,s)$ on $c\|e$, and finally encrypt $s$ with $k$ using $\dem.\encrypt$. The signcryption of $m$ is the tuple $(e,c,\dem.\encrypt_k(s),r)$.
\item \textbf{Prove Validity. }Given a signcryption $\mu=(\mu_1,\mu_2,\mu_3,\mu_4)$ on a message $m$, the prover proves knowledge of the decryption of $\mu_1$, and of the decryption of $(\mu_2,\mu_3)$, which together with $\mu_4$ forms a valid digital signature on $\mu_2\|\mu_1$. The private input is either the randomness used to create $\mu$ or $\{\Gamma.\sk,\kem.\sk\}$.
\item \textbf{Unsigncrypt. }On a signcryption a $(\mu_1,\mu_2,\mu_3,\mu_4)$, compute $m=\Gamma.\decrypt_{\Gamma.\sk}(\mu_1)$ and $k=\kem.\decap_{\kem.\sk}(\mu_2)$. Check whether $(\dem.\decrypt_k(\mu_3),\mu_4)$ is valid signature on $\mu_2\|\mu_1$; if yes then output $m$, otherwise output $\perp$.
\item \textbf{Confirm/Deny. }On input a putative signcryption $\mu=(\mu_1,\mu_2,\mu_3,\mu_4)$ on a message $m$, use the receiver's private key to prove that $m$ is/isn't the decryption of $\mu_1$, and prove knowledge of the decryption of $(\mu_2,\mu_3)$, which together with $\mu_4$ forms a valid/invalid digital signature on $\mu_2\|\mu_1$. 
\item \textbf{Signature extraction. }On a valid signcryption $\mu=(\mu_1,\mu_2,\mu_3,\mu_4)$ on a message $m$, output a NIZK proof that $\mu_1$ encrypts $m$, in addition to \\$(\dem.\decrypt_{\kem.\decap{(\mu_2)}}(\mu_3),\mu_4)$.
\end{description}

Signcryptions from EtStE meet the following strong indistinguishability notion, which captures both the anonymity of the sender and the indistinguishability of the signcryptions. The notion informally denotes the difficulty to distinguish signcryptions on an adversarially chosen message from random elements in the signcryption space.

\begin{definition}[String Indistinguishability (SIND-CCA)]\label{def:sigcryINVCCA}
Let $\SC$ be a signcryption scheme, and let ${\mathcal A}$ be a PPTM. We consider the random experiment for $b \hasard \{0,1\}$ depicted in Experiment $\mathbf{Exp}_{\ensuremath{\SC},{\mathcal
A}}^{\textsf{sind-cca}\mbox{-}b}(1^\kappa)$. 
\begin{center}
\footnotesize
\begin{experiment}[Experiment $\mathbf{Exp}_{\ensuremath{\SC},{\mathcal
A}}^{\textsf{sind-cca}\mbox{-}b}(1^\kappa)$]

\item $\param \leftarrow \ensuremath{\SC}.\setup(1^\kappa)$;
\item $(\sks,\pks) \leftarrow \ensuremath{\SC}.\keygen_S(1^\kappa,\param)$;
\item $(\skr,\pkr) \leftarrow \ensuremath{\SC}.\keygen(1^\kappa,\param)$;

\item $(m^{\star},\mathcal{I}) \leftarrow {\mathcal A}^{\mathfrak{S}, \mathfrak{V}, \mathfrak{U}, \mathfrak{C}}
(\pks,\pkr)$;\\
\phantom{$(m^{\star},\mathcal{I}) \leftarrow$} $\left\vert
\begin{array}{l} 
\mathfrak{S} : m  \longmapsto \SC.\signcrypt_{\{\sks,\pks,\pkr\}}(m) \\ 
\mathfrak{V} : \mu \longmapsto \SC.\proveValidity(\mu,\pks,\pkr) \\
\mathfrak{U} : \mu \longmapsto \SC.\unsigncrypt_{\skr,\pkr,\pks}(\mu) \\
\mathfrak{C} : (\mu,m) \longmapsto \SC.\{\confirm,\deny\}(\mu,m,\pkr,\pks) \\
\mathfrak{P} : (\mu,m) \longmapsto \SC.\sigExtract(\mu,m,\pkr,\pks) \\
\end{array} \right.$ \\

\item $\mu_1^{\star} \leftarrow \SC.\signcrypt_{\{\sks,\pks,\pkr\}}(m^\star)$;

\item $\mu_0^{\star} \hasard \SC.\sf{space}$; $b \hasard \{0,1\}$

\item $d\leftarrow {\mathcal A}^{\mathfrak{S}, \mathfrak{V}, \mathfrak{U}, \mathfrak{C}}
(\mathcal{I},\mu_b^{\star},\pks,\pkc)$;\\
\phantom{$$} $\left\vert
\begin{array}{l} 
\mathfrak{S} : m  \longmapsto \SC.\signcrypt_{\{\sks,\pks,\pkr\}}(m) \\
\mathfrak{V} : \mu \longmapsto \SC.\proveValidity(\mu,\pks,\pkr) \\ 
\mathfrak{U} : \mu (\neq \mu^\star) \longmapsto \SC.\unsigncrypt_{\skr,\pkr,\pks}(\mu) \\
\mathfrak{C} : (\mu,m) (\neq (\mu^\star,m^\star)) \longmapsto
\SC.\{\confirm,\deny\}(\mu,m) \\
\mathfrak{P} : (\mu,m) (\neq (\mu^\star,m^\star)) \longmapsto
\SC.\sigExtract(\mu,m) \\
\end{array} \right.$ \\
\item Return $d$;
\end{experiment}
\end{center}

\noindent We define the \emph{advantage} of $\mathcal{A}$ \emph{via}:

$$ 
\mathbf{Adv}_{\ensuremath{\SC},{\mathcal A}}^{\textsf{sind-cca}}(1^\kappa)= \left\vert \Pr\left[\mathbf{Exp}_{\ensuremath{\SC},{\mathcal
A}}^{\mathsf{sind-cca-b}}(1^\kappa)=b\right] - \frac{1}{2} \right\vert.
$$
Given $(t,q_s,q_v,q_{u},q_{cd},q_{e}) \in \mathbb{N}^6$ and $\varepsilon \in [0,1]$, $\cal
A$ is called a $(t,\varepsilon,q_s,q_v,q_{u},q_{cd},q_{e})$-SIND-CCA adversary
against \ensuremath{\SC} if, running in time $t$  and issuing $q_s$ queries to the $\signcrypt$ oracle, $q_v$ queries to the
$\proveValidity$ oracle, $q_{u}$ queries to the $\unsigncrypt$
oracle, $q_{cd}$ queries to the $\{\confirm,\deny\}$
oracle, and $q_{e}$ to the $\sigExtract$ oracle, $\cal A$ has\\ $\mathbf{Adv}_{\ensuremath{\SC},{\mathcal
    A}}^{\mathsf{sind-cca}}(1^\kappa) \geq \varepsilon$. The scheme
\ensuremath{\SC} is
$(t,\varepsilon,q_s,q_v,q_{u},q_{cd},q_{e})$-SIND-CCA secure if no
$(t,\varepsilon,q_s,q_v,q_{u},q_{cd},q_{e})$-SIND-CCA adversary against it
exists.

\end{definition}

\begin{thm}
\label{thm:EtStE-forgeryGeneric}
Given $(t,q_s) \in \mathbb{N}^2$ and $\varepsilon \in [0,1]$, the above construction is ($t,\epsilon,q_s$)-EUF-CMA secure if the underlying digital signature scheme is ($t,\epsilon,q_s$)-EUF-CMA secure.
\qed \end{thm}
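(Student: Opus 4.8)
The plan is to reduce the EUF-CMA security of the EtStE construction to the EUF-CMA security of the underlying signature scheme $\Sigma$, following exactly the blueprint of Theorem~\ref{thm:newStE-forgery-CDCS}. Let $\cal A$ be a $(t,\epsilon,q_s)$-EUF-CMA adversary against the construction in the sense of Definition~\ref{def:sigcry-EUFCMA}; I build a forger $\cal R$ against $\Sigma$. First, $\cal R$ receives the challenge verification key $\Sigma.\pk$ from her own signing challenger, sets $\pks=\Sigma.\pk$, and hands it to $\cal A$. Since the notion is \emph{insider} (the adversary plays the malicious receiver), $\cal A$ returns a receiver key pair $(\pkr,\skr)=(\{\Gamma.\pk,\kem.\pk\},\{\Gamma.\sk,\kem.\sk\})$, which $\cal R$ records and may use in full.

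To answer a $\signcrypt$ query on a message $m$, $\cal R$ simulates the honest algorithm while delegating the only step she cannot perform (signing under $\Sigma.\sk$) to her challenger: she computes $e=\Gamma.\encrypt_{\Gamma.\pk}(m)$, runs $(c,k)\gets\kem.\encap_{\kem.\pk}()$, queries her signing oracle on the string $c\|e$ to obtain $(r,s)=\Sigma.\sign_{\Sigma.\sk}(c\|e)$, and outputs $(e,c,\dem.\encrypt_{k}(s),r)$. This is distributed exactly as a genuine signcryption, so $\cal A$'s view is perfect, and $\cal R$ makes exactly one signing query per $\signcrypt$ query, hence $q_s$ in total. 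When $\cal A$ halts with a forgery $\mu^\star=(\mu_1^\star,\mu_2^\star,\mu_3^\star,\mu_4^\star)$ on a message $m^\star$ that was never queried, $\cal R$ decapsulates $k^\star=\kem.\decap_{\kem.\sk}(\mu_2^\star)$ and sets $s^\star=\dem.\decrypt_{k^\star}(\mu_3^\star)$. By definition of $\unsigncrypt$, validity of $\mu^\star$ means precisely that $(s^\star,\mu_4^\star)$ is a valid $\Sigma$-signature on $\mu_2^\star\|\mu_1^\star$, so $\cal R$ outputs this pair on the message $\mu_2^\star\|\mu_1^\star$.

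The crux — and the only genuinely non-mechanical step — is to argue that $\mu_2^\star\|\mu_1^\star$ was never submitted to the $\Sigma$ signing oracle, so that $\cal R$'s output is a legitimate existential forgery. The strings signed during the simulation are exactly $\{c^i\|e^i\}_{i=1}^{q_s}$, one per query. Suppose $\mu_2^\star\|\mu_1^\star=c^j\|e^j$ for some $j$; because encapsulations are padded to a fixed length $\kappa$ so that the parsing of the concatenation $\|$ is unambiguous, this forces $\mu_1^\star=e^j$. Correctness of $\Gamma$ then gives $\Gamma.\decrypt_{\Gamma.\sk}(\mu_1^\star)=\Gamma.\decrypt_{\Gamma.\sk}(e^j)=m^j$, i.e.\ the message recovered by $\unsigncrypt$ satisfies $m^\star=m^j$, contradicting that $m^\star$ was never queried. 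Hence no such $j$ exists and $\cal R$'s forgery is fresh.

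Consequently $\cal R$ succeeds whenever $\cal A$ does — in time $t$ (up to the negligible cost of the $\kem$/$\dem$ operations), with probability $\epsilon$, after $q_s$ signing queries — which gives the claimed $(t,\epsilon,q_s)$-EUF-CMA bound. I expect the entire difficulty to sit in the freshness argument of the previous paragraph: it is exactly what the extra signature-binding buys over encrypting a bare signature, and it hinges on the outer encryption layer $e=\Gamma.\encrypt(m)$ determining $m$ uniquely via decryption, mirroring the role played by the encapsulation-message binding in Theorem~\ref{thm:newStE-forgery-CDCS}. Note in particular that only the \emph{standard} (not strong) unforgeability of $\Sigma$ is needed here, since any reused $c^j\|e^j$ is forced to carry the already-queried message $m^j$.
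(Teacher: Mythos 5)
Your proof is correct and follows exactly the reduction the paper intends: the theorem is stated without proof (only a \qed), implicitly deferring to the argument of Theorem \ref{thm:newStE-forgery-CDCS}, and your adaptation — including the freshness argument that a collision $\mu_2^\star\|\mu_1^\star=c^j\|e^j$ forces $m^\star=m^j$ via the unambiguous parsing of the fixed-length encapsulation and the correctness of $\Gamma$ — is precisely the missing content. Your closing observation that plain (not strong) unforgeability of $\Sigma$ suffices here is also consistent with the theorem statement.
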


\begin{thm}
\label{thm:EtStE-indistinguishability}
Given $(t,q_s,q_v,q_{u},q_{cd},q_{e}) \in \mathbb{N}^6$ and $(\varepsilon,\epsilon') \in
[0,1]^2$, the construction proposed above is ($t,\epsilon,q_s,q_v,q_u,\-q_{cd},q_{e}$)-SIND-CCA
secure if it uses a $(t,\epsilon_s,q_s)$-SEUF-CMA secure digital signature, a $(t,\epsilon_e)$-INV-CPA secure encryption, a ($t,\epsilon_d$)-INV-OT secure DEM with injective encryption, and a ($t+q_s(q_v+q_u+q_{cd}+q_{e}),\epsilon(1-\epsilon_e)(1-\epsilon_d)(1-\epsilon_s)^{q_v+q_{cd}+q_u+q_{pv}}$)-IND-CPA secure KEM.
\end{thm}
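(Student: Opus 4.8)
The proof combines the reduction of Theorem \ref{thm:KEM-invisibility} with one extra indistinguishability layer accounting for the $\Gamma$-encryption of the message. The plan is to build a reduction $\cal R$ that uses an SIND-CCA adversary $\cal A$ against the EtStE construction to $\IND\mbox{-}\CPA$ break the underlying KEM. First I would set up $\cal R$: she receives $\kem.\pk$ from her challenger, generates herself the key pair of $\Sigma$ (which she installs as the sender's key) and of $\Gamma$ (so that she holds $\Gamma.\sk$), and publishes the receiver's public keys $\{\Gamma.\pk,\kem.\pk\}$. Crucially $\cal R$ knows the signing key and $\Gamma.\sk$ but \emph{not} $\kem.\sk$, exactly as in the new StE reduction.

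Next I would describe the pre-challenge oracle simulation. Signcryption and $\proveValidity$ queries are answered as in the real scheme (the latter being simulatable since the protocols are concurrent zero-knowledge), while $\cal R$ maintains a list $\cal L$ recording each encapsulation together with the key used to signcrypt. For $\unsigncrypt$, $\{\confirm,\deny\}$, and $\sigExtract$ queries on a tuple $\mu=(\mu_1,\mu_2,\mu_3,\mu_4)$, $\cal R$ uses $\Gamma.\sk$ to recover the candidate message from $\mu_1$ and looks up $\cal L$ for the decapsulation of $\mu_2$; if $\mu_2$ is present she checks the embedded signature on $\mu_2\|\mu_1$ and answers correctly, otherwise she treats $\mu$ as invalid (running the denial protocol or returning $\perp$). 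As in Theorem \ref{thm:KEM-invisibility} and by the argument of Remark \ref{rmq:strongforgery}, this last branch deviates from the real execution only when a valid signcryption carries an encapsulation $\cal R$ never produced, which yields a (strong) existential forgery on $\Sigma$ on the string $\mu_2\|\mu_1$; hence the simulation is faithful up to the factor $(1-\epsilon_s)^{q_v+q_{cd}+q_u+q_e}$ collecting the verifiability-type queries.

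Then I would treat the challenge. On $\cal A$'s message $m^\star$, $\cal R$ uses her KEM challenge $(c^\star,k^\star)$: she computes $e^\star=\Gamma.\encrypt_{\Gamma.\pk}(m^\star)$, signs $c^\star\|e^\star$ to obtain $(r^\star,s^\star)$, encrypts $s^\star$ under $k^\star$ with $\dem$, and returns $\mu^\star=(e^\star,c^\star,\dem.\encrypt_{k^\star}(s^\star),r^\star)$. The crux is the case analysis on the KEM bit: when $k^\star$ is genuine, $\mu^\star$ is a valid signcryption on $m^\star$; when $k^\star$ is uniform, $\mu^\star$ is indistinguishable from a uniform element of the signcryption space, since $\dem.\encrypt_{k^\star}(s^\star)$ is indistinguishable from a random ciphertext by the $\INV\mbox{-}\OT$ security of $\dem$ (loss $(1-\epsilon_d)$), $e^\star$ is indistinguishable from a random $\Gamma$-ciphertext by the $\INV\mbox{-}\CPA$ security of $\Gamma$ (loss $(1-\epsilon_e)$), and $r^\star$ is statistically simulatable by the assumption on $\Sigma$. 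Thus $\cal R$ can forward $\cal A$'s final bit directly to her KEM challenger.

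Finally I would handle the post-challenge phase and collect the bound. Queries touching $c^\star$ are discharged exactly as in Theorem \ref{thm:KEM-invisibility}: using injectivity of $\dem$'s encryption together with the $\SEUF\mbox{-}\CMA$ property of $\Sigma$, any tuple $(e^\star,c^\star,\tilde\mu_3,\tilde r)\neq\mu^\star$ that were valid on $m^\star$ would force either two distinct DEM ciphertexts decrypting to the same $s^\star$ (impossible) or a strong forgery on $\Sigma$, so $\cal R$ may safely deny/reject. The main obstacle is assembling the three indistinguishability arguments cleanly — the KEM $\IND\mbox{-}\CPA$ game that $\cal R$ actually plays, the $\INV\mbox{-}\CPA$ hiding of $e^\star$, and the $\INV\mbox{-}\OT$ hiding of the DEM part — into a single hybrid so that the $(1-\epsilon_e)$ and $(1-\epsilon_d)$ losses multiply correctly with the $(1-\epsilon_s)$-simulation losses, yielding the advantage $\epsilon\cdot(1-\epsilon_e)(1-\epsilon_d)(1-\epsilon_s)^{q_v+q_{cd}+q_u+q_e}$ claimed for the KEM; the running-time overhead is the $q_s(q_v+q_u+q_{cd}+q_e)$ term from the repeated lookups in $\cal L$.
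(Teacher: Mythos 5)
Your proposal is correct and follows essentially the same route as the paper's (sketched) proof: a reduction to the IND-CPA security of the KEM in which $\cal R$ keeps the $\Sigma$ and $\Gamma$ key pairs plus a bookkeeping list to simulate the oracles, embeds the KEM challenge $(c^\star,k^\star)$ into the challenge signcryption, and argues via the INV-CPA security of $\Gamma$, the INV-OT security of the DEM, the simulatability of $r$, and the SEUF-CMA/injectivity arguments inherited from Theorem \ref{thm:KEM-invisibility} for the post-challenge queries. You actually spell out the oracle simulation and the post-challenge case analysis in more detail than the paper, which defers these steps to the earlier theorem.
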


\begin{proof}(Sketch)
From an SIND-CCA adversary $\A$ against the construction, we construct an algorithm $\R$ that IND-CPA break the KEM underlying the construction. $\R$ gets the public parameters of the KEM from her challenger and chooses further the remaining building blocks, i.e. the DEM, the signature, and the encryption scheme. Simulation of $\A$'s environment is done using the key pairs of the used signature and encryption schemes, in addition to a list in which $\R$ maintains the queries, their responses and the intermediate values used to generate these responses.

Eventually, $\cal A$ outputs a challenge messages $m$. $\cal R$ will encrypt, in $e$, the message $m$. Next, she produces a signature $(s,r)$ on $c\|e$, where $(c,k)$ is her challenge. Finally, $\cal R$ encrypts $s$ in $e_\dem$ using $k$, and outputs $\mu=(e,c,e_\dem,r)$ as a challenge signcryption. Since the used encryption is INV-CPA secure by assumption, then information about $m$ can only leak from $(c,e_\dem,r)$. If $k$ is the decapsulation of $c$, then $\mu$ is a valid signcryption of $m$, otherwise it is a random element from the signcryption space due to the assumptions on the used components (encryption scheme is INV-CPA, the DEM is INV-OT, and finally $r$ reveals no information about $e$ nor about sender's key). The rest follows as in the proof of Theorem \ref{thm:KEM-invisibility}.
\qed \end{proof}

\paragraph{Instantiations}The $\proveValidity$ and $\{\confirm,\deny\}$ protocols comprise the following sub-protocols:

\begin{enumerate}
\item Proving knowledge of the decryption of a ciphertext produced using the encryption scheme $\Gamma$.
\item Proving that a message is/isn't the decryption of a certain ciphertext produced using $\Gamma$.
\item Proving knowledge of the decryption of a ciphertext produced using $(\kem,\dem)$, and that this decryption forms a valid/invalid digital signature, issued using $\Sigma$, on some known string.
\end{enumerate}

It is natural to instantiate the encryption $\Gamma$ from the class $\bbbe$ described in Definition \ref{def:cryptosystemClass-EoS}. With this choice, the first two sub-protocols can be efficiently carried out as depicted in Figure \ref{fig:PKcryptosystemEoS} and Figure \ref{fig:conf-denySOC} respectively. Moreover, one can consider encryptions  from the class $\bbbe$ that are derived from the KEM/DEM paradigm, in addition to signatures from the class $\bbbs$ described in Definition \ref{def:signatureClass}. The last sub-protocol boils down then to the protocol depicted in Figure \ref{fig:conf-deny}.

Finally, for the $\sigExtract$ algorithm, we refer to the solutions adopted in confirmer signatures (described in Paragraph \ref{subsubsection:conversion-CDCS}) when it comes to producing a NIZK proof of the correctness of a decryption.

\subsection{Extension to multi-user signcryption}
So far, we considered signcryption schemes in the two-user setting, i.e. a single sender interacts with a single receiver. A signcryption scheme secure in the two-user setting does not necessarily mean that it conserves this security in the multi-user setting. In fact, the unforgeability adversary in the latter mode is allowed to  return a forgery on a message $m^\star$ that may have been queried before but w.r.t. a receiver's key different from the target receiver's key $\pkr^\star$. Moreover, the indistinguishability adversary is allowed to ask the unsigncryption or (public) verification of the challenge w.r.t. any receiver's key except that of the target receiver. However many works \cite{AnDodisRabin2002,MatsudaMatsuuraSchuldt2009} have proposed simple tweaks in order to derive multi-user security from two-user security. These techniques apply also to our constructions in order to guarantee security in the multi-user setting. For instance, the EtS paradigm in the multi-user setting departs from that of the two-user setting in the following elements:

\begin{enumerate}
\item It considers a tag-based encryption scheme where the tag is set to the public key of the sender $\pks$.
\item The digital signature is produced on the resulting ciphertext and on the public key of the receiver.
\end{enumerate}

Similarly, the EtStE paradigm in the multi-user setting deviates from that of the two-user one as follows:

\begin{enumerate}
\item It considers a tag-based KEM where the tag is set to the public key of the sender $\pks$.
\item The digital signature is produced on the resulting ciphertext and on the public key of the receiver.
\end{enumerate}

\section{Security Enhancement}
\label{sec:extensions}
In this section, we present two efficient transforms that upgrade the security in confirmer signatures/signcryption, i.e., allow to obtain online non-transferability and  insider invisibility/indistinguishability.

\subsection{Online non-transferability}
\label{subsec:onlineVsOffline-NT}
Online non-transferability as previously mentioned allows to avoid some attacks in which the
intended verifier, say $V$, interacts \emph{concurrently} with the genuine prover and a hidden malicious verifier $\widetilde{V}$ such that this latter gets convinced of the proven statement (validity or invalidity of a signature w.r.t. a given message). 

\noindent One way to circumvent this problem consists in using designated verifier proofs \cite{JakobssonSakoImpagliazzo1996}. In fact, these proofs can be conducted by both the prover and the verifier. When a verifier receives such a proof, he will be convinced of the validity of the underlying statement since he has not proved it himself. However, he cannot convince a third party of the validity of the statement as he can himself perfectly simulate the answers sent by the prover.

\noindent A generic construction of designated verifier proofs from $\Sigma$ protocols was given in \cite{ShahandashtiSafavi-Naini2008}. The idea consists in proving either the statement in question or proving knowledge of the verifier's private key. This is achieved using \emph{proofs of disjunctive knowledge} if the proof of the statement in question and the proof of knowledge of the verifier's private key are both $\Sigma$ protocols. We refer to \cite{ShahandashtiSafavi-Naini2008} for the details.

\indent Getting back to our problem, our already mentioned confirmation/denial protocols can be shown to be Sigma protocols if the proof $\zkp$ in the last round is non-interactive (this would necessitate the presence of a trusted authority). In this case, they can be efficiently transformed into designated verifier proofs,  providing therefore the required online non-transferability for the resulting signatures. 

\subsection{Insider invisibility/indistinguishability}
Insider invisibility/indistinguishability does not seem plausible without IND-CCA encryption. In fact, the reduction should be able to answer any query submitted by the adversary; this latter who can create, using his signing key, valid queries (confirmer signatures or signcryptions) without the help of the reduction. A suitable candidate for IND-CCA encryption that fits nicely within our framework is encryption obtained from the Canetti-Halevi-Katz like transformation \cite{CanettiHaleviKatz2004,Kiltz2006}. This encryption is obtained by combining a weakly secure tag-based encryption (indistinguishable under selective-tag weak chosen-ciphertext attacks or IND-st-wCCA) and a strongly unforgeable one-time signature; the combination enjoys the required IND-CCA security while proffering good verifiablity properties (the weakly secure encryption ought to be homomorphic, e.g \cite{CashKiltzShoup2008}). 

Therefore, StE, EtS, CtEaS, or EtStE will be used as follows. First generate a pair of public/private keys for the one-time signature. Then proceed as dictated by the  paradigms with the exception of producing the required encryption using an IND-st-wCCA tag-based encryption with the public key of the one-time signature as a tag, and finally signing all the produced quantities (that form the confirmer signature/signcryption) using the private key (and the signing algorithm) of the one time signature. The new confirmer signature/\-signcryption is increased by the verification key and the one-time signature, which amounts to four group elements when using Boneh-Boyen's one-time signature, however it enjoys a full insider invisibility while remaining efficiently verifiable. Note also that the produced signature/signcryption remains secure even in the presence of an  insider invisibility/indistinguishability adversary who is only restricted from querying the challenge for verification/decryption. This is definitely a stronger attack model than that adopted in general for the EtS/CtEaS paradigm, e.g. in \cite{GentryMolnarRamzan2005,AnDodisRabin2002}.

\section{Perspectives}
\label{sec:summary}
 In this paper, we studied the classical paradigms used to build many opaque signatures, that are StE, EtS, and CtEtS. We showed  using an increasingly popular tool, namely meta-reductions, that StE and CtEaS require expensive encryption in order to provide a reasonable security level for the resulting construction. This is due to an intrinsic weakness in those paradigms which consists in the possibility of obtaining the opaque signature without the help of the signer. Next, we proposed some adjustments to these paradigms which circumvent this weakness and allow to rest on cheap encryption without compromising the security level of the result. We further gave many practical instantiations of these paradigms which efficiently implement the verifiability feature in the constructions, i.e. the possibility to prove the validity of the opaque signature.

Our analysis accepts many possible extensions. We note in the following the most immediate ones:

\paragraph{Verifiably encrypted signatures} A verifiably encrypted signature (VES) allows a signer to encrypt a
  signature under the public key of a trusted party (the adjudicator), while
  maintaining public signature verifiability without interactive proofs. Actually, verifiability is usually achieved by considering special classes of signature/encryption schemes. For instance, the class of encryptions includes schemes where any pair of message and corresponding ciphertext, under a given key, satisfies a  relation confined by some efficiently computed map, say $f$. It is obvious that such encryption schemes  cannot be NM-CPA nor IND-CPA secure  due to the map $f$ which allows to efficiently check whether a ciphertext encrypts a given message under some given key. To rule out OW-CCA encryption, one could similarly consider a meta-reduction $\cal M$ which forbids existence of key-preserving reductions from OW-CCA security of the encryption to the opacity of the VES: $\cal M$ can ask the reduction for a VES on an arbitrary message, say $m$, then queries the CCA oracle for the decryption of this VES (it is possible to make this query as it is different from the challenge ciphertext). The result of this query, along with $m$, forms a valid answer of the opacity adversary. Again, the interpretation of these impossibility results is that the opacity adversary can create VES without the help of the signer by simply re-encrypting the extracted signatures. It would be interesting to envisage the previously presented solutions in order to make the opacity in VES rest on the CPA security of the underlying encryption.

\paragraph{Group signatures}
Group signatures, introduced by Chaum and Van Heyst \cite{ChaumvanHeyst1991}, allow members of a group to anonymously sign messages on behalf of the whole group. However, to prevent abuses, the group is controlled by a group manager that has the ability to \textit{open} the group signature, \textit{i.e.} to identify the signer of a message. A generic construction of group signatures from the StE paradigm \cite{BonehShacham2004} consists in encrypting the identity of the user in the public key
of the group manager, then providing a signature of knowledge (NIZK of the plaintext underlying the encryption and on the SDH solution) of the message to be signed. The used encryption scheme has to be CCA secure in order to provide full anonymity of the group signature. There exists also a weaker notion of anonymity (than the full anonymity), called selfless anonymity, where the adversary does not have the signing key of the target users (similar to outsider security in CDCS/signcryption). This suggests to carry out the same analysis (provided earlier) in order to study the exact security needed for the encryption scheme in order to derive fully/selfless anonymous group signatures.  Note that we gave in \cite{ElAimaniSanders2012} a generic construction of fully anonymous group signatures using IND-st-wCCA tag based encryption combined with strongly unforgeable one-time signatures. Our construction, which uses many ideas presented earlier in this text, generalizes a well known group signature \cite{Groth2007}, and served as a basis for a recent proposal by \cite{Ghadafi2014} of the same primitive which reduces the trust on the group manager by distributing the opening procedure.
 


\bibliographystyle{amsplain}

\providecommand{\bysame}{\leavevmode\hbox to3em{\hrulefill}\thinspace}
\providecommand{\MR}{\relax\ifhmode\unskip\space\fi MR }
\providecommand{\MRhref}[2]{%
  \href{http://www.ams.org/mathscinet-getitem?mr=#1}{#2}
}
\providecommand{\href}[2]{#2}

\end{document}